\setlist[enumerate]{itemsep=-2pt, topsep=5pt}
\titleformat{\section}[block]{\normalsize\bfseries\filcenter}{\thesection}{1em}{}
\titleformat{\subsection}[block]{\normalsize\scshape}{\thesubsection}{1em}{}
\DeclareMathOperator{\sgon}{sgon}
\DeclareMathOperator{\dgon}{dgon}
\DeclareMathOperator{\sdgon}{sdgon}
\DeclareMathOperator{\tw}{tw}
\DeclareMathOperator{\degree}{deg}
\newcommand{\N}{\mathbb{N}}
\newcommand{\Z}{\mathbb{Z}}
\theoremstyle{plain}
\newtheorem{theorem}{Theorem}[section]
\newtheorem{corollary}[theorem]{Corollary}
\newtheorem{lemma}[theorem]{Lemma}
\newtheorem{theoremintro}{Theorem}
\newtheorem{corintro}{Corollary}
\theoremstyle{definition}
\newtheorem{definition}[theorem]{Definition}
\newtheorem{example}[theorem]{Example}
\newenvironment{customrule}[1]
  {\innercustomrule}
  {\endinnercustomrule}
\theoremstyle{remark}
\newtheorem{remark}[theorem]{Remark}
\definecolor{darkgreen}{RGB}{0,190,0}
\newcommand{\lus}[3]{
	\draw[#1] (#2,#3) to [relative, out=60,in=90] ({#2-.75},#3); 
	\draw[#1] (#2,#3) to [relative, out=-60,in=-90] ({#2-.75},#3);}
\newcommand{\cirkelsEnPijl}{
	\draw[gray] (1.5,0) circle (.75);
	\draw[gray] (4.5,0) circle (.75);
	\draw[->] (2.5,0) -- (3.25,0);
}
\newcommand{\alleenPijl}{
	\draw[->] (2.5,0) -- (3.25,0);
	\draw[white] (4.5,0) circle (.75);
}
\newcommand{\alleenPijlKlein}{
	\draw[->] (2.5,0) -- (3.25,0);
	\draw[white] (4.5,0) circle [x radius=.75, y radius=.3];
}
\tikzstyle{vertex}=[circle, draw, fill=black, inner sep=0pt, minimum width=3pt]
\tikzstyle{vertexx}=[circle, draw, fill=black, inner sep=0pt, minimum width=4pt]
\tikzstyle{added}=[diamond, fill=gray, inner sep=1pt, minimum width=4pt]
\tikzstyle{addedd}=[diamond, fill=gray, inner sep=1.25pt, minimum width=4pt]
\tikzstyle{groen} = [darkgreen, line width=1pt, dashed]
\tikzstyle{edge} = [line width = 1pt]
\tikzstyle{path} = [line width = 2pt]
\tikzstyle{rood} = [red, line width=1pt, dashed]
\begin{document}
\title{\textbf{Recognizing Hyperelliptic Graphs in Polynomial Time}\footnote{An extended abstract is published in \textit{Graph-Theoretic Concepts in Computer Science} \cite{BBCW}}}
\author{Jelco M.~Bodewes\thanks{Department Informatica, Universiteit Utrecht, Postbus 80.089, 3508 TB Utrecht,  Nederland, \texttt{jelcobodewes@gmail.com, h.l.bodlaender@uu.nl}} \\
Gunther Cornelissen\thanks{Mathematisch Instituut, Universiteit Utrecht, Postbus 80.010, 3508 TA Utrecht, Nederland, \texttt{g.cornelissen@uu.nl, m.vanderwegen@uu.nl}}
\and Hans L.~Bodlaender\footnotemark[2] \\Marieke van der Wegen\footnotemark[3]}
\date{}

\maketitle
\begin{abstract} 
Based on analogies between algebraic curves and graphs, Baker and Norine introduced \emph{divisorial gonality}, a graph parameter for multigraphs 
related to treewidth, multigraph algorithms and number theory. Various equivalent definitions of the gonality of an algebraic curve translate to \emph{different} notions of gonality for graphs, called \emph{stable gonality} and \emph{stable divisorial gonality}. 

We consider so-called {\em hyperelliptic graphs} (multigraphs of gonality $2$, in any meaning of graph gonality) and provide a safe and complete set of reduction rules for such multigraphs. This results in an algorithm to 
recognize hyperelliptic graphs in time $O(m+n \log n)$, where $n$ is the number of vertices and $m$ the number of edges of the multigraph. 
A corollary 
is that we can decide with the
same runtime whether a two-edge-connected graph $G$ admits an involution $\sigma$ such that the quotient $G/\langle \sigma \rangle$ is a tree.
\end{abstract}

\section{Introduction} \label{sec:intro}

\paragraph{Motivation}In this paper, we consider a graph theoretic problem that finds its origin in algebraic geometry, and can be formulated in terms of a specific type of graph search, namely {\em monotone chip firing}. The case with two chips is of special interest in the application, and we show that we can decide this case in $O(n \log n+m)$ time on a multigraph with $n$ vertices and $m$ edges.

In algebraic geometry, a special role is played by so-called \emph{hyperelliptic} curves; these are smooth projective algebraic curves possessing an involution, i.e.\ an automorphism of order two, for which the quotient is the projective line. Such curves can be described by an affine equation $y^2=f(x)$, for some one-variable polynomial $f(x)$ without repeated roots. They are widely studied and used, for example in the study of moduli spaces of abelian surfaces, invariants of binary quadratic forms, diophantine problems (finding integer or rational solutions to such equations), and in so-called hyperelliptic curve cryptography (see, e.g., \cite{Cohen} and \cite{Poonen}). 

Recognizing hyperelliptic curves is an important, decidable problem in algorithmic algebraic geometry; an algorithm has been implemented when the curve is given by some set of polynomial equations, e.g., in the computer algebra package \textsc{Magma} \cite{Magma}. No exact runtime analysis is available, but, the method being dependent on Gr\"obner basis computations, worst-case performance is expected to be more than exponential in the input size. 

In recent work of Baker and Norine \cite{Baker09}, the notion of a ``hyperelliptic graph'' was introduced, based on an analogy between algebraic curves and multigraphs. We show that the recognition problem for hyperelliptic graphs can be solved in quasilinear time. This can be applied to the recognition of certain hyperelliptic curves, since if an algebraic curve has a non-hyperelliptic stable reduction graph, the curve itself cannot be hyperelliptic (see \cite[3.5]{Baker}).

\paragraph{Divisorial gonality} Hyperelliptic graphs are graphs with \emph{divisorial gonality} at most two. The notion of divisorial gonality 
has several equivalent definitions; intuitively, we use a chip firing game: we have a graph and some initial configuration that assigns a non-negative number of ``chips'' to each vertex. We can fire a subset of vertices by moving a chip along each outgoing edge of the subset, \emph{if} every vertex has sufficiently many chips. We say that an initial configuration reaches a vertex if a sequence of firings results in that vertex having at least one chip. The divisorial gonality of a graph is the minimum number of chips needed for an initial configuration to reach each vertex
of the graph. It actually suffices to consider a `monotone' variant of the chip firing procedure, in which the sequence of subsets that
are fired to reach a vertex is increasing; 
this is similar to several other graph search games, where the optimal number of searchers does not increase when we
require the search to be monotone, see e.g., \cite{BienstockS91,LaPaugh93}.

\paragraph{Different notions of graph gonality} An equivalent definition of the gonality of an algebraic curve is the minimal degree of a morphism onto the projective line. A good analogue of this notion for graphs is the so-called \emph{stable gonality} introduced in \cite{C} as the minimal degree of a harmonic morphism from a refinement of the graph onto a tree (cf.\ Section \ref{subsec:sgon} infra). In contrast to the case of algebraic curves, the stable gonality of a graph is not always equal to its divisorial gonality. In \cite{C}, refinements of the graph arises from the theory of reductions of algebraic curves and tropical geometry \cite{Amini}, and it makes equal sense to consider the notion of \emph{stable divisorial gonality}, defined as the minimal divisorial gonality of a refinement of the graph.  

\paragraph{Known results} 
The termination of similar chip-firing games  
was discussed by Bj\"orner, Lov\'asz and Shor \cite{BLS}. A polynomial bound on the minimal number of required firings to terminate the Bj\"orner, Lov\'asz and Shor-game was given by Tardos \cite{tardos}. In the guise of ``abelian sandpile model'', chip-firing games play an important role in the study of self-organized criticality in statistical physics \cite{bak1988self,dhar1990self}. The chip firing game introduced by Baker and Norine is relevant for classical combinatorial problems about graphs, relating to spanning trees \cite{chan2015sandpiles}, the uniqueness of graph involutions \cite{Baker09}, and potential theory on electrical network graphs \cite{baker2013chip}. 

In \cite{Dutta}, a lower bound for the divisorial gonality of a graph is given in terms of its expansion. 
The gonality of a graph $G$ (in any sense) is larger than or equal to its treewidth $\tw(G)$ \cite{Gijs}. Since treewidth is insensitive to the presence of multiple edges while gonality is not, the parameters are different; actually, they are not ``tied'' in the sense of Norin \cite{Norin}; for example, there exists $G$ with $\tw(G)=2$ but $\dgon(G)$ arbitrarily high \cite{Hendrey}. The relation between the various notions of gonality is expounded in \cite[Section 1 and 5]{Gijs2}. 

We study the all three kinds of gonality of graphs from the point of view of computational complexity. The analogous problem of computing the gonality of an algebraic curve is decidable \cite{Schicho}. From the definition of divisorial gonality, it follows that divisorial gonality is computable. For stable gonality and stable divisorial gonality, this does not follow from the defintion, but both notions are computable as well \cite{GrootKoerkamp}, \cite{BWZ}. 
We know that treewidth is FPT, and that computing all three kinds of gonality is NP-hard and APX-hard \cite{Gijs2}. Moreover, divisorial gonality is in XP \cite[Section 5]{Bruyn2012reduced}. 

\paragraph{Our results} 
Our main result is the following.
\begin{theoremintro}[=Theorem \ref{thm:time}] \label{thm:hoofdstellingIntro}
There is an algorithm that decides whether a graph $G$ is hyperelliptic 
in $O(n\log n+m)$ time.
\end{theoremintro}
To obtain our algorithm, we 
provide a safe and complete set of reduction rules. We do this for all three notions of gonality. 
Similar to recognition algorithms for graphs of treewidth $2$ or $3$ (see \cite{Arnborg}), in our algorithm the rules are applied to the graph until no further 
rule application is possible; we decide positively if and only if this results in the empty graph.
One novelty is that some of the rules introduce constraints on pairs of vertices, which we model
by colored edges. To deal with the fact that some of the rules are not local, we use a data structure that allows us to find an efficient way of applying these rules, leading to the stated running time.

\paragraph{Application to detecting special involutions on graphs} There is no known polynomial time algorithm for the \textsc{graph automorphism problem}, the question whether a graph admits a non-trivial automorphism, a problem that is known to be in NP; recently, a quasi-polynomial time algorithm was given by Babai \cite{Babai} (compare \cite{Helfgott}). 

The question of the computational complexity of the problem is know to be very sensitive to alterations of the question. For example, deciding whether a graph has a \emph{fixed point free} automorphism of order two is NP-complete (see Lubiw \cite{Lubiw}). Our main result implies the following result as corollary. 
 
\begin{corintro}[=Corollary \ref{cor:automorfisme}] \label{cor:automorfismeIntro} There is an algorithm that, given a two-edge-connected graph $G$, decides in $O(n \log n + m)$ time whether $G$ admits an involution $\sigma$ such that the quotient $G/\langle \sigma \rangle$ is a tree.
\end{corintro}

\paragraph{Relation to number theory} We briefly elucidate the relevance of gonality for number theoretical problems. This paragraph can safely be skipped, but provides some motivation for the interest in computing gonality of graphs.  

If an algebraic curve $X$ is defined over the rational numbers and has gonality $\gamma$, then we have a so-called ``uniform boundedness'' result for $X$: the total number of points on $X$ with coordinates in any number field of degree $(\gamma-1)/2$, is finite. Now the gonality of $X$ is bounded below by the gonality of the dual graph of a reduction of the curve modulo a prime \cite[\S 11]{C}. We illustrate this with an example. 
\begin{figure}
\centering
\begin{subfigure}[b]{0.3\textwidth}
	\begin{tikzpicture}
	\draw (0,.5) -- (3,.5) (0,1.5) -- (3,1.5) (0,2) -- (3,2) (0,2.5) -- (3,2.5);
    \node (v) at (2,1.1) {$\ddots$};
    \draw (.5,0) -- (.5,3) (.5,0) -- (.5,3) (1,0) -- (1,3) (1.5,0) -- (1.5,3) (2.5,0) -- (2.5,3);
	\end{tikzpicture}
\end{subfigure} 
\hspace{0.1\textwidth}
\begin{subfigure}[b]{0.3\textwidth}
	\begin{tikzpicture}
	\node[vertex] (a1) at (0,2) {};
    \node[vertex] (a2) at (0,1.5) {};
    \node[vertex] (a3) at (0,1) {};
    \node (a4) at (0,.6) {$\vdots$}; 
    \node[vertex] (a5) at (0,0) {};
    \node[vertex] (b1) at (1.5,2) {};
    \node[vertex] (b2) at (1.5,1.5) {};
    \node[vertex] (b3) at (1.5,1) {};
    \node (b4) at (1.5,.6) {$\vdots$};
    \node[vertex] (b5) at (1.5,0) {};
    \draw (a1) -- (b1) (a1) -- (b2) (a1) -- (b3) (a1) -- (b5); 
    \draw (a2) -- (b1) (a2) -- (b2) (a2) -- (b3) (a2) -- (b5);
    \draw (a3) -- (b1) -- (a3) -- (b2) (a3) -- (b3) (a3) -- (b5); 
    \draw (a5) -- (b1) (a5) -- (b2) (a5) -- (b3) (a5) -- (b5);
    \draw[->] (2.5,1) -- (3,1);
    \node[vertex] (c1) at (3.5,2) {};
    \node[vertex] (c2) at (3.5,1.5) {};
    \node[vertex] (c3) at (3.5,1) {};
    \node (c4) at (3.5,.6) {$\vdots$}; 
    \node[vertex] (c5) at (3.5,0) {};
    \node[vertex] (d) at (4.5, 1) {};
    \draw (c1) -- (d) -- (c2) (c3) -- (d) -- (c5);
	\end{tikzpicture}
\end{subfigure}
\caption{The reduction of $X \colon (x^p-x)(y^p-y)=p$ modulo $p$ and the intersection dual graph $K_{p,p}$. }
\label{fig:voorbeeld}
\end{figure}
For a prime number $p$, consider the algebraic curve $X \colon (x^p-x)(y^p-y)=p$ in the $(x,y)$-plane over  the field of rational numbers $\mathbf Q$. Reducing the curve modulo $p$, the equation becomes  a union of lines $x(x-1)\dots(x-(p-1))\cdot y(y-1)\dots(y-(p-1))=0$  (see also Figure \ref{fig:voorbeeld}). The \emph{intersection dual graph} is given by  a vertex for each component of this reduction,  where two vertices are connected by an edge if and only if the corresponding components intersect; in the example, it is the complete bipartite graph $K_{p,p}$.  The stable gonality of $K_{p,p}$ is $p$ (since $\tw(K_{p,p}) = p$ and there is an obvious map of degree $p$ from $K_{p,p}$ to a tree, see Section \ref{subsec:sgon}). From \cite[4.5 \& 11.1]{C} one concludes that the set  $\bigcup X(K)$ is finite,  where $K$ runs over all the (infinitely many for $p \geq 5$) number fields of degree bounded above by $(p-1)/2$.

\section{Preliminaries} \label{sec:prelim}

Whenever we write ``graph'' we refer to a multigraph $G=(V,E)$, where $V$ is the set of vertices and $E$ is a multiset of edges. 

Let $G$ be a graph and $u$ and $v$ two vertices. Let $C$ be the connected component of $G \backslash \{v\}$ that contains $u$. By $G_v(u)$ we denote the induced subgraph of $G$ on $C \cup \{v\}$. 

\subsection{Divisorial Gonality}
\label{subsec:definitions}

There is a number of different definitions of divisorial gonality. The one we use is shown to be equivalent 
to the chip firing procedure without the `monotonicity' property by \cite{Bruyn2012reduced}. The definition given
here allows us to prove correctness of the reduction rules in our algorithm, and avoids more heavy algebraic
terminology.

A \emph{divisor} $D$ in a graph $G=(V,E)$ is a mapping $D\colon V \rightarrow \Z$ (a divisor represents a distribution of chips, see Section \ref{sec:intro}). We call a divisor $D$ \emph{effective} (notation $D\geq 0$) if $D(v) \geq 0$ for all $v\in V$. The degree, $\degree(D)$, of a divisor $D$ equals $\sum_{v\in V} D(v)$. 

Given an effective divisor $D$ and a set of vertices $W\subseteq V$, we call $W$ \emph{valid for $D$}, if for each $v\in W$, $D(v) \geq |E(v, V\setminus W)|$ (i.e., $v$ has at least as many chips as it has neighbors in $V\setminus W$). If $W$ is valid for $D$, we can \emph{fire} $W$ starting from $D$, this yields another divisor: for $v\in W$, $D(v)$ is decreased by the number of edges from $v$ to $V\setminus W$,
and for $x\in V\setminus W$, $D(x)$ is increased by the number of edges from $W$ to $x$. Intuitively, firing $W$ means moving a chip along all edges from $W$ to $V\setminus W$. Note that the divisor obtained by firing is effective as well. 

We call two effective divisors $D$ and $D'$ \emph{equivalent}, in notation $D\sim D'$, if there is a sequence of subsets $A_1 \subseteq A_2 \subseteq \ldots \subseteq A_{k-1} \subset A_k = V$, such that for all $i$ the set $A_i$ can be fired when $A_1, \ldots, A_{i-1}$ are fired starting from $D$, and the divisor obtained by firing $A_1, \ldots, A_k$ is $D'$. This defines an equivalence relation on the set of effective divisors \cite[Chapter 3]{Bruyn2012reduced}. For two equivalent effective divisors $D$ and $D'$, we call the difference of functions $D'-D$ the \emph{transformation} from $D$ to $D'$, and the sequence $A_1 \subseteq A_2 \subseteq \ldots \subseteq A_{k-1} \subset A_k = V$ the \emph{level set decomposition} of this transformation. This level set decomposition is unique \cite[Remark 3.8]{Bruyn2012reduced}.

We say that an effective divisor $D$ \emph{reaches} a vertex $v$, if there exists a $D'$ such that  $D\sim D'$ and $D'(v) \geq 1$. 
The \emph{divisorial gonality}, $\dgon(G)$, of a graph $G$ is the minimum degree of an effective divisor $D$ that reaches each vertex of $G$.

\begin{example} 
Let $T$ be a tree. Then $T$ has divisorial gonality 1. 
Let $v$ be a vertex of $T$ and consider the divisor $D$ with $D(v) = 1$ and $D(x) = 0$ for all $x\neq v$. This divisor has degree 1 and reaches each vertex of $T$: Let $w$ be a vertex of $T$. Let $vu$ be the first edge on the unique path from $v$ to $w$. Let $A_{v}$ be the component that contains $v$ of the cut induced by $vu$. Firing $A_{v}$ yields the divisor $D(u) = 1$ and $D(x) = 0$ for all $x\neq u$, thus we moved a chip from $v$ to $u$. Repeating this process yields a divisor with a chip on $w$. 
\end{example}

\begin{example}
Let $G$ be a cycle, then $G$ has divisorial gonality 2. First note that every set of vertices of $G$ induces a cut of size at least 2. Hence for all degree 1 divisors, there are no valid sets. Hence a degree 1 divisor does not reach every vertex. To see that there is a divisor with 2 chips that reaches every vertex, number the vertices $v_1, v_2, \ldots, v_n$ and consider the divisor $D$ with a chip on $v_1$ and a chip on $v_n$. To reach a vertex $v_k$ with $k \leq \frac{n}{2}$, fire the set $\{v_i \mid 1\leq i\leq j\} \cup \{v_i \mid n-j+1 \leq i \leq n\}$ for $j = 1, 2, \ldots, i-1$. Analogous for a vertex $v_k$ with $\frac{n}{2} \leq k \leq n$. 
\end{example}

\begin{example} \label{ex:treewidth2}
Consider the graph $G$ in Figure \ref{fig:treewidth2}. This graph has treewidth 1 and divisorial gonality 3. A divisor that reaches all vertices either has a chip on $u$ and 2 more chips to reach both $v$ and $w$, or has at least 3 chips to move along the three edges from $v$ to $u$. See also \cite[Table 3]{C}. 
\end{example}
\begin{figure}
\centering
	\begin{tikzpicture}
    \node[vertexx, label=$u$] (a) at (0,0) {};
    \node[vertexx, label=$v$] (b) at (2,0) {};
    \node[vertexx, label=$w$] (c) at (4,0) {};
    \draw[edge] (a) -- (b);
    \draw[edge] (a) to [relative, out=40, in=140] (b);
    \draw[edge] (b) to [relative, out=40, in=140] (a);
    \draw[edge] (b) to [relative, out=40, in=140] (c);
    \draw[edge] (c) to [relative, out=40, in=140] (b);
	\end{tikzpicture}
\caption{Graph with divisorial gonality 3 and treewidth 1 (see Example \ref{ex:treewidth2}). }
\label{fig:treewidth2}
\end{figure}

\begin{example} \label{ex:treewidth}
Consider the graph $G$ in Figure \ref{fig:treewidth}. This graph has treewidth 2 and divisorial gonality 3. A divisor that reaches all vertices needs two chips to traverse the left cycle and 2 chips to traverse the right cycle. But we cannot move two chips from $u$ to $v$, so these two chips on the left side cannot be the same as the two on the right side. Hence we need at least three chips.  
\end{example}
\begin{figure}
\centering
	\begin{tikzpicture}
    \node[vertexx, label=below:$u$] (a) at (.5,0) {};
    \node[vertexx] (b) at (-.25,.43) {};
    \node[vertexx] (c) at (-.25,-.43) {};
    \node[vertexx, label=below:$v$] (d) at (2.5,0) {};
    \node[vertexx] (e) at (3.25,.43) {};
    \node[vertexx] (f) at (3.25,-.43) {};
    \node[vertexx] (g) at (1.5,.5) {};
    \draw[edge] (a) -- (b) -- (c) -- (a);
    \draw[edge] (d) -- (e) -- (f) -- (d);
    \draw[edge] (d) -- (a) -- (g) -- (d);
	\end{tikzpicture}
\caption{Graph with divisorial gonality 3 and treewidth 2 (see Example \ref{ex:treewidth}). }
\label{fig:treewidth}
\end{figure}

For a disconnected graph, the divisorial gonality is equal to the sum of the divisorial gonalities of the connected components.

\subsection{Stable Gonality} \label{subsec:sgon}
We define stable gonality as in \cite[Definition 3.6]{C}. 

\begin{definition}
Let $G$ and $H$ be graphs. A \emph{finite morphism} is a map $\phi\colon G \to H$ such that \begin{enumerate}
\item[(i)] $\phi(V(G)) \subseteq V(H)$,
\item[(ii)] $\phi(uv) = \phi(u)\phi(v)$ for all $uv \in E(G)$, 
\end{enumerate}
together with, for every $e\in E(G)$, an ``index'' $r_\phi(e)\in \N$.
\end{definition}

\begin{definition}
We call a finite morphism $\phi\colon G\to H$ \emph{harmonic} if for every $v \in V(G)$ it holds that for all $e, e'\in E_{\phi(v)}(H)$
\begin{align*}
\sum_{d\in E_v(G), \phi(d) = e} r_\phi(d) = \sum_{d'\in E_v(G), \phi(d') = e'} r_\phi(d').
\end{align*}
We write $m_\phi(v)$ for this sum. 
\end{definition}

\begin{definition}
The \emph{degree} of a finite harmonic morphism $\phi\colon G\to H$ is 
\begin{align*}
\sum_{d\in E(G), \phi(d) = e} r_\phi(e) = \sum_{u\in V(G), \phi(u) = v} m_\phi(u),
\end{align*}
for $e\in E(H)$, $v\in V(H)$. This is independent of the choice of $e$ or $v$ (\cite{Baker06}, Lemma 2.4).
\end{definition}

\begin{example}\label{ex:tree1}
For a tree $T$ we can use the identity map $\phi\colon T \to T$, and assign index 1 to all edges, to obtain a finite harmonic morphism. This morphism has degree $1$. 
\end{example}

\begin{example} \label{ex:sgon-index1}
Consider the graph $G$ in Figure \ref{fig:sgon-index}. Assign index 2 to the edge $(v,w)$, and 1 to the other edges. Map this graph to a path on 4 vertices. This yields a finite harmonic morphism of degree $2$. 
\end{example}
\begin{figure}
\centering
	\begin{tikzpicture}
    \node[vertexx, label=$u$] (a) at (0,0) {};
    \node[vertexx, label=$v$] (b) at (2,0) {};
    \node[vertexx, label=$w$] (c) at (4,0) {};
    \node[vertexx, label=$x$] (d) at (6,0) {};
    \draw[edge] (a) to [relative, out=40, in=140] (b);
    \draw[edge] (b) to [relative, out=40, in=140] (a);
    \draw[edge] (d) to [relative, out=40, in=140] (c);
    \draw[edge] (c) to [relative, out=40, in=140] (d);
    \path[edge] (b) edge node[above] {$2$} (c);
    
    \draw[->] (3,-.4) -- (3,-1.1);
    
    \node[vertexx] (a) at (0,-1.5) {};
    \node[vertexx] (b) at (2,-1.5) {};
    \node[vertexx] (c) at (4,-1.5) {};
    \node[vertexx] (d) at (6,-1.5) {};
    \draw[edge] (a) -- (b) -- (c) -- (d);
	\end{tikzpicture}
\caption{Graph with stable gonality 2 (see Examples \ref{ex:sgon-index1} and \ref{ex:sgon-index2}). }
\label{fig:sgon-index}
\end{figure}

We can now proof a lemma about finite harmonic morphisms, that we will need in Section \ref{section:stablegonalityrules}. 

\begin{lemma} \label{lem:morphism-degree}
Let $G$ be a graph, and $\phi\colon G \to T$ a finite harmonic morphism of degree 2. If $\phi(u) = \phi(v)$, then $\deg(u) = \deg(v)$. 
\end{lemma} 

\begin{proof}
Notice that $m_\phi(u) = m_\phi(v) = 1$. Let $e$ be an edge incident to $\phi(u)$. By harmonicity, there is exactly one edge $e'$ such that $e'$ is incident to $u$ and $\phi(e') = e$. On the other hand every edge that is incident to $u$ is mapped to an edge that is incident to $\phi(u)$. So we conclude that $\deg_G(u) = \deg_T(\phi(u))$. Analogously we find that $\deg_G(v) = \deg_T(\phi(v))$. Since $\phi(u) = \phi(v)$, it follows that $\deg(u) = \deg(v)$.
\end{proof}

Before we can define the stable gonality of a graph, we need one last definition: the notion of refinements. 

\begin{definition}
A graph $G'$ is a \emph{refinement} of $G$ if $G'$ can be obtained by applying the following operations finitely many times to $G$.
\begin{enumerate}
\item[(i)] Add a leaf, i.e.\ a vertex of degree 1;
\item[(ii)] subdivide an edge by adding a vertex.
\end{enumerate}
We call a vertex of $G'\backslash G$ from which there are two disjoint paths to vertices of $G$, \emph{internal added} vertices, we call the other vertices of $G'\backslash G$ \emph{external added} vertices. 
\end{definition}

\begin{definition}
The \emph{stable gonality} of a graph $G$ is 
\begin{align*}
\sgon(G) = \min\{\deg(\phi) \mid \phi \colon  G' \to \ & T \text{ a finite harmonic morphism},\\
& G' \text{ a refinement of $G$, $T$ a tree}\}.
\end{align*}
\end{definition}

\begin{example} \label{ex:sgon-index2}
As we have seen in Example \ref{ex:tree1}, $\sgon(T) = 1$ for a tree $T$. On the other hand, if $G$ is not a tree, then any refinement of $G$ contains a cycle. Such a cycle cannot be mapped to a tree injectively. Thus $\sgon(G) > 1$ if $G$ is not a tree.

Since the graph $G$ in Figure \ref{fig:sgon-index} is not a tree and we have seen a morphism of degree $2$ in Example \ref{ex:sgon-index1}, we conclude that $\sgon(G) = 2$. 
\end{example}

\begin{example} \label{ex:sgon-dgon}
Consider the graph $G$ of Example \ref{ex:treewidth}, see Figure \ref{fig:treewidth}. This graph has stable gonality 2. Add a vertex to the edge $(u,v)$, a vertex to left triangle and a vertex to the right triangle. This refinement can be mapped to a path on 7 vertices, where $u$ is mapped to the third vertex and $v$ to the fifth vertex of the path.  If we assign index $1$ to all edges, this is a finite harmonic morphism of degree 2. 
\end{example}
\begin{figure}
\centering
	\begin{tikzpicture}
    \node[vertexx, label=below:$u$] (a) at (.5,0) {};
    \node[vertexx] (b) at (-.25,.43) {};
    \node[vertexx] (c) at (-.25,-.43) {};
    \node[vertexx, label=below:$v$] (d) at (2.5,0) {};
    \node[vertexx] (e) at (3.25,.43) {};
    \node[vertexx] (f) at (3.25,-.43) {};
    \node[vertexx] (g) at (1.5,.5) {};
    \node[addedd] (h) at (1.5,0) {};
    \node[addedd] (i) at (-.25,0) {};
    \node[addedd] (j) at (3.25,0) {};
    \draw[edge] (a) -- (b) -- (i) -- (c) -- (a);
    \draw[edge] (d) -- (e) -- (j) -- (f) -- (d);
    \draw[edge] (d) -- (h) -- (a) -- (g) -- (d);
    
    \draw[->] (1.5,-.8) -- (1.5,-1.5);
    
    \node[vertexx] (a) at (-.25,-2) {};
    \node[vertexx] (b) at (.5,-2) {};
    \node[vertexx] (c) at (1.5,-2) {};
    \node[vertexx] (d) at (2.5,-2) {};
    \node[vertexx] (e) at (3.25,-2) {};
    \node[addedd] (i) at (-.75,-2) {};
    \node[addedd] (j) at (3.75,-2) {};
    \draw[edge] (i) -- (a) -- (b) -- (c) -- (d) -- (e) -- (j);
	\end{tikzpicture}
\caption{Graph with stable gonality 3 and treewidth 2 (see Example \ref{ex:sgon-dgon}). }
\label{fig:sgon-dgon}
\end{figure}

For a disconnected graph $G$ its stable gonality is defined to be the sum of the stable gonalities of its components.

\subsection{Stable Divisorial Gonality} 

We can combine the previous two notion of gonality, first refine a graph and then consider the divisorial gonality, to obtain a third notion of gonality: stable divisorial gonality. 
\begin{definition}
The \emph{stable divisorial gonality} of $G$ is \begin{equation*}\sdgon(G) = \min\{\dgon(G') \mid G' \text{ a refinement of $G$}\}.\end{equation*}
\end{definition}

\begin{example} \label{ex:sdgon-dgon}
Consider the graph $G$ of Example \ref{ex:treewidth}, see Figure \ref{fig:treewidth}. This graph divisorial gonality $3$, but stable divisorial gonality 2. This is because we can refine $G$ to a graph with divisorial gonality 2: add a vertex to the edge $(u,v)$ (see Figure \ref{fig:sdgon-dgon}. A divisor with $2$ chips on vertex $u$ reaches all vertices. 
\end{example}
\begin{figure}
\centering
	\begin{tikzpicture}
    \node[vertexx, label=below:$u$] (a) at (.5,0) {};
    \node[vertexx] (b) at (-.25,.43) {};
    \node[vertexx] (c) at (-.25,-.43) {};
    \node[vertexx, label=below:$v$] (d) at (2.5,0) {};
    \node[vertexx] (e) at (3.25,.43) {};
    \node[vertexx] (f) at (3.25,-.43) {};
    \node[vertexx] (g) at (1.5,.5) {};
    \node[addedd] (h) at (1.5,0) {};
    \draw[edge] (a) -- (b) -- (c) -- (a);
    \draw[edge] (d) -- (e) -- (f) -- (d);
    \draw[edge] (d) -- (h) -- (a) -- (g) -- (d);
	\end{tikzpicture}
\caption{Graph with divisorial gonality 3 and stable divisorial gonality 2 (see Example \ref{ex:sdgon-dgon}). }
\label{fig:sdgon-dgon}
\end{figure}

For a disconnected graph $G$ its stable divisorial gonality is defined to be the sum of the stable divisorial gonalities of its components.

\subsection{Reduction Rules, Safeness and Completeness}

A \emph{reduction rule} is a rule that can be applied to a graph to produce a smaller graph. Our final goal with the set of reduction rules is to show that it can be used to characterize the graphs in a certain class, that of the graphs with divisorial gonality two, that of the graphs with stable gonality two, and that of graphs with stable divisorial gonality two, by reduction to the empty graph. For this we need to make sure that membership of the class is invariant under our reduction rules.

\begin{definition}
Let $\bm{U}$ be a rule and $\bm{S}$ be a set of reduction rules. Let $\mathcal{A}$ be a class of graphs. We call $\bm{U}$ \emph{safe} for $\mathcal{A}$ if for all graphs $G$ and $H$ such that $H$ can be produced by applying rule $\bm{U}$ to $G$ it follows that $H\in\mathcal{A} \Longleftrightarrow G\in\mathcal{A}$. We call $\bm{S}$ \emph{safe} for $\mathcal{A}$ if every rule in $\bm{S}$ is safe for $\mathcal{A}$.
\end{definition}

Apart from our rule sets being safe, we also need to know that, if a graph is in our class, it is always possible to reduce it to the empty graph.

\begin{definition}
Let $\bm{S}$ be a set of reduction rules and $\mathcal{A}$ be a class of graphs. We call $\bm{S}$ \emph{complete} for $\mathcal{A}$ if for any graph $G\in\mathcal{A}$ it holds that $G$ can be reduced to the empty graph by applying some finite sequence of rules from $\bm{S}$.
\end{definition}

For any rule set that is both complete and safe for $\mathcal{A}$ the rule set is suitable for characterizing  $\mathcal{A}$: a graph $G$ can be reduced to the empty graph if and only if $G$ is in $\mathcal{A}$.  Additionally it is not possible to make a wrong choice early on that would prevent the graph from being reduced to the empty graph: if $G\in \mathcal{A}$ and $G$ can be reduced to $H$, then $H$ can be reduced to the empty graph.

These properties ensure that we can use the set of reduction rules to create an algorithm for recognition of the graph class.

\subsection{Constraints}
In the process of applying reduction rules to a graph, we will need to keep track of certain restrictions otherwise lost by removal of vertices and edges. We will maintain these restrictions in the form of a set of pairs of vertices, called constraints, and then extend the notions of gonality to graphs with constraints.

\begin{definition}
Given a graph $G=(V,E)$, a \emph{constraint} on $G$ is an unordered pair of vertices $v,w\in V$, usually denoted as $(v,w)$, where $v$ and $w$ can be the same vertex.
\end{definition}

A graph with contraints consists of a graph $G=(V,E)$ and a set of constraints $\mathscr{C}$. 
Constraints are, like edges, pairs of vertices, so we can consider them as an extra set of edges. The conditions that a constaint places on the divisors and firing sets for divisorial gonality and the morphisms for stable gonality, are described in Sections \ref{divSection} and \ref{section:stablegonalityrules}, respectively.

\section{Reduction Rules for Divisorial Gonality} 
\label{divSection}

We will now show that there exists a set of reduction rules that is safe and complete for the class of graphs with divisorial gonality at most two.
We will assume that our graph is loopless and connected. Loops can simply be removed from the graph since they never impact the divisorial gonality and a disconnected graph has divisorial gonality two or lower exactly when it consists of two trees, which can easily be checked in linear time.

\subsection*{Constraints for Divisorial Gonality}

Checking whether a graph has gonality two or lower is the same as
checking whether there exists a divisor on our graph with degree two that reaches all vertices. Our constraints place restrictions on what divisors we consider, as well as what sets we are allowed to fire. 

\begin{definition}
Given a graph $G$ with set of constraints $\mathscr{C}$, and two equivalent effective divisors $D$ and $D'$. We call $D$ and $D'$ \emph{$\mathscr{C}$-equivalent} (in notation $D \sim_{\mathscr{C}} D'$), if for every set $A_i$ of the level set decomposition of $D' - D$ and every constraint $(u,v) \in \mathscr{C}$, either $u,v\in A_i$ or $u,v\notin A_i$. 
\end{definition}
Note that this defines a finer equivalence relation. Now we can extend the definition of \emph{reach} using $\mathscr{C}$-equivalence: a divisor $D$ \emph{reaches} a vertex $v$, if there exists a $D'$ such that  $D\sim_\mathscr{C} D'$ and $D'(v) \geq 1$.
\begin{definition}
Given a graph $G$ with a set of constraints $\mathscr{C}$. A divisor $D$ \emph{satisfies} $\mathscr{C}$ if for every constraint $(u,v) \in \mathscr{C}$ there is a divisor $D' \sim_{\mathscr{C}} D$ such that $D'(u)\geq 1$ and $D'(v) \geq 1$ if $u \neq v$ and $D'(u)\geq 2$ if $u=v$.  
\end{definition}

\begin{definition}
Given a graph $G=(V,E)$ with constraints $\mathscr{C}$, we call a divisor $D$ \emph{suitable} if it is effective, has degree $2$, reaches all vertices using the $\mathscr{C}$-equivalence relation and satisfies all constraints in $\mathscr{C}$. 
\end{definition}

\begin{definition} We will say that a graph with constraints has divisorial gonality $2$ or lower if it admits a suitable divisor. Note that for a graph with no constraints this is equivalent to the usual definition of divisorial gonality $2$ or lower. We will denote the class of graphs with constraints that have divisorial gonality two or lower as $\mathcal{G}^d_2$.  
\end{definition}

\paragraph{Constraints \& Cycles} It will be useful to determine when constraints are non-conflicting locally:
  
\begin{definition}
Let $C$ be a cycle in a graph $G$ with constraints $\mathscr{C}$. Let $\mathscr{C}_C\subseteq \mathscr{C}$ be the subset of the constraints that contain a vertex in $C$. We call the constraints $\mathscr{C}_C$ \emph{compatible} if the following hold. 
\begin{enumerate}
\item[\textup{(i)}] If $(v,w)\in \mathscr{C}_C$ then both $v\in C$ and $w\in C$. 
\item[\textup{(ii)}] For each $(v,w)\in \mathscr{C}_C$ and $(v', w')\in \mathscr{C}_C$, the divisor given by assigning a chip to $v$ and $w$ must be equivalent to the one given by assigning a chip to $v'$ and $w'$ on the subgraph consisting of $C$.
\end{enumerate}
\end{definition}

\subsection*{The Reduction Rules}
We are given a connected loopless graph $G=(V,E)$ and a yet empty set of constraints $\mathscr{C}$. The following rules are illustrated in Figure \ref{fig:dgonRules}, where a constraint is represented by a red dashed edge. 

We start by covering the two possible end states of our reduction:

\begin{customrule}{$\bm{E_1^d}$}\label{rul:de1}
Given a graph consisting of exactly one vertex, remove that vertex.
\end{customrule}
\begin{customrule}{$\bm{E_2^d}$}\label{rul:de2}
Given a graph consisting of exactly two vertices, $u$ and $v$, connected to each other by a single edge, and ${\mathscr{C}}=\{(u,v)\}$, remove both vertices.
\end{customrule}

Next are the reduction rules to get rid of vertices with degree one. These rules are split by what constraint applies to the vertex:

\begin{customrule}{$\bm{T_1^d}$} \label{rul:dt1}
Let $v$ be a leaf, such that $v$ has no constraints in $\mathscr{C}$. Remove $v$.
\end{customrule}
\begin{customrule}{$\bm{T_2^d}$} \label{rul:dt2}
Let $v$ be a leaf, such that its only constraint in $\mathscr{C}$ is $(v,v)$. Let $u$ be its neighbor. Remove $v$ and add the constraint $(u,u)$ if it does not exist yet.
\end{customrule}
\begin{customrule}{$\bm{T_3^d}$} \label{rul:dt3}
Let $v_1$ be a leaf, such that its only constraint in $\mathscr{C}$ is $(v_1,v_2)$, where $v_2$ is another leaf, whose only constraint is also $(v_1,v_2)$. Let $u_1$ be the neighbor of $v_1$ and $u_2$ be the neighbor of $v_2$ (these can be the same vertex). Then remove $v_1$ and $v_2$ and add the constraint $(u_1, u_2)$ if it does not exist yet. 
\end{customrule}

Finally we have a set of reduction rules that apply to cycles containing at most $2$ vertices with degree greater than two. The rules themselves are split by the number of vertices with degree greater than two.

\begin{customrule}{$\bm{C_1^d}$} \label{rul:dc1}
Let $C$ be a cycle of vertices with degree two. If the set of constraints $\mathscr{C}_C$ on $C$ is compatible, then replace $C$ by a new single vertex. 
\end{customrule}
\begin{customrule}{$\bm{C_2^d}$} \label{rul:dc2}
Let $C$ be a cycle with one vertex $v$ with degree greater than two. If the set of constraints $\mathscr{C}_C$ on $C$ plus the constraint $(v,v)$ is compatible, then remove all vertices except $v$ in $C$ and add the constraint $(v,v)$ if it does not exist yet.
\end{customrule}
\begin{customrule}{$\bm{C_3^d}$} \label{rul:dc3}
Let $C$ be a cycle with two vertices $v$ and $u$ of degree greater than two. If there exists a path from $v$ to $u$ that does not share any edges with $C$ and the set of constraints $\mathscr{C}_C$ on $C$ plus the constraint $(v,u)$ is compatible, then remove all vertices of $C$ except $v$ and $u$, remove all edges in $C$ and add the constraint $(v,u)$ if it does not exist yet.
\end{customrule}

\begin{figure}[t]
		\centering
		\begin{tabular}{|rlr}
        	\hline
			\multicolumn{1}{|l|}{\small{Rule $\bm{E_1^d}$}} & & \multicolumn{1}{l|}{\small{Rule $\bm{E_2^d}$}} \\
			\multicolumn{1}{|r|}{\begin{tikzpicture}
	\alleenPijl
	\node[vertexx] (u) at (1.25,0) {};
    \draw (4.25,0) circle (0.2);
    \draw (4.45,0.2) -- (4.05, -0.2);
\end{tikzpicture} } & & \multicolumn{1}{r|}{\begin{tikzpicture}
	\alleenPijl
	\node[vertexx] (u) at (1.25,0) {};
	\node[vertexx] (v) at (0,0) {};
	\draw[edge] (u)--(v);
    \draw (4.25,0) circle (0.2);
    \draw (4.45,0.2) -- (4.05, -0.2);
	\draw[rood] (u) to [relative, out=310, in=220] (v);
\end{tikzpicture} } \\
        	\hline
            \multicolumn{1}{|l|}{\small{Rule $\bm{T_1^d}$}} & & \multicolumn{1}{l|}{\small{Rule $\bm{T_2^d}$}} \\
            \multicolumn{1}{|r|}{\begin{tikzpicture}
	\cirkelsEnPijl
	\node[vertexx] (u) at (1.25,0) {};
	\node[vertexx] (v) at (0,0) {};
	\draw[edge] (u)--(v);
	\node[vertexx] (u') at (4.25,0) {};
\end{tikzpicture} 
			} & & \multicolumn{1}{r|}{\begin{tikzpicture}
	\cirkelsEnPijl
	\lus{rood}{0}{0}
	\lus{rood}{4.25}{0}
	\node[vertexx] (u) at (1.25,0) {};
	\node[vertexx] (v) at (0,0) {};
	\node[vertexx] (u') at (4.25,0) {};
	\draw[edge] (u)--(v);
\end{tikzpicture} } \\
        	\hline
            \multicolumn{1}{|l}{\small{Rule $\bm{T_3^d}$}} & & \multicolumn{1}{l|}{}  \\
            \multicolumn{1}{|r}{\begin{tikzpicture}
	\cirkelsEnPijl
	\lus{rood}{4.25}{0}
	\node[vertexx] (w) at (1.25,0) {};
	\node[vertexx] (u) at (0,.375) {};
	\node[vertexx] (v) at (0,-.375) {};
	\draw[edge] (u)--(w) -- (v);
	\draw[rood] (u) --(v);
	\node[vertexx] (u') at (4.25,0) {};
\end{tikzpicture} } & & \multicolumn{1}{r|}{ \begin{tikzpicture}
	\cirkelsEnPijl
	\node[vertexx] (w1) at (1.25,.375) {};
	\node[vertexx] (w2) at (1.25,-.375) {};
	\node[vertexx] (u) at (0,.375) {};
	\node[vertexx] (v) at (0,-.375) {};
	\draw[edge] (u)--(w1) (w2) -- (v);
	\draw[rood] (u) --(v);
	\node[vertexx] (u') at (4.25,0.375) {};
	\node[vertexx] (v') at (4.25,-.375) {};
	\draw[rood] (u') -- (v');
\end{tikzpicture} } \\
        	\hline
            \multicolumn{1}{|l|}{\small{Rule $\bm{C_1^d}$}} & & \multicolumn{1}{l|}{\small{Rule $\bm{C_2^d}$}} \\
            \multicolumn{1}{|r|}{\begin{tikzpicture}
	\alleenPijl
    \draw[path] (1,0) circle (0.6);
	\node[vertexx] (u) at (4.25,0) {};
\end{tikzpicture} } & & \multicolumn{1}{r|}{\begin{tikzpicture}
	\cirkelsEnPijl
	\draw[path] (0.75,0) circle (0.5);
	\lus{rood}{4.25}{0}
	\node[vertexx] (u) at (1.25,0) {};
	\node[vertexx] (u') at (4.25,0) {};
\end{tikzpicture} } \\
        	\hline
            \multicolumn{1}{|l|}{\small{Rule $\bm{C_3^d}$}} & &  \\
            \multicolumn{1}{|r|}{\begin{tikzpicture}
	\cirkelsEnPijl
	\node[vertexx] (u) at (1.25,.375) {};
	\node[vertexx] (v) at (1.25,-.375) {};
	\node[vertexx] (u') at (4.25,0.375) {};
	\node[vertexx] (v') at (4.25,-.375) {};
	\draw[path] (1.25,0) circle (0.375);
	
	\draw[dotted, thick, rounded corners=10] (v) -- (2,-.5) -- (1.75,.5)  -- (u);
	\draw[rood] (u') -- (v');
	\draw[dotted, thick, rounded corners=10] (v') -- (5,-.5) -- (4.75,.5)  -- (u');
\end{tikzpicture} } & &        \\
        	\cline{1-1}
		\end{tabular}
		\caption{The reduction rules for divisorial gonality}  
		\label{fig:dgonRules}
	\end{figure}

We denote by $\mathcal{R}^d$ the set consisting of all the above reduction rules: $\bm{E_1^d}$, $\bm{E_2^d}$, $\bm{T_1^d}$, $\bm{T_2^d}$, $\bm{T_3^d}$, $\bm{C_1^d}$, $\bm{C_2^d}$ and $\bm{C_3^d}$. 

We will now state the main theorem stating that this set of reduction rules has the desired properties. After this we will build up the proof.

\begin{theorem} \label{mainTheorem}
The set of rules $\mathcal{R}^d$ is safe and complete for $\mathcal{G}_2^d$.
\end{theorem}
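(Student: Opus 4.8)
The plan is to prove the two halves of the statement separately and combine them. \emph{Safety} means that applying any rule $R\in\mathcal{R}^d$ to $(G,\mathscr{C})$, producing $(G',\mathscr{C}')$, one has $(G,\mathscr{C})\in\mathcal{G}_2^d$ if and only if $(G',\mathscr{C}')\in\mathcal{G}_2^d$; \emph{completeness} means that every nonempty $(G,\mathscr{C})\in\mathcal{G}_2^d$ admits an applicable rule and that there are no infinite reduction sequences. Granting both, a maximal reduction sequence starting from $(G,\mathscr{C})\in\mathcal{G}_2^d$ stays inside $\mathcal{G}_2^d$ (safety) and cannot get stuck before reaching the empty graph (completeness), while conversely anything reducible to the empty graph lies in $\mathcal{G}_2^d$ (safety again). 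Combined with the standard facts that a connected graph has divisorial gonality $1$ exactly when it is a tree, and a disconnected graph has gonality $\le 2$ exactly when it is a disjoint union of two trees, this yields the claimed decision procedure.

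\textbf{Safety.} I would treat the rules one at a time, in each case transporting a suitable divisor across the rule in both directions. For the end-state rules $E_1^d$, $E_2^d$ one just checks that the prescribed graph carries a suitable divisor. For the leaf rules $T_1^d$--$T_3^d$, the key point is that a leaf communicates with the rest of the graph only through its unique incident edge, so passing to an appropriate $q$-reduced representative (via Dhar's burning algorithm) controls exactly how many chips a suitable divisor places on the leaf or leaves; one can then delete them, or re-create them, while preserving degree, rank and the modified constraints. For the cycle rules $C_1^d$--$C_3^d$ the crucial tool is the description of linear equivalence on a cycle: the Picard group of a cycle is cyclic, so a degree-$2$ divisor has a well-defined ``position'' on the cycle, and the compatibility hypothesis says precisely that all constraints on $C$ pin down the same position. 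This lets one collapse $C$ to a single vertex, respectively to its one or two high-degree vertices, and move divisors across the contraction, the firing-set clause of each constraint being exactly what prevents equivalences from leaking through the contracted part.

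\textbf{Completeness.} Termination is easy: every rule strictly decreases $|V(G)|+|E(G)|$, since the end-state and leaf rules delete vertices and edges, and each cycle rule destroys all edges of the cycle while deleting all but at most two of its vertices. For progress, let $(G,\mathscr{C})\in\mathcal{G}_2^d$ be nonempty; I must find an applicable rule. The structural input is that divisorial gonality at most $2$ forces treewidth at most $2$, so $G$ has no $K_4$-minor; analysing the block--cut tree and using that a $2$-connected $K_4$-minor-free graph is series-parallel, hence sparse, one shows that $G$ is a single vertex, or has a leaf, or contains a cycle with at most two vertices of degree greater than two (and, in the two-vertex case, a further path joining them that is edge-disjoint from the cycle). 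It then remains to check that the constraint hypotheses of the matching rule hold. Here one uses that a suitable divisor $D$ has degree $2$ and satisfies all constraints, so its ``footprint'' on any leaf or cycle is forced: if the leaf carried constraints outside the forms allowed by $T_1^d$--$T_3^d$, or if the constraints on the cycle were incompatible with the constraint the $C$-rule wants to add, one derives that no degree-$2$ rank-$\ge1$ divisor can simultaneously satisfy all constraints and respect the firing-set restrictions, contradicting $(G,\mathscr{C})\in\mathcal{G}_2^d$. Hence some rule applies, and the reduction runs to the empty graph.

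\textbf{Main obstacle.} The delicate part is this constraint bookkeeping inside the progress argument, and it has two aspects. First, one must maintain an invariant describing how constraints can sit relative to the current graph (e.g.\ that they form a restricted kind of partial matching on the vertices and interact well with leaves and cycles) and verify that every rule preserves it; without such an invariant one cannot exclude ``stuck'' configurations where a leaf or a cycle exists but no rule's constraint hypothesis is met. Second, turning ``$(G,\mathscr{C})\in\mathcal{G}_2^d$'' into concrete positional information about divisors, and showing the stuck configurations admit no suitable divisor, requires hands-on use of $q$-reduced divisors, Dhar's algorithm and cycle-Picard-group arithmetic. The safety proofs for $C_1^d$--$C_3^d$ lean on the same cycle arithmetic and are the other place where the argument is technical rather than routine.
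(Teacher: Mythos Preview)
Your outline is correct and follows essentially the same strategy as the paper: rule-by-rule safety proofs, then termination via a decreasing size measure plus a progress lemma that exploits the bound $\tw(G)\le\dgon(G)$. The paper's arguments differ only in surface tools --- direct level-set/firing-set reasoning instead of Dhar or Picard-group language for safety, and contracting degree-$2$ paths plus a small degree lemma for treewidth-$2$ graphs instead of block--cut/series--parallel structure for progress. One conceptual simplification relative to your ``main obstacle'': the paper does \emph{not} maintain a separate structural invariant on how constraints sit. It proves directly that membership in $\mathcal{G}_2^d$ already forces any leaf or cycle to carry constraints of exactly the shape the matching rule requires (two distinct constraints on one vertex are impossible inside $\mathcal{G}_2^d$; constraints touching a cycle are automatically compatible). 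Since safety preserves $\mathcal{G}_2^d$ membership, that single hypothesis does all the bookkeeping --- your ``second aspect'' is the whole story, and the extra invariant you propose in the first aspect is unnecessary.
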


\subsection*{Safeness}
In this section it is assumed there is a graph $G$ and another graph $H$ that follows from $G$ by applying a rule. Now we first make an observation on the connectivity of our graphs:

\begin{lemma} \label{connectedLemma}
Let $G$ and $H$ be graphs. If $G$ is connected and $H$ can be produced from $G$ by applying some rules, then $H$ is connected.
\end{lemma}
\begin{proof}
We observe that the only rule that removes a path between two remaining vertices is \ref{rul:dc3}. In the case of \ref{rul:dc3} however we demand that there is a path between $v$ and $w$ outside of $C$ so this path will still exist and it follows that $H$ is still connected.
\end{proof}

Since we assume our graph $G$ is connected it follows that each produced graph $H$ is also connected. Now we will show for each of the rules in $\mathcal{R}^d$ that it is safe.

\begin{lemma}
Rules \ref{rul:de1} and \ref{rul:de2} are safe.
\end{lemma}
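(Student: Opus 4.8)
The plan is to unwind the definition of \emph{safe}: a rule is safe exactly when the graph $G$ it is applied to lies in $\mathcal{G}_2^d$ if and only if the resulting graph $H$ does. Since both \ref{rul:de1} and \ref{rul:de2} produce the empty graph, which lies in $\mathcal{G}_2^d$ by convention (it is the terminal configuration the reduction aims for), it will suffice to show that whenever either rule applies, $G$ itself admits a suitable divisor; then both $G$ and $H$ lie in $\mathcal{G}_2^d$ and the required equivalence holds trivially.

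For \ref{rul:de1}, $G$ is a single vertex $v$, and the only constraint it can carry is $(v,v)$. I would exhibit the divisor $D=2(v)$. It has degree $2$, and on a one-vertex graph a divisor is equivalent to an effective one precisely when its degree is nonnegative, so $r(D)\geq 1$; if the constraint $(v,v)$ is present it is satisfied, because $D$ minus two chips at $v$ is the zero divisor, and the firing-set condition for $(v,v)$ is vacuous. Hence $D$ is suitable and $G\in\mathcal{G}_2^d$.

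For \ref{rul:de2}, $G$ consists of vertices $u$ and $v$ joined by a single edge, with $\mathscr{C}=\{(u,v)\}$. I would exhibit $D=(u)+(v)$, of degree $2$. The only effective divisors of degree $1$ are $(u)$ and $(v)$, and subtracting either one from $D$ leaves an effective divisor, so $r(D)\geq 1$. The constraint $(u,v)$ is satisfied since $D-(u)-(v)=0$ is effective; and the only firing sets containing both or neither of $u$ and $v$ are $\emptyset$ and $V$, which act trivially on the connected graph $G$, so the firing-set part of the constraint imposes nothing. Thus $D$ is suitable and $G\in\mathcal{G}_2^d$.

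I expect no genuine obstacle here: once the definitions of satisfying a constraint and of a suitable divisor are unwound, the checks are immediate. The only point deserving a remark is the convention placing the empty graph in $\mathcal{G}_2^d$ — without it the terminal states of the reduction would carry no meaning — together with the routine behaviour of rank on these minimal graphs.
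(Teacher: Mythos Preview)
Your proposal is correct and follows essentially the same approach as the paper: both observe that the starting configurations and the empty graph lie in $\mathcal{G}_2^d$, making the required equivalence hold trivially. You simply spell out the construction of the suitable divisors and the verification of the constraints in more detail than the paper, which leaves these checks to the reader.
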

\begin{proof}
For both rules it should be clear that their starting states as well as the empty graph have divisorial gonality two or lower. From this it follows they both are safe.
\end{proof}

\begin{lemma}
Rules \ref{rul:dt1} and \ref{rul:dt2} are safe.
\end{lemma}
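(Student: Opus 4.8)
Recall that a rule is \emph{safe} precisely when applying it cannot change whether the (constrained) graph lies in $\mathcal{G}_2^d$, i.e.\ whether it has a suitable divisor. Both \ref{rul:dt1} and \ref{rul:dt2} delete a leaf $v$; write $u$ for its unique neighbour, and note that by Lemma~\ref{connectedLemma} the resulting graph $H=G-v$ is again connected. The plan is to treat both rules through a single statement about deleting a pendant vertex, and then just to account for how the two rules move constraints.

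\textbf{Key lemma (pendant-vertex reduction).} For a leaf $v$ of a connected graph $G$ with neighbour $u$ and $H=G-v$, the map
\[
\phi:\mathrm{Div}(G)\to\mathrm{Div}(H),\qquad \phi(D)=D|_{V(H)}+D(v)\cdot u,
\]
induces a degree-preserving isomorphism $\mathrm{Pic}(G)\cong\mathrm{Pic}(H)$ that sends effective divisors to effective divisors and satisfies $r_H(\phi(D))=r_G(D)$ for every $D$. I would prove this in four steps. First, a direct computation of $\phi$ on the firing of a single vertex $x$ (the only non-trivial case being $x=u$, where $\deg_G u=\deg_H u+1$, and $v$ adjacent to no other vertex of $H$) shows that $\phi$ carries principal divisors of $G$ to principal divisors of $H$ and is, in fact, induced by the restriction $f\mapsto f|_{V(H)}$ of firing functions; hence $\phi$ descends to Picard groups. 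Second, $\phi$ is surjective because it is the identity on the subgroup $\mathrm{Div}(H)\subseteq\mathrm{Div}(G)$, and injective because $|\mathrm{Pic}^0(G)|=|\mathrm{Pic}^0(H)|$ (both count spanning trees, and the pendant edge lies in every spanning tree). Third, $D\ge 0\Rightarrow \phi(D)\ge 0$ is immediate. Fourth, for rank: given an effective test divisor $E'$ on $H$ of degree $k$, view it on $G$ with a zero at $v$, so that $\phi(E')=E'$; then $D-E'$ is equivalent to an effective divisor on $G$ iff $\phi(D)-E'=\phi(D-E')$ is equivalent to an effective divisor on $H$, using that $\phi$ is a Picard isomorphism preserving effectiveness and that the preimage of an effective divisor on $H$ is represented by that same divisor (zero at $v$). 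Finally I would record the compatibility that, since $\phi$ comes from $f\mapsto f|_{V(H)}$ with inverse the extension $f(v):=f(u)$, a firing function respects a constraint $(a,b)$ with $a,b\in V(H)$ before restriction/extension exactly when it does afterward.

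\textbf{Rule \ref{rul:dt1}.} Here $v$ carries no constraint, so every constraint of $G$ lies on $V(H)$ and $\mathscr{C}_H=\mathscr{C}_G$. Since $\phi(a)=a$ and $\phi(b)=b$ for $a,b\in V(H)$, and $\phi$ preserves degree, rank, linear equivalence, effectiveness, and (by the recorded compatibility) the property of satisfying each $(a,b)\in\mathscr{C}_G$ together with its firing-set condition, the bijection $\phi$ carries suitable classes on $G$ to suitable classes on $H$ and back. Hence $G\in\mathcal{G}_2^d\iff H\in\mathcal{G}_2^d$.

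\textbf{Rule \ref{rul:dt2} and the main obstacle.} Now $\mathscr{C}_H=(\mathscr{C}_G\setminus\{(v,v)\})\cup\{(u,u)\}$, and the only new ingredient is the constraint $(v,v)$: a divisor $D$ satisfies it iff $D-2v$ is equivalent to an effective divisor, while its firing-set condition is vacuous (and likewise for $(u,u)$). Since $\phi$ is linear with $\phi(v)=u$ in $\mathrm{Div}(H)$, we have $\phi(D-2v)=\phi(D)-2u$, so by the lemma $D$ satisfies $(v,v)$ in $G$ iff $\phi(D)$ satisfies $(u,u)$ in $H$; the remaining constraints lie on $V(H)$ and transfer exactly as for \ref{rul:dt1} (no extra care is needed if $G$ already contained $(u,u)$ or a constraint $(u,c)$, since those conditions are matched by $\phi$ as well). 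Therefore $\phi$ again transports suitability in both directions and $G\in\mathcal{G}_2^d\iff H\in\mathcal{G}_2^d$. The substantive work is entirely in the key lemma; the two delicate points there are establishing $r_H(\phi(D))=r_G(D)$ (not merely one inequality), which genuinely uses that $\phi$ is an \emph{isomorphism} of Picard groups, and verifying that the constraint-respecting firing functions on $G$ and on $H$ correspond under restriction and extension, so that ``satisfies a constraint'' is honestly carried across by $\phi$. Everything else is routine.
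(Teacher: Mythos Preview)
Your argument is correct but takes a markedly different route from the paper. The paper's proof is entirely ad hoc: starting from a suitable divisor on one side, it fires the singleton $\{v\}$ (to empty the leaf) or the complement $V(G)\setminus\{v\}$ (to reach the leaf) and observes that the result is suitable on the other side; this takes three lines and uses nothing beyond the definition of firing. You instead prove a structural statement---a degree-, rank-, and effectiveness-preserving isomorphism $\mathrm{Pic}(G)\cong\mathrm{Pic}(H)$ for any pendant deletion, induced by $f\mapsto f|_{V(H)}$ on firing scripts---and then transport suitability through it, handling constraints via the corresponding restriction/extension of firing functions. Your approach buys a reusable lemma (it would, for instance, immediately handle an arbitrary constraint at the leaf, not just $(v,v)$) and a clean conceptual explanation of why pendant deletion is harmless, at the price of importing the matrix-tree identity $|\mathrm{Pic}^0|=\#\{\text{spanning trees}\}$, which the paper never needs. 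One small point worth making explicit: your rank step only tests $D$ against effective divisors $E'$ supported on $H$, which a priori yields $r_G(D)\ge r_H(\phi(D))$ but not the converse; the missing half follows because every effective $E$ on $G$ is linearly equivalent on $G$ to $\phi(E)$ (indeed $v-u$ is principal), so restricting to $H$-supported test divisors loses nothing. This is implicit in the injectivity of your Picard map, but spelling it out would close the loop.
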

\begin{proof}
Let $v$ be our vertex with degree $1$ and $u$ its neighbor. We know that the only constraint on $v$ can be the constraint $(v,v)$. 

Note that if $H\in \mathcal{G}_2^d$ then there is a divisor on $H$ that puts at least one chip on $u$. Considering this divisor on $G$, note that we can move chips to $v$ by firing $G-\{v\}$, it follows that this divisor is also suitable for $G$.

Given that $G\in \mathcal{G}_2^d$ note that we can find a suitable divisor that has no chips on $v$ by firing $v$ until it contains no chips. This divisor will also be suitable on $H$.

For \ref{rul:dt2} the proof is analogous, except with two chips on $v$.
\end{proof}

\begin{lemma}
Rule \ref{rul:dt3} is safe.
\end{lemma}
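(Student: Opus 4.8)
The plan is to prove both implications, $G\in\mathcal{G}_2^d\Rightarrow H\in\mathcal{G}_2^d$ and $H\in\mathcal{G}_2^d\Rightarrow G\in\mathcal{G}_2^d$, exploiting that a constraint of the form $(a,b)$ essentially pins down the shape of a suitable divisor. If $D$ on $G$ satisfies $(v_1,v_2)$, then $D$ minus one chip on $v_1$ and one on $v_2$ is equivalent, through constraint‑respecting firings, to an effective divisor of degree $0$, hence to $0$, so $D$ is equivalent to the divisor $v_1+v_2$ placing exactly one chip on each of $v_1,v_2$; likewise any suitable divisor on $H$ is equivalent to $u_1+u_2$. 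Because $v_1$ is a leaf with neighbour $u_1$, $v_2$ a leaf with neighbour $u_2$, and neither $v_1$ nor $v_2$ lies in any other constraint, both $\{v_1,v_2\}$ and its complement are legal firing sets of $G$; firing $\{v_1,v_2\}$ turns $v_1+v_2$ into $u_1+u_2$ (into $2u_1$ when $u_1=u_2$) and firing the complement reverses this. So on $G$ the divisor $u_1+u_2$, carrying no chips on the leaves, is suitable exactly when $v_1+v_2$ is.

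For $H\in\mathcal{G}_2^d\Rightarrow G\in\mathcal{G}_2^d$ I would show that $u_1+u_2$, read on $G$, is suitable. It satisfies $(v_1,v_2)$ (fire the complement of $\{v_1,v_2\}$ to reach $0$), and for every other constraint $(a,b)$ of $G$ — which by the rule is also a constraint of $H$, with $a,b\neq v_1,v_2$ — and for each rank test against $y\in V(H)$, there is a legal firing script $g\geq 0$ on $V(H)$ realising the required equivalence in $H$. Since $g$ respects the constraint $(u_1,u_2)$ of $H$, it takes a common value $m'$ at $u_1$ and $u_2$; extending $g$ by the value $m'$ on $v_1$ and $v_2$ yields a legal firing script of $G$ with the same net effect on $V(H)$ and none on $v_1,v_2$, so the equivalence lifts to $G$. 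The rank tests against $y=v_1,v_2$ are immediate ($u_1+u_2-v_1\sim v_2$), so $u_1+u_2$, and hence $v_1+v_2$, is suitable on $G$.

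For $G\in\mathcal{G}_2^d\Rightarrow H\in\mathcal{G}_2^d$ I would instead restrict firing scripts. The tool is: if $A,B$ are divisors supported on $V(H)$ and $A$ is equivalent to $B$ through legal firings of $G$, then $A$ is equivalent to $B$ through legal firings of $H$. Such a script $f\geq 0$ has $f(v_1)=f(v_2)$ by the $(v_1,v_2)$ constraint; the net change $f(u_1)-f(v_1)$ at the leaf $v_1$ equals $B(v_1)-A(v_1)=0$, so $f(u_1)=f(v_1)$, and similarly $f(u_2)=f(v_2)$, so all four values agree. Then the restriction of $f$ to $V(H)$ is legal in $H$ (it respects the new constraint $(u_1,u_2)$ because $f(u_1)=f(u_2)$) and realises $A\sim B$ in $H$. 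Applying this with $A=u_1+u_2$, $B=0$ for the constraint $(v_1,v_2)$, with $A=u_1+u_2-a-b$, $B=0$ for any other constraint $(a,b)$, and with $A=u_1+u_2-y$, $B$ an effective witness for each rank test against $y\in V(H)$, transfers suitability from $G$ to $H$ — provided every effective witness can be taken inside $V(H)$.

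The step I expect to be the main obstacle is exactly that proviso: could the only effective divisor equivalent, through legal firings of $G$, to $u_1+u_2-y$ be a single chip on $v_1$ or on $v_2$? If $u_1+u_2-y\sim v_1$ legally for some $y\in V(H)$, then combining with $u_1+u_2\sim v_1+v_2$ gives $y\sim v_2$ through legal firings of $G$. I would rule this out via the electrical‑network reading of firing scripts: a legal script realising $y\sim v_2$ is a potential with a unit current injected at $y$ and withdrawn at the leaf $v_2$, so by the discrete maximum principle $v_2$ is the strict minimum of the potential over all vertices of $G$ — the degenerate possibility that the potential is constant off $y$ is impossible here for degree reasons, since $v_1,v_2$ are distinct leaves — whence $f(v_1)>f(v_2)$, contradicting $f(v_1)=f(v_2)$. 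The case $u_1+u_2-y\sim v_2$ is symmetric. With the proviso secured the restriction tool applies throughout, $u_1+u_2$ is suitable on $H$, and together with the previous paragraph this proves safeness of \ref{rul:dt3}.
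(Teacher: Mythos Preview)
Your argument is correct and rests on the same idea as the paper: transfer suitability between $G$ and $H$ via the legal equivalence $v_1+v_2\sim u_1+u_2$ obtained by firing $\{v_1,v_2\}$ (respectively its complement). The paper's proof is essentially a two-line assertion of this transfer --- it writes down the divisor $u_1+u_2$ on $H$ (respectively $v_1+v_2$ on $G$) and declares it suitable --- whereas you actually verify the claim. Two of your verifications are routine bookkeeping the paper simply omits: that a legal firing script on $H$ extends to one on $G$ by assigning the common value $g(u_1)=g(u_2)$ to $v_1,v_2$, and conversely that a legal script on $G$ with zero net effect at the leaves has $f(v_1)=f(u_1)=f(u_2)=f(v_2)$ and hence restricts to a legal script on $H$. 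Your third verification --- ruling out that the only effective witness for $u_1+u_2-y$ lies on $v_1$ or $v_2$ --- is a genuine extra ingredient, handled by the discrete maximum principle, that the paper does not address at all. So the route is the same, but your version is substantially more complete; under the reading (which the paper's later proofs seem to adopt) that all firings used in rank and constraint checks must respect every constraint, your added maximum-principle step is not optional.
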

\begin{proof}
Let $v_1$ and $v_2$ be the vertices with degree one, such that their only constraint is $(v_1,v_2)$ and let $u_1$ and $u_2$ be their  (possibly equal) neighbors. We first assume that $H\in \mathcal{G}_2^d$, then there is a suitable divisor on $H$ with one chip on $u_1$ and another chip on $u_2$. Consider this divisor on $G$. Then by firing $V(G)\setminus\{ v_1, v_2\} $ we can move a chip to $v_1$ and $v_2$. For every vertex $v\in V(G)\setminus \{v_1,v_2\}$ there is a sequence $A_1, A_2, \ldots, A_{k} \subseteq V(H)$ such that firing this sequence yields a divisor $D'$ with a chip on $v$. Now add $v_i$ to every set $A_j$ that contains $u_i$ for $i=1,2$. Firing these sets on $G$ starting from $D$ results in $D'$ on $G$, so $D$ reaches $v$. Moreover, every set we fired contains either both $v_1$ and $v_2$, or neither. We conclude that $D$ is also suitable on $G$. 

Assume that $G\in \mathcal{G}_2^d$, then the divisor on $G$ with one chip on $v_1$ and $v_2$ is suitable. By firing $\{ v_1, v_2\}$ we can create a divisor with a chip on $u_1$ and $u_2$ (or two on $u_1$ if $u_1=u_2$). It follows that this divisor is suitable when considered on $H$. 
\end{proof}

\begin{lemma}
Rule \ref{rul:dc1} is safe.
\end{lemma}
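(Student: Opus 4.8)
Since the graph $H$ produced by Rule~\ref{rul:dc1} is a single vertex carrying no constraints, $H$ is automatically in $\mathcal{G}_2^d$: on a one-vertex graph the divisor $2(x)$ has rank $2$. Hence the implication $G\in\mathcal{G}_2^d\Rightarrow H\in\mathcal{G}_2^d$ is immediate, and proving safeness reduces to showing that $G$ itself lies in $\mathcal{G}_2^d$. I would start by recording that, because every vertex of $C$ has degree exactly two and $G$ is connected, in fact $G=C$, so the statement to prove is: a cycle carrying a compatible set of constraints $\mathscr{C}_C$ always has a suitable divisor.

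As the candidate suitable divisor I would take $D=(v_0)+(w_0)$ for an arbitrarily chosen constraint $(v_0,w_0)\in\mathscr{C}_C$, and $D=2(p)$ for an arbitrary vertex $p$ when $\mathscr{C}_C=\emptyset$. Then $\deg D=2$; for each $(v,w)\in\mathscr{C}_C$, compatibility condition~(ii) is exactly the statement that $(v_0)+(w_0)\sim(v)+(w)$ on $C$, so $D-(v)-(w)$ is equivalent to the zero divisor and is in particular effective; and constraints $(v,v)$ impose no restriction on firing sets, so the firing-set clause only has to be respected for the constraints $(v,w)$ with $v\neq w$.

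The core of the proof is verifying that $r(D)\ge1$ can be certified using only \emph{legal} firing sets, i.e.\ sets that for each constraint $(v,w)$ with $v\neq w$ contain both of $v,w$ or neither. The structural fact I would exploit is that, by condition~(ii), any two distinct endpoints of a constraint sum to a common value in the cyclic coordinate on $C$; hence the nontrivial constraints form a partial matching of $V(C)$ all of whose pairs represent the class $[D]$, a legal firing set is precisely a disjoint union of matched pairs and unmatched single vertices, and the complement of a legal set is again legal. Given a target vertex $q$, I need an effective divisor equivalent to $D-(q)$ through legal firings; since on a cycle $D-(q)$ is equivalent to a unique single vertex $(r)$, this amounts to realising $(q)+(r)\sim D$ by legal moves. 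I would settle this by a short case split on the matching structure (no nontrivial constraint, one, or several), in each case exhibiting the legal firing sets that transport $D$ onto $(q)$ --- typically the complement of a single unmatched vertex, or a matched pair fired as a block --- or, equivalently, by checking directly that the lattice generated by the Laplacian vectors of the legal sets contains $(q)+(r)-D$.

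The step I expect to be the genuine obstacle is exactly this last one: confirming that positive rank is not lost under the legality restriction, that is, that collapsing a compatible cycle never destroys a linear equivalence the rank of $D$ relies on. Everything hinges on the observation that compatibility pins every nontrivial constraint to a single value of the cyclic coordinate sum, which simultaneously fixes the shape of the legal firing sets and guarantees that enough of them remain available to move $D$ onto every vertex of the cycle; the rest is bookkeeping.
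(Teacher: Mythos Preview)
Your approach is correct and essentially the same as the paper's: both note that connectedness forces $G=C$ to be a single cycle and $H$ to be a single unconstrained vertex, whence both lie in $\mathcal{G}_2^d$ unconditionally and the safeness biconditional is trivial. The paper's proof is only a few lines and simply asserts that a cycle with compatible constraints belongs to $\mathcal{G}_2^d$; your explicit construction of the suitable divisor and your analysis of the firing-set clause (via the common cyclic sum and the induced partial matching) go beyond what the paper spells out.
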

\begin{proof}
We start by assuming that $H\in \mathcal{G}_2^d$. Note that by Lemma \ref{connectedLemma} we have that $H$ is connected. Therefore it follows that $H$ must consist of a single vertex, and $G$ consists of a single cycle. It follows that $G\in \mathcal{G}_2^d$, since all constraints are compatible. 

Assume then that $G\in \mathcal{G}_2^d$ instead. Since $G$ is connected it must consist exactly of the cycle $C$, thus $H$ consists of a single point and $H\in \mathcal{G}_2^d$. 
\end{proof}

\begin{lemma}
Rule \ref{rul:dc2} is safe.
\end{lemma}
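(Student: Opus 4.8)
Rule $C_2^d$ takes a cycle $C$ with exactly one vertex $v$ of degree greater than two, checks that $\mathscr{C}_C \cup \{(v,v)\}$ is compatible, and replaces $C$ by just $v$ together with the constraint $(v,v)$. So I must show $G \in \mathcal{G}_2^d$ if and only if $H \in \mathcal{G}_2^d$, where $H$ is $G$ with the cycle collapsed to $v$.

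**Plan.** I would argue both directions by transporting a suitable divisor across the reduction, exactly in the style of the proofs for $T_1^d$–$T_3^d$. For the direction $H \in \mathcal{G}_2^d \Rightarrow G \in \mathcal{G}_2^d$: take a suitable divisor $D_H$ on $H$. Since $D_H$ satisfies the constraint $(v,v)$, it is equivalent on $H$ to a divisor with two chips on $v$; using $r(D_H)\ge 1$ plus this constraint I want to produce a representative that is effective with at least one (in fact two, after firing towards $v$) chips available at $v$. Then I move this divisor to $G$. The vertices of $C$ other than $v$ all have degree two in $G$, so the path $P = C \setminus \{v\}$ hangs off $v$ like a subdivided leaf-structure; by firing the set $V(G) \setminus V(P)$ repeatedly I can push chips from $v$ out along $P$. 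The compatibility hypothesis on $\mathscr{C}_C \cup \{(v,v)\}$ is exactly what guarantees that the resulting divisor, restricted to $C$, can be made to satisfy every constraint in $\mathscr{C}_C$ simultaneously (two chips placed anywhere compatible on the cycle are all equivalent on $C$). I should also check the firing-set condition on constraints: any constraint touching $C$ has both endpoints in $C$ by compatibility clause (i), and firing $V(G)\setminus V(P)$ or its complement respects "contains both or neither" for those pairs. For constraints disjoint from $C$ this is inherited from $H$.

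**The converse.** For $G \in \mathcal{G}_2^d \Rightarrow H \in \mathcal{G}_2^d$: take a suitable divisor $D_G$ on $G$. It must satisfy every constraint in $\mathscr{C}_C$, hence (degree-two divisor on a cycle) it is, on $C$, equivalent to the divisor assigning one chip each to the endpoints of some constraint pair, equivalently — by compatibility with $(v,v)$ — to the divisor with two chips on $v$. By firing the vertices of $P = C\setminus\{v\}$ (a sequence of level-set firings on the path) I can move all chips off $P$ and onto $v$, obtaining an equivalent divisor $D_G'$ with support disjoint from the interior of $C$; $D_G'$ still has rank $\ge 1$, still satisfies all constraints, and descends to a divisor $D_H$ on $H$ that satisfies $(v,v)$ and all the unchanged constraints. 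I then need $r(D_H)\ge 1$ on $H$: given any vertex $w$ of $H$, lift a winning firing strategy from $G$ and project it, using again that everything on $C$ collapses consistently because the constraints there were compatible.

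**Main obstacle.** The delicate point is the bookkeeping around rank and the firing-set condition rather than any single hard idea: I must make sure that when I collapse or expand $C$, the divisor I transport not only remains effective-up-to-equivalence but actually has rank at least one \emph{on the new graph}, i.e. that every target vertex can still be reached. On the $G \to H$ side this is routine since $H$ is a quotient; on the $H \to G$ side I must verify that reaching a vertex in the interior of the cycle $C$ is possible, which is precisely where I use that two chips sit (up to equivalence) at $v$ and that the path structure of $C\setminus\{v\}$ lets chips flow out to any interior vertex — together with compatibility of $\mathscr{C}_C \cup \{(v,v)\}$ so that none of the constraints obstruct this flow. I expect this cycle-traversal argument, and checking that the firing sets used never split a constraint pair, to be the part requiring the most care.
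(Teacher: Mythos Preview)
Your proposal is correct and follows essentially the same approach as the paper's proof: in each direction you transport a suitable divisor across the reduction, using the constraint $(v,v)$ together with the compatibility of $\mathscr{C}_C\cup\{(v,v)\}$ to place two chips at $v$, and you push chips into the cycle by firing $V(G)\setminus(C\setminus\{v\})$, which is exactly the set $V(G)-C+\{v\}$ the paper uses. The paper's argument is in fact terser than your plan---it simply asserts that a suitable divisor with two chips on $v$ exists on each side and moves it across, without spelling out the rank verification or the firing-set bookkeeping you (rightly) flag as the main obstacle.
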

\begin{proof}
Let $C$ be our cycle with one vertex $v$ with degree greater than $2$. Assume that $H\in \mathcal{G}_2^d$; then there is a suitable divisor on $H$ with two chips on $v$. Consider this divisor on $G$. Note that if we fire $V(G)-C+\{ v\}$ then we move the two chips onto the two neighbors of $v$ in $C$. Since all constraints on $C$ are compatible with the constraint $(v,v)$ it follows that we can move the chips along $C$ while satisfying the constraints on $C$. From this it follows that our divisor is suitable on $G$.

Assume now that $G\in \mathcal{G}_2^d$. Since all constraints on $C$ are compatible with $(v,v)$, it follows that we can find a suitable divisor with two chips on $v$. Considering this divisor on $H$ gives a suitable divisor there. Thus, $H\in \mathcal{G}_2^d$. 
\end{proof}

\begin{lemma} \label{lem:ruleC3safe}
Rule \ref{rul:dc3} is safe.
\end{lemma}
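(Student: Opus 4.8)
The plan is to follow the same two-directional template used for the other cycle rules: assume $H \in \mathcal{G}_2^d$ and produce a suitable divisor on $G$, then assume $G \in \mathcal{G}_2^d$ and produce a suitable divisor on $H$. Let $C$ be the cycle with the two high-degree vertices $v$ and $u$, let $P$ be the guaranteed path from $v$ to $u$ outside $C$, and recall that in $H$ we have deleted all interior vertices and all edges of $C$ but added the constraint $(v,u)$ (so that any firing set used on $H$ must contain both or neither of $v,u$, and any divisor must drop to an effective divisor after removing one chip from each of $v,u$).

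For the direction $H \in \mathcal{G}_2^d \Rightarrow G \in \mathcal{G}_2^d$: take a suitable divisor $D_H$ on $H$. Because $D_H$ satisfies $(v,u)$, after a sequence of set-firings (each containing both or neither of $v,u$) we can bring $D_H$ to an effective divisor with one chip on $v$ and one chip on $u$; call it $D$. Now view $D$ as a divisor on $G$ — the vertex set of $G$ is that of $H$ together with the interior vertices of $C$, which all carry $0$ chips. I claim $D$ is suitable on $G$. It still satisfies every constraint that was already present (the constraint classes on $H$ embed into $G$, using that interior vertices of $C$ have no constraints), and it satisfies the constraints $\mathscr{C}_C$ on $C$: by hypothesis $\mathscr{C}_C \cup \{(v,u)\}$ is compatible, and the divisor $\{v\}+\{u\}$ is linearly equivalent on $C$ to every $\{v'\}+\{w'\}$ with $(v',w') \in \mathscr{C}_C$, and firing sets that are "half of $C$ between $v$ and $u$" respect the partition condition for those constraints. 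Finally $r_G(D) \geq 1$: given any vertex $x$ of $G$ we must reach an effective divisor equivalent to $D$ with a chip on $x$; if $x$ lies off $C$ we use that $r_H(D_H) \geq 1$ together with the fact that firing $V(G) \setminus (\text{interior of } C)$-type sets lets chips on $v,u$ flow onto $C$'s interior and back, and if $x$ is an interior vertex of $C$ we slide the chip from $v$ (or $u$) along $C$ to $x$, which is legal precisely because the constraints on $C$ are compatible with $(v,u)$.

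For the direction $G \in \mathcal{G}_2^d \Rightarrow H \in \mathcal{G}_2^d$: take a suitable divisor $D_G$ on $G$. Since all constraints on $C$ are compatible with $(v,u)$, we can move $D_G$ (via firings that either contain all of the relevant $v$--$u$ arc of $C$ or none of it) to an equivalent effective divisor $D$ supported as one chip on $v$ and one chip on $u$, still suitable on $G$. Now restrict to $H$: $D$ has no chips on the deleted vertices, so it is a divisor on $H$, it trivially satisfies the new constraint $(v,u)$, and it satisfies all old constraints. It remains to check $r_H(D) \geq 1$; the point is that the path $P$ from $v$ to $u$ outside $C$ survives in $H$, so the firing moves available in $G$ that used $P$ (or the part of $G$ attached off $C$) are still available in $H$, and the only moves we lose are those sliding a chip into the interior of $C$ — but for reaching any target vertex of $H$ we never needed those. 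One must argue that every effective divisor equivalent to $D$ on $G$ with support in $V(H)$ is reached by a firing sequence that can be taken to stay within $V(H)$; this is where the existence of the external path $P$ is essential, since it guarantees $v$ and $u$ are not separated in $H$.

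The main obstacle I expect is the bookkeeping around $r(D) \geq 1$ in the $H \Rightarrow G$ direction, specifically verifying that "rank at least one on $H$" plus "compatibility of $\mathscr{C}_C$ with $(v,u)$" really does give a chip-reaching firing sequence to every vertex of $G$, including the interior vertices of the contracted cycle, while never violating the partition condition that firing sets contain both or neither of a constrained pair. Making this precise will likely require the small lemma that on a cycle $C$ a degree-two divisor with one chip each on $v$ and $u$ can be slid to put a chip on any prescribed vertex of $C$ using firing sets that are arcs of $C$ with endpoints at $v$ and $u$, and then checking that such arcs are compatible with every constraint in $\mathscr{C}_C$ precisely under the stated compatibility hypothesis; the rest is the routine "move the divisor between $G$ and $H$" argument already used for rules $C_1^d$ and $C_2^d$.
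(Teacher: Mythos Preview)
Your $H \Rightarrow G$ direction is essentially the paper's argument, but the $G \Rightarrow H$ direction has a real gap. You write: ``Since all constraints on $C$ are compatible with $(v,u)$, we can move $D_G$ \dots\ to an equivalent effective divisor $D$ supported as one chip on $v$ and one chip on $u$.'' Compatibility does not give you this. Compatibility is a statement about divisors \emph{on the subgraph $C$}: it says the divisors $\{v'\}+\{w'\}$ for $(v',w')\in\mathscr{C}_C$ are all equivalent on $C$ to $\{v\}+\{u\}$. It says nothing about whether a suitable divisor on the whole graph $G$ is equivalent to $\{v\}+\{u\}$, and it could well be that $D_G$ has no chips on $C$ at all. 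You have also not verified the second half of the new constraint $(v,u)$ on $H$: that every transformation between effective divisors can be realised by firing sets each containing both $v,u$ or neither.

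The paper fills both holes with a single min-cut argument that you never invoke. In $G$ the vertices $v$ and $u$ are joined by three edge-disjoint paths (the two arcs of $C$ and the external path $P$), so the minimum $v$--$u$ edge cut is at least $3$. Since any valid firing set $A$ for a degree-$2$ effective divisor has $\sum_{a\in A}\outdeg_A(a)\le 2$, a valid $A$ containing one of $v,u$ must contain the other. Now start from a suitable divisor with a chip on $v$ (this exists since $r\ge 1$) and look at the level-set decomposition of a transformation to a divisor with a chip on $u$: either no set in the decomposition contains $v$, so the chip on $v$ survives to the end, or the first set $A_i$ containing $v$ also contains $u$, and the intermediate divisor $D_i$ has $D_i(v)=D_i(u)=1$. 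Either way you get a suitable divisor with chips on both $v$ and $u$. The same $3$-edge-connectivity fact shows every valid firing set in $G$ (hence in $H$) already respects the partition condition for $(v,u)$, so the new constraint is automatically satisfied. Your sketch mentions $P$ only to say $v$ and $u$ stay connected in $H$; the point is rather that $P$ pushes the $v$--$u$ connectivity in $G$ above the degree of the divisor.
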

\begin{proof}
Let $C$ be our cycle and $v,w$ the two vertices with degree greater than two in $C$. We first assume that $H\in \mathcal{G}_2^d$. From this it follows that the divisor on $H$ with a chip on $v$ and a chip on $w$ is suitable. Consider this divisor on $G$. It is clear that from $v$ and $w$ we can move chips along either of the two arcs between $v$ and $w$ in $C$. We know that in $G$ all constraints on $C$ plus $(v,w)$ are compatible. Therefore the divisor is also suitable on $G$ and thus $G\in \mathcal{G}_2^d$. 

Let us now assume that instead $G\in \mathcal{G}_2^d$. Clearly there exists a suitable divisor $D$ on $G$ that has a chip on $v$. We will show that there is a suitable divisor that has a chip on both $v$ and $w$: Assume that $D(w)=0$, then there should be a suitable divisor $D'$ with $D'(w)=1$ and $D\sim_{\mathscr{C}} D'$. This implies there is a level set decomposition $A_1, \ldots, A_k$ of the transformation from $D$ to $D'$.

Let $A_i$ be the first subset that contains $v$ and $D_i$ the divisor before firing $A_i$. Notice that $D_i(v)\geq 1$, since $D(v) \geq 0$ and $v$ is not fired yet.  Notice that we have $D_i(a)\geq |E(a, V(G)\setminus A_i)|$ for all $a\in A_i$, since all firing sets are valid. Since $\degree (D_i)=2$  it follows that $\sum_{a\in A_i} |E(a, V(G)\setminus A_i)| \leq 2$. This is the same as the cut induced by $A_i$ having size two or lower. The minimum cut between $v$ and $w$ is at least three, since they are both part of $C$ and there exists an additional path outside of $C$ between them. Therefore it follows that $A_i$ can only induce a cut of size two or lower if $w\in A_i$ as well. But this implies that $D_i(w)\geq 1$, since a vertex can not receive a chip after entering the firing set. We conclude that $D_i(v)=1$ and $D_i(w)=1$.

Also by the fact that the minimum cut between $v$ and $w$ is at least three it follows that a subset firing can only be valid if the subset contains either both $v$ and $w$ or neither (since otherwise the subset would have at least three outgoing edges). It follows we can satisfy the set of constraints including $(v,w)$.

Therefore the divisor $D_i$ gives us a suitable divisor when considered on $H$. We conclude that $H\in \mathcal{G}_2^d$.
\end{proof}

Since we have shown that each of the rules in $\mathcal{R}^d$ is safe, we conclude:

\begin{lemma} \label{lem:divSafe}
The ruleset $\mathcal{R}^d$ is safe for $\mathcal{G}_2^d$. \qed 
\end{lemma}

\subsection*{Completeness}
By Lemma \ref{lem:divSafe} we have that membership in $\mathcal{G}_2^d$ is invariant under the reduction rules in $\mathcal{R}^d$. For the reduction rules to be useful however we will also need to confirm that any graph can be reduced to the empty graph by a finite sequence of rule applications.

\begin{lemma} \label{ConflictingLabelsLemma}
Let $G$ be a graph and $v\in V(G)$ a vertex. If there are two different constraints on $v$, so $(v,w), (v,w')\in \mathscr{C}$, with $w\neq w'$, then $G\notin \mathcal{G}_2^d$.
\end{lemma}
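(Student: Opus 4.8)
The plan is to argue by contradiction: suppose $G \in \mathcal{G}_2^d$, so there is a suitable divisor $D$, of degree $2$ with $r(D) \geq 1$, respecting every constraint of $R(G)$ and in particular the two constraints $(v,w)$ and $(v,w')$.

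First I would extract a rigid linear-equivalence statement. Since $\deg D = 2$, removing one chip from $v$ and one from $w$ leaves a divisor of degree $0$; being equivalent to an effective divisor then means being equivalent to the zero divisor, so $D \sim v+w$, and symmetrically $D \sim v+w'$. These equivalences must be realised by constraint-respecting transformations, and a transformation respects a constraint $(a,b)$ exactly when the integer function $f$ whose level sets are the sets fired is constant on $\{a,b\}$ — otherwise some level set $\{x : f(x) \geq i\}$ contains $a$ but not $b$. Since both $(v,w)$ and $(v,w')$ lie in $R(G)$, the functions $g_1$ and $g_2$ realising $D \sim v+w$ and $D \sim v+w'$ are each constant on $\{v,w,w'\}$ and satisfy $\mathrm{div}(g_1) = (v+w) - D$ and $\mathrm{div}(g_2) = (v+w') - D$. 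Then $g := g_2 - g_1$ is constant on $\{v,w,w'\}$ as well and has $\mathrm{div}(g) = w' - w$; in particular $g(w) = g(w')$.

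The core step is to show such a $g$ cannot exist, using the minimum principle for the Laplacian. Because $w \neq w'$, $\mathrm{div}(g) = w' - w \neq 0$, so $g$ is non-constant and its minimum set $B = \{x : g(x) = \min g\}$ is a non-empty proper subset of $V(G)$. Every $x \in B$ satisfies $\mathrm{div}(g)(x) = \sum_{xy \in E(G)} (g(y) - g(x)) \geq 0$, which excludes $w$ from $B$ since $\mathrm{div}(g)(w) = -1$. For $x \in B \setminus \{w'\}$ this sum equals $0$, so all neighbours of $x$ lie in $B$; hence if $w' \notin B$ then $B$ is closed under adjacency and, $G$ being connected, $B = V(G)$, contradicting $w \notin B$. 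So $w' \in B$, giving $g(w') = \min g$, and then $g(w) = g(w')$ forces $w \in B$ — contradiction. Hence no suitable divisor exists and $G \notin \mathcal{G}_2^d$.

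The step needing the most care is the second paragraph: justifying that ``$D$ satisfies the constraint $(v,w)$'' yields an equivalence $D \sim v+w$ realised by a transformation that also respects $(v,w')$, hence a defining function constant on all three vertices $v, w, w'$. After that, the composition and the minimum-principle argument are routine. It is also worth noting the degenerate cases $v = w$ or $v = w'$: there one of the two ``constant on the pair'' conditions is vacuous, but since $w \neq w'$ the remaining conditions still give $g(w) = g(w')$, and the argument is unchanged.
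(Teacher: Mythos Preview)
Your proof is correct and takes a genuinely different route from the paper's. The paper argues directly with firing sets: it splits into the cases $w'=v$ and $v\notin\{w,w'\}$, fixes the two target divisors $D=v+w$ and $D'=v+w'$ (respectively $2v$ and $v+w$), and shows that no level-set decomposition from one to the other can even get started, because the first valid firing set is forced by the constraints to contain $w'$ (respectively $w$), so that vertex can never acquire a chip. Your argument instead passes to the potential-theoretic picture: you compose the two constraint-respecting equivalences into a single function $g$ with $\mathrm{div}(g)=w'-w$ and $g(w)=g(w')$, and then the discrete minimum principle on the connected graph gives the contradiction. This buys you a case-free proof (the degenerate situations $v=w$ or $v=w'$ fold in automatically, as you note) and isolates exactly the algebraic obstruction: no principal divisor can equal $w'-w$ while the defining function takes the same value at $w$ and $w'$. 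The paper's approach, on the other hand, stays closer to the chip-firing mechanics used elsewhere in the section and does not need to introduce the Laplacian formalism. Your flagged ``step needing the most care'' is indeed the crux, but it is sound under the paper's conventions: a suitable divisor must satisfy \emph{all} constraints simultaneously, so every firing set in any transformation respects both $(v,w)$ and $(v,w')$, which is precisely the statement that the associated function is constant on $\{v,w,w'\}$.
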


\begin{proof}
Any suitable divisor should be equivalent to the divisor $D$ with $D(v)=1$, $D(w)=1$ (or $D(v) = 2$ is $w=v$), and equivalent to the divisor $D'$ with $D'(v)=1$, $D'(w')=1$. This means that these divisors are equivalent to each other. Notice that any firing set that contains $v$ also contains both $w$ and $w'$ by our constraints.  Moreover, any firing set containing $w$ contains $v$ and $w'$ by our constraints. Starting with divisor $D$, notice that any valid firing set must contain $v$ or $w$ (they are the only vertices with chips). It follows that it also contains $w'$. This implies that the number of chips on $w'$ cannot increase, so no level set decomposition from $D$ to $D'$ exists, thus $D$ and $D'$ cannot be equivalent. We conclude no suitable divisor exists and therefore $G\notin \mathcal{G}_2^d$.
\end{proof}

\begin{lemma} \label{DegreeOneLemma}
Let $G\in \mathcal{G}_2^d$ be a graph where none of the rules \ref{rul:de1}, \ref{rul:de2}, \ref{rul:dt1}, \ref{rul:dt2} or \ref{rul:dt3} can be applied. Then $G$ contains no vertices of degree 1.
\end{lemma}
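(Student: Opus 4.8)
The plan is to prove the contrapositive: assuming $G$ has a vertex $v$ of degree $1$, I will exhibit a rule among \ref{rul:de1}, \ref{rul:de2}, \ref{rul:dt1}, \ref{rul:dt2}, \ref{rul:dt3} that can be applied, contradicting the hypothesis. Since $G\in\mathcal{G}_2^d$, Lemma~\ref{ConflictingLabelsLemma} shows that $v$ carries at most one constraint. If $v$ is unconstrained, \ref{rul:dt1} applies to $v$; if the only constraint on $v$ is $(v,v)$, then \ref{rul:dt2} applies. So we may assume the unique constraint on $v$ is $(v,w)$ with $w\neq v$, and Lemma~\ref{ConflictingLabelsLemma} applied to $w$ forces $(v,w)$ to be the unique constraint on $w$ as well. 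If $w$ has degree $1$, then \ref{rul:dt3} applies to the pair $v,w$. Hence it remains to derive a contradiction in the case that $w$ has degree at least $2$.

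In that case, I would first note that if $G$ had some other leaf that is unconstrained, or carries only the constraint $(v',v')$, or carries only a constraint $(v',v'')$ with $v''$ also a leaf, then \ref{rul:dt1}, \ref{rul:dt2} or \ref{rul:dt3} would apply to it. So we may assume that every leaf of $G$ is constrained to a single non-leaf, with distinct leaves receiving distinct partners (otherwise some vertex carries two different constraints, and Lemma~\ref{ConflictingLabelsLemma} gives $G\notin\mathcal{G}_2^d$); in particular $G$ has at least three vertices, so \ref{rul:de1} and \ref{rul:de2} play no role. It then remains to show that no such graph can lie in $\mathcal{G}_2^d$, which is the crux of the argument.

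Suppose for contradiction that $D$ is a suitable divisor. For our leaf $v$ with partner $w$, the divisor obtained from $D$ by removing a chip from $v$ and a chip from $w$ has degree $0$, hence is equivalent to an effective divisor only if it is the zero divisor; so $D$ is equivalent to the divisor with one chip on $v$ and one chip on $w$, and we may take $D$ to be exactly that. The key observation is that a chip on a leaf that is constrained to its only neighbour is \emph{frozen}: removing it would require firing a set that contains the leaf but not its neighbour, which is forbidden when that neighbour is the leaf's constraint partner. Consequently, if $v$ is adjacent to $w$, every effective divisor reachable from $D$ by constraint-respecting transformations still carries a chip on $v$; and if another leaf $v'$ is likewise adjacent to its (distinct) partner $w'$, the same holds for $v'$, which is incompatible with $D$ being equivalent to the divisor supported on $v'$ and $w'$. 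When no such pair of adjacent leaf--partner edges is available --- for instance when a constrained leaf lies at distance at least $2$ from its partner, or when $G$ has a single leaf --- I would instead track the movable chip directly: it can leave $v$ only along the unique edge at $v$, and, using the bound from the proof of Lemma~\ref{lem:ruleC3safe} that any valid firing set of a degree-$2$ divisor induces a cut of size at most $2$, together with the constraint-forced coupling of a leaf and its partner, one shows the movable chip is confined to a region it cannot leave, so some vertex $z$ is never covered by an equivalent effective divisor, contradicting $r(D)\ge 1$.

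I expect this last step to be the main obstacle. The ``frozen chip'' observation disposes cleanly of the case where constrained leaves are adjacent to their partners, but the remaining configurations require the cut-size estimate and the constraint-coupling bookkeeping to be carried out in full, and one has to verify that these subcases genuinely exhaust everything permitted by ``every leaf is constrained to a distinct non-leaf''.
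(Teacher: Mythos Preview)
Your reduction to the single hard configuration --- a leaf $v$ whose unique constraint is $(v,w)$ with $\deg(w)\ge 2$ --- is correct and matches the paper exactly, including the observation that every leaf of $G$ must then be constrained to a distinct non-leaf. The ``frozen chip'' remark is also valid. But from there your argument is, by your own admission, only a sketch: the frozen-chip case applies only in the very special situation where the leaf's unique neighbour \emph{is} its constraint partner, and the residual cases (``track the movable chip directly'', ``some vertex $z$ is never covered'') are not carried out. As it stands the proof has a genuine gap at precisely the point where the real work lies.

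The paper closes this gap with a different and much cleaner case split: on whether $w$ is a cut-vertex of $G$. If $w$ is \emph{not} a cut-vertex, one shows directly that the divisor $D$ with $D(v)=D(w)=1$ cannot be transformed to any divisor covering $u$, the neighbour of $v$: the first firing set $A_0$ of the level-set decomposition must contain both $v$ and $w$ (by the constraint) and omit $u$; since $w$ is not a cut-vertex, every neighbour of $w$ has a path to $u$ avoiding $w$, which forces all neighbours of $w$ outside $A_0$, so $w$ would have to emit at least two chips from its single chip --- impossible. If $w$ \emph{is} a cut-vertex, one takes a component $C_x$ of $G-w$ not containing $v$, argues that $C_x$ must be a tree (at most one chip can ever enter it, since routing the chip from $v$ through $w$ would force a divisor with two chips on $w$, contradicting Lemma~\ref{ConflictingLabelsLemma}), and hence finds a leaf $x$ of $G$ inside $C_x$. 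That leaf carries its own constraint $(x,y)$ with $\deg(y)\ge 2$, and a short analysis of where $y$ can lie (inside $C_x$ versus outside) yields a contradiction in either case by the same first-firing-set argument.

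This cut-vertex dichotomy is the structural idea your sketch is missing. Your cut-size intuition is in the right direction --- the non-cut-vertex half is essentially a min-cut obstruction --- but without the cut-vertex branch you have no systematic way to produce the second leaf $x$ and its constraint, which is what actually closes the argument in general.
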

\begin{proof}
Assume on the contrary that $G$ does contain a vertex $v$ with degree 1. By Lemma \ref{ConflictingLabelsLemma} and the fact that $G\in \mathcal{G}_2^d$ we have that at most one constraint contains $v$. If there is no constraint on $v$, we could apply Rule \ref{rul:dt1} to it, therefore there is exactly one constraint on $v$. If this constraint is $(v,v)$ we would be able to apply Rule \ref{rul:dt2} to $v$. If the constraint is $(v,w)$, where $w$ is another vertex of degree 1, Rule \ref{rul:dt3} could be applied to $v$. The only remaining possibility is that the constraint on $v$ is the constraint $(v,w)$ where $w$ is a vertex with degree greater than 1. We will use $D$ to denote the divisor with $D(v)=D(w)=1$. Since we have the constraint $(v,w)$ and $G\in \mathcal{G}_2^d$, $D$ is a suitable divisor.

We first consider the case where $w$ is not a cut-vertex. Let $u$ be the neighbor of $v$. Consider the transformation from $D$ to a divisor $D'$ with $D'(u)=1$. Let $A_1$ be the first firing set in the level decomposition of this transformation. Note that we have $v,w\in A_1$ and $u\notin A_1$. Since $w$ is not a cut-vertex, it follows for each neighbor $w_i$ of $w$ that either there is a path from $w_i$ to $u$ that does not
contain $w$ or $w_i=u$. Note that if a neighbor $w_i\neq u$ is in $A_1$, then somewhere on its path to $u$ must be an edge that crosses between $A_1$ and its complement $A_1^c$. But such a crossing edge would imply that the firing set is not valid, since no vertex on this path contains a chip. Since $w$ has degree at least two, and none of its neighbors are in $A_1$, it follows that the firing set is not valid, since $w$ would lose at least two chips. We have a contradiction.

We proceed with the case where $w$ is a cut-vertex. Let $C_x$ be a connected component not containing $v$ after removing $w$. Consider the subset $C_x$ in $G$. Note that from $D$ we can never obtain an equivalent divisor with two chips on $C_x$. Since the chip from $v$ would have to move through $w$ to get to $C_x$, this would require $D$ to be equivalent to a divisor with two chips on $w$, which is impossible by Lemma \ref{ConflictingLabelsLemma} if $G\in \mathcal{G}_2^d$. Since $D$ reaches all vertices, it follows that $C_x$ must be a tree. This means $C_x$ must contain a vertex $x$ of degree one, we know however that since we cannot apply rules \ref{rul:dt1}, \ref{rul:dt2} or \ref{rul:dt3} to $G$, $x$ must have a constraint $(x,y)$ where $y$ is a vertex with degree greater than one. We now consider the possible locations of $y$.

If $y\in C_x$, then $D$ must be equivalent to a divisor with a chip on $x$ and a chip on $y$. As mentioned before, $D$ cannot be equivalent to a divisor with two chips on $C_x$, so it follows that $y\notin C_x$.

Since $y\notin C_x$, $D$ has to be equivalent to the divisor $D''$ with $D''(x)=D''(y)=1$. Let $C_y$ be the component containing $y$. Let $A_1$ be the first subset of the level set decomposition of the transformation of $D$ into $D''$. Note that $v,w\in A_1$ and $x,y\notin A_1$. But this implies that $w$ has at least one neighbor $w_1$ in $C_y$, with $w_1\notin A_1$, namely the first vertex on the path from $w$ to $y$. But $w$ also has at least one neighbor $w_2$ in $C_x$, with $w_2\notin A_1$, namely the first vertex on the path from $w$ to $x$. This means $w$ has two neighbors that it will send a chip to, but $w$ only has one chip. This yields a contradiction. 

We conclude that no vertices with degree $1$ can exist in $G$.
\end{proof}

\begin{lemma} \label{cycleLemma}
Let $G$ be a graph with a set of constraints $\mathscr{C}$ and let $C$ be a cycle in $G$ with $\mathscr{C}_C$ the set of constraints that contain a vertex in $C$. If $G\in \mathcal{G}_2^d$ then the constraints $\mathscr{C}_C$ are compatible.
\end{lemma}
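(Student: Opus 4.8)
The plan is to prove the contrapositive in spirit: assume $G\in\mathcal{G}_2^d$, take a suitable divisor $D$ (degree $2$, rank $\geq 1$, satisfying all constraints), and show that $D$ witnesses the two compatibility conditions for $\mathscr{C}_C$. First I would dispense with condition (i). Suppose $(v,w)\in\mathscr{C}_C$ with, say, $v\in C$ but $w\notin C$. Because $D$ satisfies $(v,w)$, any firing set used in transformations from $D$ must contain both $v$ and $w$ or neither. I would argue this forces $r(D)<1$ by exhibiting a target divisor that cannot be reached: pick a vertex $u$ on $C$ adjacent to $v$ but on the ``far side'' from $w$, and consider moving a chip to $u$. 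Any level set decomposition of such a transformation must start with a set $A_0$ containing $v$ (to move its chip toward $u$) hence containing $w$, but then — using that $C$ is a cycle through $v$ and that $w$ lies outside $C$ — I can trace a path that forces a crossing edge of $A_0$ at a vertex with no chip, contradicting validity (this is the same style of argument as in Lemmas on Rule \ref{rul:dt3} and in Lemma \ref{DegreeOneLemma}). The precise choice of $u$ and the path-tracing is where I would need to be a little careful, but it is routine once set up.

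For condition (ii), I would use that $D$ has degree exactly $2$ and satisfies every constraint in $\mathscr{C}_C$. Fix $(v,w),(v',w')\in\mathscr{C}_C$; by condition (i) all four vertices lie on $C$. Let $E$ be the divisor with a chip on $v$ and on $w$, and $E'$ the divisor with a chip on $v'$ and on $w'$. Since $D$ satisfies $(v,w)$, removing one chip from $v$ and one from $w$ leaves something equivalent to an effective divisor; as $\deg D = 2$, this means $D\sim E$, and similarly $D\sim E'$, hence $E\sim E'$ \emph{on $G$}. What condition (ii) demands is that $E\sim E'$ already on the subgraph consisting of $C$ alone. The main obstacle is precisely this localization step: an equivalence on $G$ need not restrict to an equivalence on the subgraph $C$.

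To close that gap I would again invoke a cut/level-set argument. Consider a level set decomposition of the transformation $E\to E'$ in $G$. I claim every firing set $A_i$ appearing in it intersects $C$ in an arc (a connected sub-path of the cycle), or in all of $C$, or not at all — because a firing set that splits $C$ into two or more arcs would, at an endpoint of an arc that carries no chip, create an invalid crossing (each such vertex on $C$ has degree $2$ within $C$ and would lose a chip it does not have, unless chips are present, but $\deg E = 2$ severely limits where chips sit). From this one extracts that the restriction of the transformation to $C$ is itself a valid sequence of borrowings/firings on $C$, so $E$ and $E'$ are equivalent on $C$. Concretely I would mirror the reasoning used in Lemma \ref{lem:ruleC3safe}, where the bound $\deg(D_i)=2$ is combined with cut sizes to control which vertices can lie in a firing set. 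Assembling these pieces: condition (i) from the first paragraph and condition (ii) from the second and third, we conclude $\mathscr{C}_C$ is compatible. $\qed$
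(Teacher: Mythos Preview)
Your proposal is correct and follows essentially the same approach as the paper: for condition~(i) you both exploit the two internally disjoint $v$--$x$ paths in $C$ (together with $w\notin C$) to force two outgoing edges at $A_0$ that cannot both be paid for; for condition~(ii) you both use that $C$ is $2$-edge-connected so any valid firing set in the level set decomposition meets $C$ with at most two crossing edges, hence both chips stay on $C$ throughout and the transformation restricts to $C$. Your phrasing in terms of ``$A_i\cap C$ is a single arc'' is just the contrapositive of the paper's ``at least two chips on $C$'' count, and your reference to mirroring the cut-size reasoning of Lemma~\ref{lem:ruleC3safe} is exactly what the paper does.
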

\begin{proof}
We start by showing that the first property of a compatible constraint set holds. Let $(v,w)\in \mathscr{C}_C$ be a constraint and let $v\in C$ without loss of generality. We show that $w\in C$. Assume on the contrary that $w\notin C$, then let $D$ be the suitable divisor with $D(v)=D(w)=1$. Let $x$ be a vertex in $C$ with $x\neq v$. Let $D'\sim_\mathscr{C} D$ be a divisor with $D'(x)\geq 1$. Let $A_1$ be the first firing set of the level set decomposition of the transformation of $D$ into $D'$. Note that $v,w\in A_1$ and $x\notin A_1$. Note there are two disjoint paths from $v$ to $x$, since they are on the same cycle. This implies a chip will be sent along both these paths by $A_1$, but since $w\notin C$, both these chips must come from $v$. However, $v$ only has one chip, a contradiction. We conclude that $w\in C$.

For the second property, let $(v,w), (v', w')\in \mathscr{C}_C$ be two constraints on $C$. By our first property we have that $v,w,v',w'\in C$. Let $D$ be the divisor with $D(v)=D(w)=1$ and $D'$ the divisor with $D'(v')=D'(w')=1$. We know $D$ and $D'$ must be equivalent, since $G\in \mathcal{G}_2^d$ and both correspond to constraints on $G$. Let $A_1, \dots, A_k$ be the level set decomposition of the transformation of $D$ into $D'$. Note that $v,w\in A_1$ and $v', w'\notin A_1$. We observe that $v$ and $w$ split $C$ into two arcs. Note that both $v'$ and $w'$ must be on the same arc: if they are not on the same arc, there exists disjoint paths from $v$ to $v'$ and to $w'$ that do not contain $w$. This implies that $A_1$ sends two chips along these paths, but $v$ has only one chip. 

Now note that $C$ is biconnected, which implies that for a firing set $A$ with $w\in A$ and $w'\notin A$ to be valid there must be at least two chips on vertices in $C$. This follows since there are at least two edges crossing between $A$ and its complement $A^c$ in $C$. Since each of the firing sets $A_1, \ldots, A_k$ is valid, it follows this transformation leaves two chips on $C$ at each intermediate divisor. It follows that if we restrain these firing sets to $C$, we have a sequence of firing sets that transforms $D$ into $D'$ on $C$. Therefore $D$ and $D'$ are equivalent on $C$, so our second property is also fulfilled.
\end{proof}

\begin{lemma}[{Folklore, see e.g., \cite[Lemma 4]{BodlaenderK2011}}] \label{degreeLemma}
Let $G$ be a simple graph of treewidth 2 or lower and containing at least 4 vertices, then $G$ has at least two vertices with degree 2 or lower.
\end{lemma}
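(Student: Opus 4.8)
The plan is to use a tree decomposition of $G$ of width at most $2$ and to extract the two required low-degree vertices from two distinct leaves of the decomposition tree. The underlying observation is elementary: in such a decomposition every bag contains at most $3$ vertices, so a vertex that occurs in only one (leaf) bag has all of its \emph{distinct} neighbours inside that bag, and therefore degree at most $2$. This also handles the degenerate cases (forests, edgeless graphs) uniformly, so no separate treatment is needed.

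Concretely, I would first fix a tree decomposition $(T,\{B_t\}_{t\in V(T)})$ of $G$ of width at most $2$ using as few nodes of $T$ as possible. Minimality forces the decomposition to be \emph{reduced}: no edge $tt'$ of $T$ satisfies $B_t\subseteq B_{t'}$, since otherwise contracting that edge into $t'$ would give a valid width-$2$ decomposition with fewer nodes. Because $|V(G)|\ge 4$ while a single bag holds at most $3$ vertices, $T$ cannot be a single node, so $T$ has at least two nodes and hence at least two leaves; choose distinct leaves $t_1,t_2$ with unique $T$-neighbours $t_1',t_2'$. For each $i$, reducedness yields a vertex $v_i\in B_{t_i}\setminus B_{t_i'}$. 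Since $t_i$ is a leaf, every path in $T$ from $t_i$ to another node passes through $t_i'$; as the set of nodes whose bag contains $v_i$ induces a connected subtree of $T$, contains $t_i$, and avoids $t_i'$, it must equal $\{t_i\}$. Hence every edge of $G$ incident with $v_i$ has both endpoints in $B_{t_i}$, and, $G$ being simple, $\deg_G(v_i)\le |B_{t_i}|-1\le 2$. Finally $v_1\ne v_2$, because $v_1$ appears in no bag other than $B_{t_1}$ and $v_2$ in no bag other than $B_{t_2}$ while $t_1\ne t_2$; this produces the two desired vertices.

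I do not expect a real obstacle here: the whole argument rests on the subtree-connectivity axiom of tree decompositions, and the only things to be careful about are that a minimal decomposition is reduced, that the hypothesis $|V(G)|\ge 4$ is exactly what rules out a one-node tree (and so guarantees two leaves), and that the two extracted vertices are genuinely distinct. It is worth remarking that the bound is tight — for the fan graph on $n$ vertices exactly two vertices have degree $2$ — which is mirrored by the fact that a tree with at least two nodes can have as few as two leaves, and that simplicity of $G$ is essential, as it is used precisely in the inequality $\deg_G(v_i)\le|B_{t_i}|-1$. (An alternative via the $2$-degeneracy of treewidth-$2$ graphs and induction on $|V(G)|$ seems to require a strengthened induction hypothesis to control vertices that become degree $3$ after deleting a low-degree vertex, so I would prefer the tree-decomposition argument above.)
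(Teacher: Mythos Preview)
Your proof is correct and follows essentially the same approach as the paper's: take a minimal (reduced) tree decomposition of width at most $2$, and from each of two leaf bags extract a vertex that appears in no other bag, hence has all its neighbours inside that bag and so degree at most $2$. Your write-up is in fact more careful than the paper's, explicitly invoking the hypothesis $|V(G)|\ge 4$ to guarantee that the decomposition tree has at least two nodes (and thus two leaves) and explicitly verifying that the two extracted vertices $v_1$ and $v_2$ are distinct.
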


\begin{lemma} \label{ruleLemma}
Given a non-empty graph $G\in \mathcal{G}_2^d$ there is a rule in $\mathcal{R}^d$ that can be applied to $G$.
\end{lemma}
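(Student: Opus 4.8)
The plan is to argue that if no rule in $\mathcal{R}^d$ applies to a non-empty graph $G\in\mathcal{G}_2^d$, we reach a contradiction. First I would dispose of the small cases: if $G$ has one vertex, Rule~\ref{rul:de1} applies, and if $G$ has two vertices joined by a single edge, then since $G\in\mathcal{G}_2^d$ the constraint set must force $\mathscr{C}=\{(u,v)\}$ (a suitable divisor of degree $2$ on such a graph must have a chip on each vertex, and any other constraint would conflict), so Rule~\ref{rul:de2} applies. Hence we may assume $G$ has at least three vertices. By Lemma~\ref{DegreeOneLemma}, $G$ has no vertex of degree $1$, so every vertex has degree at least $2$.

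Next I would bring in the treewidth bound. Since $G\in\mathcal{G}_2^d$ has divisorial gonality $2$ or lower, its treewidth is at most $2$ (divisorial gonality bounds treewidth from above; this is a known fact the paper presumably cites or has available). The graph $G$ may have parallel edges, but collapsing each maximal set of parallel edges to a single edge gives a simple graph $G'$ of treewidth at most $2$ on the same vertex set. If $G'$ has at least $4$ vertices, Lemma~\ref{degreeLemma} gives two vertices of degree at most $2$ in $G'$; if $G'$ has exactly $3$ vertices it is a subgraph of the triangle and again has a vertex — in fact here I would just handle the $\le 3$-vertex simple graphs by hand. The upshot: $G$ contains a vertex $x$ whose neighbors in $G'$ number at most $2$, i.e.\ $x$ has at most two \emph{distinct} neighbors in $G$ while having degree at least $2$. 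Such an $x$ lies on a cycle $C$ (two parallel edges to one neighbor, or a triangle-like path through two neighbors). I would then walk along degree-$2$ vertices starting from $x$ to extract a cycle $C$ in $G$ all of whose vertices have degree $2$ except at most two of them; more carefully, I would choose $C$ to be a cycle meeting the "high-degree" vertices (degree $>2$) in as few vertices as possible — the two distinct-neighbor structure around $x$ guarantees such a cycle with at most two vertices of degree $>2$ exists (following from the degree-$2$ vertices forced by the treewidth argument and the absence of leaves).

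With such a cycle $C$ in hand, I would split into the three cases by the number of vertices of $C$ of degree $>2$. By Lemma~\ref{cycleLemma}, since $G\in\mathcal{G}_2^d$, the constraint set $\mathscr{C}_C$ on $C$ is compatible. If $C$ has no vertex of degree $>2$, then since $G$ is connected and has no leaves, $G=C$, and Rule~\ref{rul:dc1} applies. If $C$ has exactly one vertex $v$ of degree $>2$, I need $\mathscr{C}_C\cup\{(v,v)\}$ compatible: a suitable divisor, restricted to $C$, must be equivalent on $C$ to the divisor with two chips on $v$ (the chips can only enter the rest of $G$ through $v$, and Lemma~\ref{ConflictingLabelsLemma}-type reasoning forces the two-chips-on-$v$ configuration to be the relevant one), so Rule~\ref{rul:dc2} applies. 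If $C$ has exactly two vertices $v,u$ of degree $>2$, I must produce a path from $v$ to $u$ disjoint in edges from $C$ and check $\mathscr{C}_C\cup\{(v,u)\}$ is compatible: the disjoint path exists because $G$ is connected, has no leaves, and $C$ is a cycle with only these two "exits", so the rest of $G$ attaches to $C$ only at $v$ and $u$ and (being connected with minimum degree $2$) contains a $v$–$u$ path avoiding $C$'s edges; compatibility with $(v,u)$ follows again from Lemma~\ref{cycleLemma} and the observation that a suitable divisor restricted to $C$ is equivalent on $C$ to the one-chip-on-$v$, one-chip-on-$u$ divisor. Then Rule~\ref{rul:dc3} applies.

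The main obstacle I anticipate is the middle step: extracting, from the mere existence of a low-degree vertex, a cycle $C$ with \emph{at most two} vertices of degree $>2$, and simultaneously verifying that the extra constraint needed in Rules~\ref{rul:dc2}/\ref{rul:dc3} is indeed implied (i.e.\ that a suitable divisor's restriction to $C$ really is forced into the "all chips at the exit vertices" configuration). The cycle-extraction is essentially a structural claim about treewidth-$2$ graphs with minimum degree $2$ — one wants that such a graph either equals a cycle or has a cycle hanging off the rest of the graph through at most two cut-like vertices — and making this precise (perhaps via the block–cut tree, or via a minimal cycle through $x$ together with its two neighbors) is where the real work lies. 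The compatibility-of-the-extra-constraint part should follow by the same minimum-cut / level-set arguments already used in Lemma~\ref{cycleLemma} and Lemma~\ref{DegreeOneLemma}, so I expect it to be routine once the cycle is in hand.
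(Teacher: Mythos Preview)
Your outline shares the two anchors of the paper's argument --- Lemma~\ref{DegreeOneLemma} to kill degree-$1$ vertices and the treewidth bound from Lemma~\ref{treewidthBoundLemma} --- but the middle step you flag as ``where the real work lies'' is exactly where your plan diverges from the paper and where it currently has a genuine gap. You propose to locate a single low-degree vertex $x$ via Lemma~\ref{degreeLemma}, then ``walk along degree-$2$ vertices'' to produce a cycle $C$ with at most two vertices of degree~$>2$, and finally apply one of the cycle rules to that particular $C$. The problem is the two-high-degree case: your justification that an edge-disjoint $v$--$u$ path outside $C$ exists (``the rest of $G$ attaches to $C$ only at $v$ and $u$ and, being connected with minimum degree~$2$, contains a $v$--$u$ path'') is false in general. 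Take a cycle $C$ with two degree-$4$ vertices $v,u$, and glue a separate small cycle onto $v$ and another onto $u$; the resulting graph has minimum degree~$2$, your procedure might well select $C$, yet there is no $v$--$u$ path avoiding the edges of $C$, so Rule~\ref{rul:dc3} does not apply to $C$. A correct proof must either choose the cycle more carefully or argue globally.

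The paper sidesteps this by working one level up: it contracts every maximal path of degree-$2$ vertices to a single edge, obtaining a minor $H$ whose vertices all have degree~$\ge 3$. A loop in $H$ is precisely a cycle in $G$ with one high-degree vertex (Rule~\ref{rul:dc2}); a pair of parallel edges in $H$ whose removal does \emph{not} disconnect $H$ is precisely a cycle in $G$ with two high-degree vertices that \emph{does} have the required external path (Rule~\ref{rul:dc3}). If neither occurs, every parallel class is a separating $2$-edge-cut, and one extracts a simple minor $H'$ in which every vertex but at most one has degree~$\ge 3$; then Lemma~\ref{degreeLemma} forces $\tw(H')\ge 3$, contradicting $\dgon(G)\le 2$. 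Thus the paper never has to pick a cycle first and then hope the side-conditions hold; the contraction to $H$ makes the side-condition for Rule~\ref{rul:dc3} (the external path) equivalent to a connectivity test on $H$, and the final treewidth contradiction is what does the heavy lifting rather than a direct cycle hunt.
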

\begin{proof}
Let $G\in \mathcal{G}_2^d$ be such a graph and assume that no rule in $\mathcal{R}^d$ can be applied to $G$. By Lemma \ref{DegreeOneLemma} it follows that all vertices of $G$ have degree at least $2$. 
Consider the minor $H$ of $G$ created by contracting each path of degree $2$ vertices to an edge. Then any edge in $H$ was either created by contraction of a path of any number of vertices with degree $2$ in $G$ or it already was an edge in $G$. 

If $H$ contains a loop, there is a path of degree $2$ vertices in $G$ going from a degree $3$ or greater vertex to itself (since $G$ contains no loops). So this path plus the vertex it is attached to forms a cycle with exactly one vertex of degree $3$ or greater. Since we cannot apply Rule $\bm{C_2}$ to $G$, it follows that the constraints $\mathscr{C}_C$ are not compatible. This contradicts Lemma \ref{cycleLemma}.
Hence $H$ contains no loops.
 
Now we find a subgraph $H'$ of $H$ with no multiple edges. 
If $H$ contains no multiple edges, simply let $H'=H$. Otherwise let $v$ and $w$ be two vertices such that there are at least two edges between $v$ and $w$. Suppose that $v$ and $w$ are still connected to each other after removing two edges $e_1, e_2$ between them. The removed edges each represent a single edge or a path of degree $2$ vertices in $G$. Thus ${v,w}$ plus these paths form a cycle $C$ in $G$ with exactly two vertices of degree $3$ or greater, where there is also a path between $v$ and $w$ that does not share any edges with $C$. Since we cannot apply Rule $\bm{C_3}$ to $G$, it follows that the constraints $\mathscr{C}_C$ are not compatible. Again, this contradicts Lemma \ref{cycleLemma}. It follows that $G$ must be disconnected after removing $e_1$ and $e_2$. So any multiple edge in $H$ is a double edge, whose removal splits the graph in two connected components. Let $H'$ be the connected component of minimal size over all possible removals of a double edge in $H$. Note that $H'$ cannot contain any double edge, since this would imply a smaller connected component.

We now have a minor $H'$ of $G$, which is a simple graph since it has no loops or multiple edges. Also, each vertex of $H'$ has degree at least $3$ with at most one exception, namely the vertex that was incident to the two parallel edges that were removed to obtain $H'$.  Since a graph with treewidth at most two has at least two vertices of degree at most two, it follows by Lemma \ref{degreeLemma} that $\tw(H')\geq 3$. Since treewidth is closed under taking minors we get $\tw(G) \geq 3$. But then, since treewidth is a lower bound \cite{Gijs}, it follows that $\dgon(G) \geq 3$, creating a contradiction, since $G\in \mathcal{G}_2^d$. We conclude that our assumption must be wrong and there must be a rule in $\mathcal{R}^d$ that can be applied to $G$.
\end{proof}

Now we have everything required to prove our main theorem:

\begin{proof}[Proof of Theorem \ref{mainTheorem}]
By Lemma \ref{lem:divSafe} we have that $\mathcal{R}^d$ is safe. It remains to prove that $\mathcal{R}^d$ is also complete.

Assume that $G\in \mathcal{G}_2^d$. By Lemma \ref{ruleLemma} and Lemma \ref{lem:divSafe} we can keep applying rules from $\mathcal{R}^d$ to $G$ as long as $G$ has not been turned into the empty graph yet. Observe that each rule removes at least one vertex or at least two edges, while never adding more vertices or edges. Since $G$ is finite, rules from $\mathcal{R}^d$ can only be applied a finite number of times. When no more rules can be applied, it follows that the graph has been reduced to the empty graph. Therefore $\mathcal{R}^d$ is complete. 
\end{proof}

Hence $\mathcal{R}^d$ has the properties we want it to have so that we are able to use it for characterization of the graphs with divisorial gonality two or lower.

\section{Reduction Rules for Stable Gonality}
\label{section:stablegonalityrules}

In this section, we give a complete set of safe reduction rules to recognize stable hyperelliptic graphs, i.e.\ graphs with stable gonality 2. We will first introduce the constraints for stable gonality and then we will state all rules. Next we will show that all rules are safe for graphs with stable gonality at most 2 and that those graphs can be reduced to the empty graph. It is not hard to see that the set of rules implies a polynomial time algorithm to test if a graph has stable gonality at most 2; in Section~\ref{section:algorithms}, we discuss how we can obtain an algorithm with a running time of $O(m+n \log n)$.

\subsection*{Constraints for Stable Gonality}

For a given graph $G$, we want to know whether there exists a finite harmonic morphism of degree 2 from a refinement of $G$ to a tree. We will do this by reducing $G$ to the empty graph. During this process we sometimes add constraints to our graph. The set of constraints gives restrictions to which morphisms we allow. 
\begin{definition}
Let $G$ be a graph, $G'$ a refinement of $G$, $T$ a tree. Let $\phi\colon G' \to T$ be a map. We call $\phi$ a \emph{suitable morphism} if it is a finite harmonic morphism of degree 2 and it satisfies the following conditions.
\begin{enumerate} 
\item[\textup{(i)}] For all pairs $(v,v) \in \mathscr{C}$ it holds that $m_\phi(v) = 2$.
\item[\textup{(ii)}] For all pairs $(u,v) \in \mathscr{C}$ with $u \neq v$ it holds that $\phi(u) = \phi(v)$ and $m_\phi(u) = m_\phi(v) = 1$.
\end{enumerate}
\end{definition}

We say that a graph with constraints has stable gonality at most 2 if there exists a suitable morphism from a refinement of $G$ to a tree. Let $\mathcal{G}_2^s$ be the class of graphs with constraints that have stable gonality at most 2. We define the empty graph to have stable gonality 0 and thus $\emptyset \in \mathcal{G}_2^s$.

We will denote the set of constraints that contain a vertex $v$ by $\mathscr{C}_v$.

\subsection*{Reduction rules}
\label{subsec:rules}
	
We will now state all rules. Figure \ref{fig:rules} shows all rules in pictures, constraints are showed as green dashed edges. Sometimes it is convenient to think of constraints as an extra set of edges, from now on we will refer to the constraints as green edges.
We apply those rules to a given graph $G$ with an empty set of constraints. When a rule adds a constraint $uv$, and there already exists such a constraint, then the set of constraints does not change.

\begin{figure}[t]
		\centering
		\begin{tabular}{|rr|}
        \hline
			\multicolumn{1}{|l|}{\small{Rule \ref{rul:leaf}}} & \multicolumn{1}{l|}{\small{Rule \ref{rul:leafGreenLoop}}} \\
			\multicolumn{1}{|r|}{\begin{tikzpicture}
	\cirkelsEnPijl
	\node[vertexx] (u) at (1.25,0) {};
	\node[vertexx] (v) at (0,0) {};
	\draw[edge] (u)--(v);
	\node[vertexx] (u') at (4.25,0) {};
\end{tikzpicture} 
			} & \multicolumn{1}{r|}{\begin{tikzpicture}
	\cirkelsEnPijl
	\lus{groen}{0}{0}
	\lus{groen}{4.25}{0}
	\node[vertexx] (u) at (1.25,0) {};
	\node[vertexx] (v) at (0,0) {};
	\node[vertexx] (u') at (4.25,0) {};
	\draw[edge] (u)--(v);
\end{tikzpicture} } \\
			\hline
            \multicolumn{1}{|l}{\small{Rule \ref{rul:degree2}}} & \\
			\multicolumn{1}{|r}{\begin{tikzpicture}
	\cirkelsEnPijl
	\lus{edge}{4.25}{0}
	\node[vertexx] (w) at (1.25,0) {};
	\node[vertexx] (v) at (0,0) {};
	\draw[edge] (v) to [relative, out=40, in=140] (w);
	\draw[edge] (w) to [relative, out=40, in=140] (v);
	\node[vertexx] (u') at (4.25,0) {};
\end{tikzpicture} } & \multicolumn{1}{r|}{\begin{tikzpicture}
	\cirkelsEnPijl
	\node[vertexx] (w1) at (1.25,.375) {};
	\node[vertexx] (w2) at (1.25,-.375) {};
	\node[vertexx] (v) at (0,0) {};
	\draw[edge] (v)--(w1) (w2) -- (v);
	\node[vertexx] (u') at (4.25,0.375) {};
	\node[vertexx] (v') at (4.25,-.375) {};
	\draw[edge] (u') -- (v');
\end{tikzpicture} } \\
			\hline
            \multicolumn{1}{|l}{\small{Rule \ref{rul:twoleaves}}} & \\
			\multicolumn{1}{|r}{\begin{tikzpicture}
	\cirkelsEnPijl
	\lus{groen}{4.25}{0}
	\node[vertexx] (w) at (1.25,0) {};
	\node[vertexx] (u) at (0,.375) {};
	\node[vertexx] (v) at (0,-.375) {};
	\draw[edge] (u)--(w) -- (v);
	\draw[groen] (u) --(v);
	\node[vertexx] (u') at (4.25,0) {};
\end{tikzpicture} } & \multicolumn{1}{r|}{\begin{tikzpicture}
	\cirkelsEnPijl
	\node[vertexx] (w1) at (1.25,.375) {};
	\node[vertexx] (w2) at (1.25,-.375) {};
	\node[vertexx] (u) at (0,.375) {};
	\node[vertexx] (v) at (0,-.375) {};
	\draw[edge] (u)--(w1) (w2) -- (v);
	\draw[groen] (u) --(v);
	\node[vertexx] (u') at (4.25,0.375) {};
	\node[vertexx] (v') at (4.25,-.375) {};
	\draw[groen] (u') -- (v');
\end{tikzpicture} } \\
			\hline
            \multicolumn{1}{|l}{\small{Rule \ref{rul:degree2GreenLoop}}} & \\
			\multicolumn{1}{|r}{\begin{tikzpicture}
	\cirkelsEnPijl
	\lus{groen}{4.25}{0}
	\lus{groen}{0}{0}
	\node[vertexx] (w) at (1.25,0) {};
	\node[vertexx] (v) at (0,0) {};
	\draw[edge] (v) to [relative, out=40, in=140] (w);
	\draw[edge] (w) to [relative, out=40, in=140] (v);
	\node[vertexx] (u') at (4.25,0) {};
\end{tikzpicture} } & \multicolumn{1}{r|}{\begin{tikzpicture}
	\cirkelsEnPijl
	\lus{groen}{0}{0}
	\node[vertexx] (w1) at (1.25,.375) {};
	\node[vertexx] (w2) at (1.25,-.375) {};
	\node[vertexx] (v) at (0,0) {};
	\draw[edge] (v)--(w1) (w2) -- (v);
	\node[vertexx] (u') at (4.25,0.375) {};
	\node[vertexx] (v') at (4.25,-.375) {};
	\draw[groen] (u') -- (v');
	\draw[dotted, rounded corners=10] (w2) -- (2,-.5) -- (1.75,.5)  -- (w1);
	\draw[dotted, rounded corners=10] (v') -- (5,-.5) -- (4.75,.5)  -- (u');
\end{tikzpicture} }  \\
			\hline
            \multicolumn{1}{|l|}{\small{Rule \ref{rul:loops}}} & \multicolumn{1}{l|}{\small{Rule \ref{rul:parallelToGreenEdge}}} \\
			\multicolumn{1}{|r|}{\begin{tikzpicture}
	\cirkelsEnPijl
	\lus{edge}{1.25}{0}
	\lus{groen}{4.25}{0}
	\node[vertexx] (u) at (1.25,0) {};
	\node[vertexx] (u') at (4.25,0) {};
\end{tikzpicture} } & \multicolumn{1}{r|}{\begin{tikzpicture}
	\cirkelsEnPijl
	\node[vertexx] (u) at (1.25,.375) {};
	\node[vertexx] (v) at (1.25,-.375) {};
	\node[vertexx] (u') at (4.25,0.375) {};
	\node[vertexx] (v') at (4.25,-.375) {};
	\draw[groen] (u) to [relative, out=40, in=140] (v);
	\draw[edge] (v) to [relative, out=40, in=140] (u);
	\draw[groen] (u') -- (v');
\end{tikzpicture} } \\
			\hline
            \multicolumn{1}{|l|}{\small{Rule \ref{rul:twoEdgesAndPath}}} & \multicolumn{1}{l|}{\small{Rule \ref{rul:sgonEnd1}}} \\
			\multicolumn{1}{|r|}{\begin{tikzpicture}
	\cirkelsEnPijl
	\node[vertexx] (u) at (1.25,.375) {};
	\node[vertexx] (v) at (1.25,-.375) {};
	\node[vertexx] (u') at (4.25,0.375) {};
	\node[vertexx] (v') at (4.25,-.375) {};
	\draw[edge] (u) to [relative, out=40, in=140] (v);
	\draw[edge] (v) to [relative, out=40, in=140] (u);
	\draw[dotted, rounded corners=10] (v) -- (2,-.5) -- (1.75,.5)  -- (u);
	\draw[groen] (u') -- (v');
	\draw[dotted, rounded corners=10] (v') -- (5,-.5) -- (4.75,.5)  -- (u');
\end{tikzpicture} } & \multicolumn{1}{r|}{\begin{tikzpicture}
	\alleenPijl
	\node[vertexx] (u) at (1.25,0) {};
    \draw (4.25,0) circle (.2);
    \draw (4.45,0.2) -- (4.05, -0.2);
\end{tikzpicture} 
			} \\
            \hline 
            \multicolumn{1}{|l|}{\small{Rule \ref{rul:sgonEnd2}}} & \multicolumn{1}{l|}{\small{Rule \ref{rul:sgonEnd3}}} \\
			\multicolumn{1}{|r|}{\begin{tikzpicture}
	\alleenPijlKlein
    \lus{groen}{1.25}{0}
	\node[vertexx] (u) at (1.25,0) {};
	\draw (4.25,0) circle (.2);
    \draw (4.45,0.2) -- (4.05, -0.2);
\end{tikzpicture} 
			} & \multicolumn{1}{r|}{\begin{tikzpicture}
	\alleenPijlKlein
    \node[vertexx] (v) at (0,0) {};
	\node[vertexx] (u) at (1.25,0) {};
    \draw[groen] (u) -- (v);
	\draw (4.25,0) circle (.2);
    \draw (4.45,0.2) -- (4.05, -0.2);
\end{tikzpicture} 
			} \\
            \hline
        \end{tabular}
		\caption{The reduction rules for recognizing stable hyperelliptic graphs. }  
		\label{fig:rules}
	\end{figure}

\begin{customrule}{$\bm{T_1^s}$}\label{rul:leaf}
Let $v$ be a leaf with $\mathscr{C}_v = \emptyset$. Let $u$ be the neighbor of $v$. Contract the edge $uv$. 
\end{customrule}

\begin{customrule}{$\bm{T_2^s}$} \label{rul:leafGreenLoop}
Let $v$ be a leaf with $\mathscr{C}_v = \{(v,v)\}$. Let $u$ be the neighbor of $v$. Contract the edge $uv$. 
\end{customrule}

\begin{customrule}{$\bm{S_1^s}$} \label{rul:degree2}
Let $v$ be a vertex of degree 2 with $\mathscr{C}_v = \emptyset$. Let $u_1$, $u_2$ be the neighbors of $v$ (possibly $u_1 = u_2$). Contract the edge $u_1v$. 
\end{customrule}

\begin{customrule}{$\bm{T_3}^s$}\label{rul:twoleaves}
Let $G$ be a graph where every leaf and every degree 2 vertex is incident to a green edge. Let $v_1$ and $v_2$ be two leaves that are connected by a green edge. Let $u_1$ and $u_2$ be their neighbors (possibly $u_1 = u_2$). Contract the edges $u_1v_1$ and $u_2v_2$. 
\end{customrule}

\begin{customrule}{$\bm{S_2^s}$} \label{rul:degree2GreenLoop}
Let $G$ be a graph where every leaf and every degree 2 vertex is incident to a green edge. Let $v$ be a vertex of degree 2 with a green loop, such that there exists a path from $v$ to $v$ in $G$ (possibly containing green edges). Let $u_1$ and $u_2$ be the neighbors of $v$ (possibly $u_1 = u_2$). Remove $v$ and connect $u_1$ and $u_2$ with a green edge. 
\end{customrule}

\begin{customrule}{$\bm{L^s}$} \label{rul:loops}
Let $v$ be a vertex with a loop. Remove the loop from $v$ and add a green loop to $v$.  
\end{customrule}

\begin{customrule}{$\bm{P_1^s}$} \label{rul:parallelToGreenEdge}
Let $uv$ be an edge such that there also exists a green edge $uv$. Remove the black edge $uv$.   
\end{customrule}

\begin{customrule}{$\bm{P_2^s}$} \label{rul:twoEdgesAndPath}
Let $u,v$ be vertices, such that $|E(u,v)|>1$. Let $e$ and $f$ be two of those edges. If there exists another path, possibly containing green edges, from $u$ to $v$, then remove $e$ and $f$ and add a green edge from $u$ to $v$. 
\end{customrule}

\begin{customrule}{$\bm{E_1^s}$} \label{rul:sgonEnd1}
Let $G$ be the graph consisting of a single vertex $v$ with $\mathscr{C}_v = \emptyset$. Remove $v$. 
\end{customrule}

\begin{customrule}{$\bm{E_2^s}$} \label{rul:sgonEnd2}
Let $G$ be the graph consisting of a single vertex $v$ with a green loop. Remove $v$. 
\end{customrule}

\begin{customrule}{$\bm{E_3^s}$} \label{rul:sgonEnd3}
Let $G$ be the graph consisting of a two vertices $u$ and $v$ that are connected by a green edge. Remove $u$ and $v$. 
\end{customrule}

We will write $\mathcal{R}^s$ for this set of reduction rules. We can now state the main theorem; in the next sections we will prove this theorem. 

\begin{theorem} \label{thm:sgonSafeComplete}
The set of rules $\mathcal{R}^s$ is safe and complete for $\mathcal{G}_2^s$. 
\end{theorem}

\subsection*{Safeness}
    
Now we will prove that the rules $\mathcal{R}^s$ are safe for $\mathcal{G}_2^s$, i.e., if $G$ a is graph, and $H$ is obtained from $G$ by applying one of the rules, then $\sgon(G) \leq 2$ if and only if $\sgon(H) \leq 2$. In all proofs we assume that the original graph is called $G$ and the graph obtained by applying a rule is called $H$.

\begin{lemma} \label{thm:firstRule}
Rule \ref{rul:leaf} is safe. 
\end{lemma}

\begin{proof}
Let $v$ be the leaf in $G$ to which the rule is applied. 

Suppose that $\sgon(G) \leq 2$. Since $G$ is a refinement of $H$, it is clear that $\sgon(H) \leq 2$. 
		
Suppose that $\sgon(H) \leq 2$. Then there exists a refinement $H'$ of $H$ and a suitable morphism $\phi\colon H' \to T$. Write $u$ for the neighbor of $v$ in $G$. We distinguish two cases. 
		
Suppose that $m_\phi(u) = 2$. Then add a leaf $v$ to $u$ in $H'$ to obtain $G'$. Now we see that $G'$ is a refinement of $G$. Give the edge $uv$ index $r_{\phi'}(uv) = 2$, and give all other edges $e$ index $r_{\phi'}(e) = r_\phi(e)$. Add a leaf $v'$ to $\phi(u)$ in $T$ to obtain $T'$. Then we can extend $\phi$ to $\phi'\colon G' \to T'$,
\begin{align*}
	\phi'(x) = \begin{cases}
	\phi(x) & \text{if } x \in H'\\
	v'& \text{if } x = v. 
	\end{cases}
\end{align*} 
It is clear that $\phi'$ is a suitable morphism, so we conclude that $\sgon(G) \leq 2$. 
		
Suppose that $m_\phi(u) = 1$. Let $w$ be the other vertex such that $\phi(w) = \phi(u)$. Then add leaves $v_1$ and $v_2$ to $u$ and $w$ in $H'$ to obtain $G'$. We see that $G'$ is a refinement of $G$. Give the edges $uv_1$ and $wv_2$ indices $r_{\phi'}(uv_1) = r_{\phi'}(wv_2) = 1$, and give all other edges $e$ index $r_{\phi'}(e) = r_{\phi}(e)$. Add a leaf $v'$ to $\phi(u)$ in $T$ to obtain $T'$. Then we can extend $\phi$ to $\phi'\colon G' \to T'$,
\begin{align*}
	\phi'(x) = \begin{cases}
	\phi(x) & \text{if } x \in H'\\
	v'& \text{if } x = v_1, v_2. 
	\end{cases}
\end{align*} 
It is clear that $\phi'$ is a suitable morphism, so we conclude that $\sgon(G) \leq 2$. 
\end{proof}

\begin{lemma} \label{lem:leafGreenLoop}
Rule \ref{rul:leafGreenLoop} is safe. 
\end{lemma}

\begin{proof}
Let $v$ be the vertex in $G$ to which the rule is applied. 

Suppose that $\sgon(G) \leq 2$. Then there exists a refinement $G'$ of $G$ and a suitable morphism $\phi\colon G' \to T$. Let $u$ be the neighbor of $v$ in $G$. We distinguish two cases: 
		
Suppose that $m_\phi(u) = 2$. Define $H'$ as the graph $G'$ with a green loop at vertex $u$ and without the green loop at $v$, then $H'$ is a refinement of $H$. Now we see that $\phi\colon H'\to T$ is a suitable morphism, so $\sgon(H) \leq 2$.  
		
\begin{figure}
		\centering
		\begin{tikzpicture}
		\draw[lightgray] (.6,0) circle [x radius=1, y radius=.4];
		\draw[lightgray] (.6,-1) circle [x radius=1, y radius=.4];
		\lus{groen}{-3}{0}
		\node[vertexx, label=$v$] (v) at (-3, 0) {};
		\node[vertexx, label=$u$] (u) at (0, 0) {};
		\node[addedd, label=$v_1$] (a) at (-2,0) {};
		\node[addedd, label=$v_2$] (b) at (-1,0) {};
		\node[addedd, label=below:$x_1$] (c) at (-1,-1) {};
		\node[addedd, label=below:$x$] (x) at (0,-1) {};
		\draw[edge] (u) -- (b) -- (a) -- (v);
		\draw[edge] (a) -- (c) --  (x);
		\node[vertex, label=$w$] (w) at (.4,.25) {};
		\node[vertex, label=below:$y$] (y) at (.4,-.75) {}; 
		\draw[edge] (u) -- (w);
		\draw[edge] (x) -- (y);
		\end{tikzpicture}
		\caption{Proof of Lemma \ref{lem:leafGreenLoop}.} \label{fig:bewijsRule2}
	\end{figure}
		
Suppose that $m_\phi(u) = 1$. Let $v_0 = v, v_1, \ldots, v_k=u$ be the vertices that are added to the edge $uv$ of $G$. Let $i$ be the largest integer such that $m_\phi(v_i) = 2$. Notice that $i<k$. Then there exists another vertex $x_1$ in $G'$ such that $\phi(v_{i+1}) = \phi(x_1)$. If $v_{i+1} \neq u$, it follows that there is an edge $x_1x_2$ that is mapped to $\phi(v_{i+1}v_{i+2})$. And since $m_\phi(v_{i+2}) = 1$, we see that $x_2 \neq v_{i+2}$. It follows that there exists $x_1\neq v_{i+1}$, $\ldots$, $x_{k-i} \neq v_{k}$ such that $\phi(v_{i+j}) = \phi(x_j)$. Write $x = x_{k-i}$, then $\phi(x) = \phi(u)$ and $m_\phi(u) = m_\phi(x) = 1$. See Figure \ref{fig:bewijsRule2} for an illustration of this. 
		
Notice that $x$ is an external added vertex. Let $w$ be a neighbor of $u$ not equal to $v_{k-1}$. Then we see that there exists an vertex $y$ such that $\phi(uw) = \phi(xy)$. Since $x$ is an external added vertex, we see that $w\neq y$. We conclude that $m_\phi(w) = 1$. Inductively we see that for every vertex $w'$ in $G_{v_i}(v_{i+1})\backslash\{v_i\}$ it holds that $m_\phi(w') = 1$. Define $H'$ as $G_{v_i}(v_{i+1})\backslash\{v_i\}$, with a green loop at vertex $u$. Notice that $H'$ is a refinement of $H$. Now we can restrict $\phi$ to $H'$ and give every edge index $r_{\phi'}(e) = 2$ to obtain a suitable morphism: $\phi'\colon H' \to T'$, where $T' = \phi(G_{v_i}(v_{i+1})\backslash\{v_i\})$. We conclude that $\sgon(H) \leq 2$. 
		
Suppose that $\sgon(H) \leq 2$. Then there exists a refinement $H'$ of $H$ and a suitable morphism $\phi\colon H' \to T$. Write $u$ for the neighbor of $v$ in $G$. We know that $m_\phi(u) = 2$. Then add a leaf with a green loop to $u$ in $H'$ to obtain $G'$. Now we see that $G'$ is a refinement of $G$. Give the edge $uv$ index $r_{\phi'}(uv) = 2$, and give all other edges $e$ index $r_{\phi'}(e) = r_{\phi}(e)$. Add a leaf $v'$ to $\phi(u)$ in $T$ to obtain $T'$. Then we can extend $\phi$ to $\phi'\colon G' \to T'$,
\begin{align*}
	\phi'(x) = \begin{cases}
	\phi(x) & \text{if } x \in H'\\
	v'& \text{if } x = v. 
	\end{cases}
\end{align*} 
It is clear that $\phi'$ is a suitable morphism, so we conclude that $\sgon(G) \leq 2$. 
\end{proof}

\begin{lemma}
Rule \ref{rul:degree2} is safe. 
\end{lemma}

\begin{proof}
Let $v$ be the vertex in $G$ to which the rule is applied. 

Suppose that $\sgon(G) \leq 2$. Let $G'$ be a refinement of $G$ such that there exists a suitable morphism $\phi\colon G'\to T$. Since $G'$ is a refinement of $H$ too, it is clear that $\sgon(H) \leq 2$. 
		
Now suppose that $\sgon(H) \leq 2$. Let $H'$ be a refinement of $H$ such that there exists a suitable morphism $\phi\colon H'\to T$. Write $u_1$ and $u_2$ for the neighbors of $v$ in $G$. We distinguish two cases. 
Suppose that $u_1 = u_2$. It follows that the edge $u_1u_2$ is subdivided in $H'$, thus $H'$ is a refinement of $G'$ too. We conclude that $\sgon(G) \leq 2$. 
Suppose that $u_1 \neq u_2$. If the edge $u_1u_2$ is subdivided, we see again that $H'$ is a refinement of $G$, and $\sgon(G) \leq 2$. So suppose that $u_1$ and $u_2$ are neighbors in $H'$. Again, we distinguish two cases. 

If $r(u_1u_2) = 2$, then add a vertex $v$ on the edge $u_1u_2$ to obtain a graph $G'$. Notice that $G'$ is a refinement of $G$. Give the edges $u_1v$ and $vu_2$ index 2, and give all other edges $e$ index $r_{\phi'}(e) = r_{\phi}(e)$. And add a vertex $v'$ on the edge $\phi(u_1)\phi(u_2)$ in $T$ to obtain $T'$. Now we see that $\phi'\colon G' \to T'$ given by 
\begin{align*}
	\phi'(x) = \begin{cases}
	\phi(x) & \text{if } x \in H',\\
	v'& \text{if } x = v, 
	\end{cases}
\end{align*} 
is a suitable morphism. We conclude that $\sgon(G) \leq 2$. 

If $r(u_1u_2) = 1$, then there exists another edge $w_1w_2$ such that $\phi(u_1u_2) = \phi(w_1w_2)$. Now add a vertex $v_1$ on the edge $u_1u_2$  and a vertex $v_2$ on the edge $w_1w_2$ to obtain a graph $G'$. Notice that $G'$ is a refinement of $G$. Give the edges $u_1v_1$, $v_1u_2$, $w_1v_2$ and $v_2w_2$ index 1, and give all other edges $e$ index $r_{\phi'}(e) = r_{\phi}(e)$. Add a vertex $v'$ on the edge $\phi(u_1)\phi(u_2)$ in $T$ to obtain $T'$. Now we see that $\phi'\colon G' \to T'$ given by 
\begin{align*}
	\phi'(x) = \begin{cases}
	\phi(x) & \text{if } x \in H',\\
	v'& \text{if } x = v_1, v_2, 
	\end{cases}
\end{align*}  is a suitable morphism. We conclude that $\sgon(G) \leq 2$.
\end{proof}

\begin{lemma}\label{lem:twoleaves}
Rule \ref{rul:twoleaves} is safe.
\end{lemma} 

\begin{proof}
Let $v_1$ and $v_2$ be the vertices in $G$ to which the rule is applied. 

\begin{figure}
	\centering
	\begin{subfigure}{0.3 \textwidth}
		\begin{tikzpicture}
		\draw[lightgray] (3.7,0) circle [x radius=1, y radius=.3];
		\draw[lightgray] (3.7,-.75) circle [x radius=1, y radius=.3];
		\node[vertexx, label=$v_1$] (v1) at (0, 0) {};
		\node[addedd, label=$a_1$] (a1) at (1,0) {};
		\node[addedd, label=$a_2$] (a2) at (2,0) {};
		\node[vertexx, label=$u_1$] (u1) at (3, 0) {};
		\node[vertexx, label=below:$v_2$] (v2) at (0, -.75) {};
		\node[addedd, label=below:$b_1$] (b1) at (.75,-.75) {};
		\node[addedd, label=below:$b_2$] (b2) at (1.5,-.75) {};
		\node[addedd, label=below:$b_3$] (b3) at (2.25, -.75) {};
		\node[vertexx, label=below:$u_2$] (u2) at (3, -.75) {};
		\draw[edge] (v1) -- (a1) -- (a2) -- (u1);
		\draw[edge] (v2) -- (b1) -- (b2) -- (b3) -- (u2);
		\draw[groen] (v1) -- (v2);
		\node[addedd, label=below:$c$] (c) at (.4,-1.3) {}; 
		\draw[edge] (v2) -- (c);
		\end{tikzpicture} 
	\caption{Case 1.}
    \label{fig:bewijsRule4a}
    \end{subfigure} \hspace{0.1\textwidth}
    \begin{subfigure}{0.3\textwidth}
    	\begin{tikzpicture}
		\draw[lightgray] (3.25,-.375) circle (.75);
		\node[vertexx, label=$v_1$] (v1) at (0, 0) {};
		\node[addedd, label=$a_1$] (a1) at (1,0) {};
		\node[addedd, label=$a_2$] (a2) at (2,0) {};
		\node[vertexx, label=$u_1$] (u1) at (3, 0) {};
		\node[vertexx, label=below:$v_2$] (v2) at (0, -.75) {};
		\node[addedd, label=below:$b_1$] (b1) at (.75,-.75) {};
		\node[addedd, label=below:$b_2$] (b2) at (1.5,-.75) {};
		\node[addedd, label=below:$b_3$] (b3) at (2.25, -.75) {};
		\node[vertexx, label=below:$u_2$] (u2) at (3, -.75) {};
		\draw[edge] (v1) -- (a1) -- (a2) -- (u1);
		\draw[edge] (v2) -- (b1) -- (b2) -- (b3) -- (u2);
		\draw[groen] (v1) -- (v2);
		\node[vertexx, label=below:$x$] (x) at (3.6,-.2) {}; 
		\node[addedd, label=below:$w$] (w) at (2.65,-1.3) {}; 
		\draw[edge] (u1) -- (x);
		\draw[edge] (b3) -- (w);
		\end{tikzpicture}
	\caption{Case 2.}
    \label{fig:bewijsRule4b}
	\end{subfigure}
	\caption{Proof of Lemma \ref{lem:twoleaves}.} 
\end{figure}
``$\Longrightarrow$'': Suppose that $\sgon(G) \leq 2$. Let $G'$ be a minimum refinement of $G$ such that there exists a suitable morphism $\phi\colon G' \to T$, i.e.\ for every refinement $G''$ with less vertices than $G'$ there is no suitable morphism $\phi'\colon G'' \to T'$ for any tree $T'$.  Let $u_1$ and $u_2$ be the neighbors of $v_1$ and $v_2$ in $G$. We distinguish three cases. 
		
Case 1: Suppose that $u_1 \neq u_2$, and that there does not exist a path from $v_1$ to $v_2$ in black and green edges, except the green edge $v_1v_2$. Let $a_0 = v_1, a_1, \ldots, a_k = u_1$ be the subdivision of the edge $u_1v_1$ and $b_0 = v_2, b_1, \ldots, b_l = u_2$ the subdivision of the edge $u_2v_2$. (See Figure \ref{fig:bewijsRule4a}.) We know that there exists an edge $v_2c$ such that $\phi(a_0a_1) = \phi(v_2c)$. It is clear that $c \neq a_1$, thus $m_\phi(a_1) = 1$. Inductively we find that for every vertex $a'$ in $G'_{a_0}(a_1)$ it holds that $m_\phi(a') = 1$. We conclude that $G'_{a_0}(a_1)$ is a tree. Analogously we find that $G'_{b_0}(b_1)$ is a tree. Thus $G_{v_1}(u_1)$ and $G_{v_2}(u_2)$ are trees. Thus $H$ consists of two black trees connected by a green edge. 
		
Now we can construct a refinement $H'$ of $H$, a tree $T'$ and a suitable morphism $\phi'\colon H' \to T'$. Copy every branch of $u_1$ and add them to $u_2$ and copy every branch of $u_2$ and add them to $u_1$. Write $H'$ for this graph. Now we see that the two trees of $H'$ are the same, say $T'$. Now we can define $\phi'\colon H' \to T'$ as the identity map on each of the components, where $\phi'(u_1) = \phi'(u_2)$. Thus $\phi'$ is a suitable morphism. We conclude that $\sgon(H) \leq 2$. 
		
Case 2: Suppose that $u_1 \neq u_2$ and that there exists a path (possibly containing green edges) from $v_1$ to $v_2$. Assume that $\phi(u_1) \neq \phi(u_2)$. Let $a_0 = v_1, a_1, \ldots, v_k=u_1$ be the added vertices on the edge $v_1u_1$ and let $b_0 = v_2, b_1, \ldots, b_l=u_2$ be the added vertices on the edge $v_2u_2$. Assume without loss of generality that $k \leq l$. (See Figure \ref{fig:bewijsRule4b}.) It is clear that all vertices $a_0, \ldots, a_k, b_0, \ldots, b_l$ lie on the path from $v_1$ to $v_2$. 
Suppose that $\phi(a_1) \neq \phi(b_1)$. The path from $a_1$ to $b_1$ is mapped to a walk from $\phi(a_1)$ to $\phi(b_1)$. It is clear that $\phi(v_1)$ is contained in this walk, so there is a vertex $x$ in the path from $a_1$ to $b_1$ that is mapped to $\phi(v_1)$. This yields a contradiction. Thus $\phi(a_1) = \phi(b_1)$. Inductively we find that $\phi(a_i) = \phi(b_i)$ for all $i\leq k$. We conclude that $\phi(b_k) = \phi(u_1)$. Notice that $b_k \neq u_2$, since we assumed $\phi(u_1) \neq \phi(u_2)$. It follows by Lemma \ref{lem:morphism-degree} that $\deg(b_k) = \deg(u_1)$. We again distinguish two cases. 
		
Suppose that $\deg(u_1) > 2$. Then $b_k$ has an external added neighbor $w$. We see that $u_1$ has a neighbor $x$ such that $\phi(b_kw) = \phi(u_1x)$. Since $w$ is an external added vertex, it follows that $w \neq x$. Thus $m_\phi(w) = 1$. (See Figure \ref{fig:bewijsRule4b}.) Iteratively we see that for every vertex $w'$ in $G'_{b_k}(w)$, it holds that $m_\phi(w') = 1$. Notice that $G'_{b_k}(w)$ is a tree, since $w$ is an external added vertex. Now let $y$ be a leaf in $G'_{b_k}(w)$, and let $y'$ be such that $\phi(y) = \phi(y')$. Then $y'$ is a leaf. It is clear that $y'$ has no green edge incident to it, thus $y'$ is an added vertex. We conclude that we can remove $y$ and $y'$ from $G'$ and $\phi(y)$ from $T$ and still have a suitable morphism. This yield a contradiction with the minimality of $G'$. 
		
Suppose that $\deg(u_1) = 2$ in $G'$. Then the degree of $u_1$ in $G$ is also 2. It follows that $\mathscr{C}_{u_1} \neq \emptyset$. Let $u_1c$ be a green edge. If $c=u_1$, so if $u_1$ has a green loop, then $m_\phi(u_1) = 2$. This yields a contradiction. It is clear that $c\neq b_l$, since $b_l$ is an added vertex. It follows that there are 3 distinct vertices that are mapped to $\phi(u_1)$. This yields a contradiction. 
		
Altogether we conclude that $\phi(u_1) = \phi(u_2)$. Define $H'$ as $G'$ with a green edge $u_1u_2$. Now $H'$ is a refinement of $H$, and $\phi\colon H' \to T$ is a suitable morphism. We conclude that $\sgon(H) \leq 2$. 

Case 3: Suppose that $u_1 = u_2$. Analogous to the second case, we prove that $m_\phi(u_1) = 2$. Define $H'$ as $G'$ with a green loop at vertex $u_1$. Now $H'$ is a refinement of $H$, and $\phi\colon H' \to T$ is a suitable morphism. We conclude that $\sgon{H} \leq 2$. 

``$\Longleftarrow$'': 	
Suppose that $\sgon(H) \leq 2$. Then there exists a refinement $H'$ of $H$ and a suitable morphism $\phi\colon H' \to T$. Write $u_1$ and $u_2$ for the neighbors of $v_1$ and $v_2$ in $G$. We know that $\phi(u_1) = \phi(u_2)$. Then add a leaves $v_1$ and $v_2$ to $u_1$ and $u_2$ and a green edge $v_1v_2$ in $H'$ to obtain $G'$. Now we see that $G'$ is a refinement of $G$. Give the edges $u_1v_1$ and $u_2v_2$ index $r_{\phi'}(u_1v_1) = r_{\phi'}(u_2v_2) = 1$, and give all other edges $e$ index $r_{\phi'}(e) = r_{\phi}(e)$. Add a leaf $v'$ to $\phi(u_1)$ in $T$ to obtain $T'$. Then we can extend $\phi$ to $\phi'\colon G' \to T'$,
\begin{align*}
	\phi'(x) = \begin{cases}
	\phi(x) & \text{if } x \in H'\\
	v'& \text{if } x = v_1, v_2. 
	\end{cases}
\end{align*} 
It is clear that $\phi'$ is a suitable morphism, so we conclude that $\sgon(G) \leq 2$. 
\end{proof}

\begin{lemma}
Rule \ref{rul:degree2GreenLoop} is safe. 
\end{lemma}

\begin{proof}
This proof is analogous to the proof of the second and third case in the proof of Lemma \ref{lem:twoleaves}, so we omit it. 
\end{proof}

\begin{lemma}
Rule \ref{rul:loops} is safe. 
\end{lemma}

\begin{proof}
Let $v$ be the vertex in $G$ to which the rule is applied. 

Suppose that $\sgon(G) \leq 2$. Then there exists a refinement $G'$ of $G$ and a suitable morphism $\phi\colon G' \to T$. Let $u$ be a vertex that is added to the loop $vv$. We distinguish two cases. 
		
Suppose that $m_\phi(v) = 2$. Define $H'$ as the graph $G' \backslash G'_{v}(u)$ with a green loop at vertex $v$, then $H'$ is a refinement of $H$. Let $T' = T\backslash \phi(G'_v(u))$ we see that the restricted morphism $\phi\colon H'\to T'$ is a suitable morphism, so $\sgon(H) \leq 2$. 
		
Suppose that $m_\phi(u) = 1$. Let $v_0 = v, v_1, \ldots, v_k=v$ be the vertices that are added to the loop $vv$ of $G$. Let $i$ be the integer such that $\phi(v_i) = \phi(v)$. Let $w$ be a neighbor of $v$ not equal to $v_1$ or $v_{k-1}$. Then there is a neighbor $x$ of $v_i$, not equal to $v_{i-1}$ and $v_{i+1}$, such that $\phi(w) = \phi(x)$. Notice that $x$ is an external added vertex, thus $m_\phi(w) = m_\phi(x) = 1$. Inductively we see that for every vertex $w'$ in $G'_{v}(w)$ it holds that $m_\phi(w') = 1$. We conclude that $G'_{v}(w)$ is a tree.  

Define $H'$ as $G'_{v}(w)$, with a green loop at vertex $v$. Notice that $H'$ is a refinement of $H$. Now we can restrict $\phi$ to $H'$ and give every edge index $r_{\phi'}(e) = 2$ to obtain a suitable morphism: $\phi'\colon H' \to T'$, where $T' = \phi(G'_{v}(w))$. We conclude that $\sgon(H) \leq 2$. 
		
Suppose that $\sgon(H) \leq 2$. Then there exists a refinement $H'$ of $H$ and a suitable morphism $\phi\colon H' \to T$. We know that $m_\phi(v) = 2$. Then add a vertex $u$ to $H'$ with two black edges to $v$ to obtain $G'$. Now we see that $G'$ is a refinement of $G$. Give both edges $uv$ index $r_{\phi'}(uv) = 1$, and give all other edges $e$ index $r_{\phi'}(e) = r_{\phi}(e)$. Add a leaf $v'$ to $\phi(u)$ in $T$ to obtain $T'$. Then we can extend $\phi$ to $\phi'\colon G' \to T'$,
\begin{align*}
	\phi'(x) = \begin{cases}
	\phi(x) & \text{if } x \in H'\\
	v'& \text{if } x = u. 
	\end{cases}
\end{align*} 
It is clear that $\phi'$ is a suitable morphism, so we conclude that $\sgon(G) \leq 2$. 
\end{proof}

\begin{lemma} \label{thm:parallelToGreenEdge}
Rule \ref{rul:parallelToGreenEdge} is safe. 
\end{lemma}

\begin{proof}
Let $uv$ be the edge in $G$ to which the rule is applied. 

Suppose that $\sgon(G) \leq 2$. Let $G'$ be a refinement of $G$ and $\phi\colon G'\to T$ a suitable morphism. Let $G_{uv}$ be all internal and external added vertices to the edge $uv$. Now define $H' = G'\backslash G_{uv}$ and $T'= T\backslash(\phi(G_{uv}))$. Write $\phi'$ for the restriction of $\phi$ to $H'$. Notice that $\phi'$ is a suitable morphism and that $H'$ is a refinement of $H$. Thus $\sgon(H) \leq 2$. 
		
Suppose that $\sgon(H) \leq 2$. Let $H'$ be a refinement of $H$ and $\phi\colon H'\to T$ a suitable morphism. Add an edge $uv$ and a vertex $w$ on this edge to $H'$, to obtain a refinement $G'$ of $G$. Add a vertex $w'$ to $T$ with an edge to $\phi(u)$, to obtain tree $T'$. Give the edges $uw$ and $vw$ index $r_{\phi'}(uw) = r_{\phi'}(vw) = 1$, and give all other edges $e$ index $r_{\phi'}(e) = r_{\phi}(e)$. Look at $\phi'\colon G'\to T'$, defined as
\begin{align*}
	\phi'(x) = \begin{cases}
	\phi(x) & \text{if } x \in H'\\
	w' & \text{if } x = w.
	\end{cases}
\end{align*} Notice that $\phi(uw) = \phi(vw)$, since there is a green edge $uv$. We conclude that $\phi'$ is a suitable morphism, thus $\sgon(G) \leq 2$. 
\end{proof}

\begin{lemma} 
Rule \ref{rul:twoEdgesAndPath} is safe. 
\end{lemma}

\begin{proof}
Let $u$ and $v$ be the vertices in $G$ to which the rule is applied. 

Suppose that $\sgon(G) \leq 2$. Let $G'$ be a refinement of $G$ and $\phi\colon G'\to T$ a suitable morphism. If $\phi(u)\neq\phi(v)$, then there are at least three paths that are mapped to the path from $\phi(u)$ to $\phi(v)$ in $T$. This yields a contradiction. Thus $\phi(u) = \phi(v)$. Now we see, analogous to the proof of Lemma \ref{thm:parallelToGreenEdge}, that $\sgon(H) \leq 2$. 
		
Suppose that $\sgon(H) \leq 2$. Then we find analogous to the proof of Lemma \ref{thm:parallelToGreenEdge}, that $\sgon(G) \leq 2$. 
\end{proof}

\begin{lemma} \label{lem:End1}  \label{lem:End2}
 \label{lem:End3} \label{thm:lastRule}
Rules \ref{rul:sgonEnd1}, \ref{rul:sgonEnd2} and \ref{rul:sgonEnd3} are safe. 
\end{lemma}

\begin{proof}
All these graphs have stable gonality at most 2, so the statement holds true. 
\end{proof}

Now we have proven that all rules are safe, thus we have the following lemma:

\begin{lemma} \label{lem:sgonSafe}
The set of rules $\mathcal{R}^s$ is safe for $\mathcal{G}_2^s$. \qed 
\end{lemma}

\subsection*{Completeness}
	
Now we will prove that the set of rules is complete. 
If $G$ is a disconnected graph, then $\sgon(G) = 2$,  if and only if $G$ is a forest with two components. 
This can easily be checked. So we need to show that every connected graph can be reduced to the empty graph.  We will do this by showing that a rule can be applied to every connected graph with stable gonality 2 (Lemma \ref{lem:sgonComplete}). 
Since applying rules maintains connectivity (Lemma \ref{lem:sgon-connected}), this suffices. 

\begin{lemma} \label{lem:sgon-connected}
Let $G$ and $H$ be graphs. If $G$ is connected in the sense of black and green edges, and $H$ can be produced from $G$ by applying some rules, then $H$ is connected in the sense of black and green edges.
\end{lemma}
\begin{proof}
Notice that contracting an edge maintains connectivity, so for Rules \ref{rul:leaf}, \ref{rul:leafGreenLoop}, \ref{rul:degree2} and \ref{rul:twoleaves} this lemma holds. The lemma is also clearly true for Rules \ref{rul:sgonEnd1}, \ref{rul:sgonEnd2} and \ref{rul:sgonEnd3}. All the other rules introduce a green edge that makes sure that the graph $H$ is connected. 
\end{proof}	

First, we show two lemmas about constraints, that we need to show that we can apply a rule to every connected graph in $\mathcal{G}_2^s$. 

\begin{lemma} \label{prop:multipleGreenEdges}
Let $G$ be a graph with constraints. If there is a vertex $v$ with $|\mathscr{C}_v| > 1$, then $\sgon(G) \geq 3$. 
\end{lemma}

\begin{proof}
Let $G$ be a graph with $\sgon(G) = 2$. Suppose that $|\mathscr{C}_v| > 1$. Let $(u,v)$ and $(v, w)$ be two constraints that contain $v$. We know that $u \neq w$. Suppose that $\phi$ is a suitable morphism of degree 2. We distinguish two cases. 
Suppose that $u = v$. Then we know that $m_\phi(v) = 2$. On the other hand we have that $m_\phi(v) = m_\phi(w) = 1$. This yields a contradiction. 
Now suppose that $u\neq v$ and $w\neq v$. Notice that $\phi(u) = \phi(v) = \phi(w)$, thus there are at least three vertices mapped to $\phi(v)$. We conclude that $\deg(\phi) \geq 3$. This yields a contradiction. 
We conclude that $|\mathscr{C}_v| \leq 1$. 
\end{proof}

\begin{lemma}\label{prop:greenEdgeNotSameDegree} 
Let $G$ be a graph where every leaf is incident to a constraint, so if $\deg(u) = 1$ then $\mathscr{C}_u \neq \emptyset$ for all $u$. Suppose that $(u,v) \in \mathscr{C}$. If $\deg(u) \neq \deg(v)$, then $\sgon(G) \geq 3$. 
\end{lemma} 

\begin{proof}
Suppose that $\deg(u) \neq \deg(v)$. Assume without loss of generality that $\deg(u) > \deg(v)$. Suppose that $\sgon(G) = 2$. Let $G'$ be a refinement of $G$ with a minimal number of vertices such that there exists a suitable morphism of degree 2. Let $\phi\colon G' \to T$ be such a morphism. 
		
We know that $\phi(u) = \phi(v)$, thus, by Lemma \ref{lem:morphism-degree}, $\deg_{G'}(u) = \deg_{G'}(v)$. So there is a neighbor $x$ of $v$ which is an external added vertex. 
Now we look at $\phi(x)$. Notice that there is a neighbor $y$ of $u$ such that $\phi(x) = \phi(y)$. It is clear that $y\neq x$, since $x$ is an external added vertex. Thus $m_\phi(x) = m_\phi(y) = 1$. 
		
Let $x'$ be a neighbor of $x$, not equal to $v$. Suppose that $m_\phi(x') = 2$. We know that the edge $e= (x, x')$ has index $1$, so there there exists another neighbor of $x'$ that is mapped to $\phi(x)$. We know that $y$ is the unique vertex other than $x$ that is mapped to $\phi(x)$, it follows that $y$ is a neighbor of $x'$. This yields a contradiction, since $x'$ is an external added vertex. We conclude that $m_\phi(x') = 1$. Inductively we find that $m_\phi(x'') = 1$ for all vertices $x'' \in G_v(x)$. 
		
Because $G_v(u)$ is an external added tree, there is a leaf $x'\neq v$ in $G_v(x)$; then $m_\phi(x') = 1$. Let $y'$ be the vertex such that $\phi(x') = \phi(y')$. Now it follows, by Lemma \ref{lem:morphism-degree}, that $\deg(x') = \deg(y')$, thus $y'$ is a leaf. Since $x'$ is an added vertex, it also follows that $\mathscr{C}_y = 0$. 
Since every leaf $G$ was incident to a constraint, we conclude that $y'$ is added to $G$. It follows that $G'\backslash \{y', x'\}$ is a refinement of $G$ and that $\phi'\colon G'\backslash\{y',x'\} \to T\backslash\{\phi(y')\}$ is a suitable morphism of degree 2. This yields a contradiction with the minimality of $G'$. 
		
We conclude that $\sgon(G) \geq 3$. 
\end{proof}

Let $G$ be a graph in $\mathcal{G}_2^s$, that is connected in the sense of black and green edges. We will show that we can apply a rule to $G$. 
	
We define the graphs $H_1$, $H_2$ and $H_3$ as a single vertex, a vertex with a green loop and two vertices connected by a green edge respectively. These are exactly the graph that can be reduced to the empty graph by Rules \ref{rul:sgonEnd1}, \ref{rul:sgonEnd2} and \ref{rul:sgonEnd3}.

\begin{lemma} \label{lem:sgonComplete}
Given a non-empty connected graph $G\in \mathcal{G}_2^s$ there is a rule in $\mathcal{R}^s$ that can be applied to $G$.
\end{lemma}

\begin{proof} 
Let $G$ be a non-empty connected graph with $\sgon(G) \leq 2$. Suppose that no rule can be applied to $G$. 

We first say something about the structure of $G$. If there is a double edge between two vertices $u$ and $v$, then removing these two edges yields a disconnected graph, otherwise we could apply Rule \ref{rul:twoEdgesAndPath}. Let $u_1v_1, \ldots, u_kv_k$ be all double edges in $G$. Let $G_{i,1}, G_{i,2}$ be connected components after removing the edges $u_iv_i$. If there is a degree 2 vertex with a green loop, then removing this vertex yields a disconnected graph, otherwise we could apply rule \ref{rul:degree2GreenLoop}. Let $v_1, \ldots, v_l$ be all degree two vertices with a green loop. Let $G'_{i,1}, G'_{i,2}$ be the connected components after removing $v_i$. Let $H$ be the element of \[\{G_{i,j} \mid 1\leq i\leq k, j\in \{1,2\} \} \cup \{G'_{i,j} \mid 1\leq i\leq l, j\in\{1,2\}\}\] with the minimum number of vertices. Notice that there is at most one vertex $v$ in $H$ with $\deg_{H}(v) \neq \deg_{G}(v)$, namely the vertex that was adjacent to the removed edge or edges. Now we can say the following about $H$.  
\begin{itemize}
	\item If $H$ contains only one vertex, then we could have applied Rule \ref{rul:leaf}, \ref{rul:leafGreenLoop}, \ref{rul:degree2}, \ref{rul:degree2GreenLoop}, \ref{rul:sgonEnd1} or \ref{rul:sgonEnd2}. Thus $H$ contains at least 2 vertices. 
	\item If there is a vertex that is incident to more than one green edge, then $\sgon(G) \geq 3$ by Lemma \ref{prop:multipleGreenEdges}. So we can assume that no vertex is incident to more than one green edge. 
	\item If $H$ contains a vertex $u \neq v$ of degree 0, then $\deg_G(u) = \deg_{H}(u) = 0$. We see that $\mathscr{C}_u = \{(u,w)\}$ with $u\neq w$, because $H$ is connected in black and green edges and contains at least two vertices. By Lemma \ref{prop:greenEdgeNotSameDegree} it follows that $\deg_G(u) = \deg_G(w) = 0$. Since $G$ is connected it follows that $G = H_3$, so we can apply Rule \ref{rul:sgonEnd3}. This yields a contradiction. So we can assume that $G$ does not contain vertices with degree 0. 
	\item If $G$ contains a leaf $u \neq v$, then $\deg_{H}(u) = \deg_G(u) = 1$. We see that $u$ is incident to a green edge $uw$, with $\deg_G(w) \neq 1$, otherwise we could apply Rule \ref{rul:leaf}, \ref{rul:leafGreenLoop} or \ref{rul:twoleaves}. By Lemma \ref{prop:greenEdgeNotSameDegree}, it follows that $\sgon(G) \geq 3$. So we can assume that $G$ does not contain leaves. 
	\item If $G$ contains a vertex $u \neq v$ of degree 2, then we see that $\mathscr{C}_u = \{(u,w)\}$ with $u\neq w$, by Rules \ref{rul:degree2} and \ref{rul:degree2GreenLoop} and by the choice of $H$.
    \item We see that $H$ does not contain black loops because of Rule \ref{rul:loops}. 
	\item By Rules \ref{rul:parallelToGreenEdge} and \ref{rul:twoEdgesAndPath} and by the choice of $H$ it follows that $H$ has no multiple edges. 
\end{itemize}

Write $H'$ for the graph obtained from $H$ by removing all green loops and coloring all green edges of $H$ black. Altogether we see that $H'$ is a simple graph with at least two vertices and every vertex, except at most one, has degree at least 3. It follows that $H'$ has treewidth at least 3. 

If we change the color of all green edges to black, we see that all rules are deletions of vertices or edges, contractions of edges and/or additions of loops. Since the set of graphs with treewidth at most $k$ is closed under these operations, we see that $\tw(G) \geq \tw(H') \geq 3$. But then it follows that $\sgon(G) \geq \tw(G) \geq 3$. This yields a contradiction.  

We conclude that there is a rule in $\mathcal{R}^s$ that can be applied to $G$. 
\end{proof}

Now we have everything required to prove our main theorem:

\begin{proof}[Proof of Theorem \ref{thm:sgonSafeComplete}]
By Lemma \ref{lem:sgonSafe} we have that $\mathcal{R}^s$ is safe. By Lemma \ref{lem:sgonComplete} we can keep applying rules from $\mathcal{R}^s$ to any graph $G\in \mathcal{G}_2^s$ as long as $G$ has not been turned into the empty graph yet. It remains to prove that we can apply rules from $\mathcal{R}^s$ only a finite number of times. 

Consider the following potential function $f$:
let $f(G) = n+2m+g$ for a graph $G$ with $n$ vertices, $m$ (black) edges, and $g$ green edges. Each rule decreases $f(G)$ by at least one. Thus rules from $\mathcal{R}^s$ can only be applied a finite number of times. When no more rules can be applied, it follows that the graph has been reduced to the empty graph. Therefore $\mathcal{R}^s$ is complete. 
\end{proof}

So, we can use this set of rules to recognize graphs with stable gonality at most 2.

	\section{Reduction Rules for Stable Divisorial Gonality}
    \label{section:rulessdg}

	In this section we show a set of reduction rules to decide whether stable divisorial gonality is at most two. This set is similar to the set of rules for stable gonality, and it uses the concept of constraints of divisorial gonality.

	\subsection*{Contraints for Stable Divisorial Gonality}
	
	We use the notion of constraints as in the section about divisorial gonality. We will refer to them as red edges. We call an effective divisor of degree 2 that satisfies all conditions given by the constraints a \emph{suitable} divisor. Again, let $\mathcal{G}_2^{sd}$ be the set of all graphs with constraints with stable divisorial gonality at most 2.

	\subsection*{Reduction rules}
	\label{subsec:sdgonRules}
	
	We will now state all rules. 
      When a rule adds a red edge $uv$, and there already exists such a red edge, then the set of constraints does not change. 
	The reduction rules for stable divisorial hyperelliptic graphs are almost the same as the rules for stable hyperelliptic graphs. Instead of green edges we use red edges, and we replace Rule \ref{rul:degree2} and \ref{rul:loops} by new Rules \ref{rul:sdgonDegree2a}, \ref{rul:sdgonDegree2b} and \ref{rul:sdgonLoops}, see Figure \ref{fig:sdgonRules} for the new rules. 
	
	\begin{figure}[t]
		\centering
		\begin{tabular}{rlr}
\multicolumn{1}{l}{\small{Rule \ref{rul:sdgonDegree2a}}} & & \multicolumn{1}{l}{\small{Rule \ref{rul:sdgonLoops}}} \\
\begin{tikzpicture}
	\cirkelsEnPijl
	\lus{rood}{4.25}{0}
    \lus{white}{1.25}{0}
	\node[vertexx] (w) at (1.25,0) {};
	\node[vertexx] (v) at (0,0) {};
	\draw[edge] (v) to [relative, out=40, in=140] (w);
	\draw[edge] (w) to [relative, out=40, in=140] (v);
	\node[vertexx] (u') at (4.25,0) {};
\end{tikzpicture}  & & \begin{tikzpicture}
	\cirkelsEnPijl
	\lus{edge}{1.25}{0}
	\node[vertexx] (u) at (1.25,0) {};
	\node[vertexx] (u') at (4.25,0) {};
\end{tikzpicture} 
		\end{tabular}
		\caption{The reduction rules different from the rules for $\sgon$}  
		\label{fig:sdgonRules}
	\end{figure}

	\begin{customrule}{$\bm{T_1^{sd}}$}[=\ref{rul:leaf}] \label{rul:sdgonLeaf}
		Let $v$ be a leaf with $\mathscr{C}_v = \emptyset$. Let $u$ be the neighbor of $v$. Contract the edge $uv$. 
	\end{customrule}
	
	\begin{customrule}{$\bm{T_2^{sd}}$}[=\ref{rul:leafGreenLoop}] \label{rul:sdgonLeafGreenLoop}
		Let $v$ be a leaf with $\mathscr{C}_v = \{(v,v)\}$. Let $u$ be the neighbor of $v$. Contract the edge $uv$. 
	\end{customrule}
	
	\begin{customrule}{$\bm{S_{1a}^{sd}}$} \label{rul:sdgonDegree2a}
    Let $v$ be a vertex of degree 2 with $\mathscr{C}_v = \emptyset$. 
	Let $u$ be the only one neighbor of $v$. Remove $v$ and add a red loop to $u$. 
	\end{customrule}
    
    \begin{customrule}{$\bm{S_{1b}^{sd}}$}\label{rul:sdgonDegree2b}
    Let $v$ be a vertex of degree 2 with $\mathscr{C}_v = \emptyset$. Let $u_1$ and $u_2$ be the two neighbors of $v$, with $u_1 \neq u_2$. Contract the edge $u_1v$. 
	\end{customrule}
	
	\begin{customrule}{$\bm{T_3^{sd}}$}[=\ref{rul:twoleaves}] \label{rul:sdgonTwoleaves}
    Let $G$ be a graph where every leaf and every degree 2 vertex is incident to a red edge. Let $v_1$ and $v_2$ be two leaves that are connected by a red edge. Let $u_1$ and $u_2$ be their neighbors.  Contract the edges $u_1v_1$ and $u_2v_2$. 
	\end{customrule}
	
	\begin{customrule}{$\bm{S_2^{sd}}$}[=\ref{rul:degree2GreenLoop}] \label{rul:sdgonDegree2GreenLoop}
    Let $G$ be a graph where every leaf and every degree 2 vertex is incident to a red edge. Let $v$ be a vertex of degree 2 with a red loop, such that there exists a path from $v$ to $v$ in the black and red graph $G$. Let $u_1$ and $u_2$ be the neighbors of $v$. Remove $v$ and connect $u_1$ and $u_2$ with a red edge. 
	\end{customrule}

	\begin{customrule}{$\bm{L^{sd}}$} \label{rul:sdgonLoops}
		Let $v$ be a vertex with a loop. Remove all loops from $v$.  
	\end{customrule}
	
	\begin{customrule}{$\bm{P_1^{sd}}$}[=\ref{rul:parallelToGreenEdge}] \label{rul:sdgonParallelToGreenEdge}
		Let $uv$ be an edge. Suppose that there also exists a red edge from $u$ to $v$. Remove the black edge $uv$.   
	\end{customrule}
	
	\begin{customrule}{$\bm{P_2^{sd}}$}[=\ref{rul:twoEdgesAndPath}] \label{rul:sdgonTwoEdgesAndPath}
		Let $u,v$ be vertices, such that $|E(u,v)|>1$. Let $e$ and $f$ be two of those edges. If there exists another path, possibly containing red edges, from $u$ to $v$, then remove $e$ and $f$ and add a red edge from $u$ to $v$. 
	\end{customrule}

\begin{customrule}{$\bm{E_1^{sd}}$}[=\ref{rul:sgonEnd1}] \label{rul:sdgonEnd1}
Let $G$ be the graph consisting of a single vertex $v$ with $\mathscr{C}_v = \emptyset$. Remove $v$. 
\end{customrule}

\begin{customrule}{$\bm{E_2^{sd}}$}[=\ref{rul:sgonEnd2}] \label{rul:sdgonEnd2}
Let $G$ be the graph consisting of a single vertex $v$ with a red loop. Remove $v$. 
\end{customrule}

\begin{customrule}{$\bm{E_3^{sd}}$}[=\ref{rul:sgonEnd3}] \label{rul:sdgonEnd3}
Let $G$ be the graph consisting of a two vertices $u$ and $v$ that are connected by a red edge. Remove $u$ and $v$. 
\end{customrule}

We will write $\mathcal{R}^{sd}$ for the set of these reduction rules. We can now state our main theorem. 

\begin{theorem} \label{thm:sdgonSafeComplete}
The set of rules $\mathcal{R}^{sd}$ is safe and complete for $\mathcal{G}_2^{sd}$. 
\end{theorem}

	\subsection*{Safeness}
	
	We will show that the set $\mathcal{R}^{sd}$ is safe for $\mathcal{G}_2^{sd}$.

	\begin{lemma} \label{lem:sdgonFirstRule}
		Rule \ref{rul:sdgonLeaf} is safe. 
	\end{lemma}
	
	\begin{proof}
    Let $v$ be the vertex in $G$ to which the rule is applied.
    
		Suppose that $\sdgon(G) \leq 2$. Since every refinement of $G$ is a refinement of $H$, it is clear that $\sdgon(H) \leq 2$. 
		
		Suppose that $\sdgon(H) \leq 2$. Then there exists a refinement $H'$ of $H$ and a suitable divisor $D$. Write $u$ for the neighbor of $v$ in $G$. There exists a divisor $D'\sim_\mathscr{C} D$ such that $D'(u) \geq 1$. Now define $G'$ as $H'$ with a leaf $v$ added to $u$. Look at the divisor $D'$ on $G'$. For every vertex $w \in H'$ we can reach a divisor with one chip on $w$. By adding $v$ to every firing set that contains $u$, we see that we can still reach $w$ in $G'$. And we can reach a divisor with a chip on $v$ by firing $H'$ in $G'$. Thus $D'$ is a suitable divisor on $G'$. We conclude that $\sdgon(G) \leq 2$. 
	\end{proof}

	\begin{lemma}
		Rule \ref{rul:sdgonLeafGreenLoop} is safe.
	\end{lemma}
	
	\begin{proof}
		This proof is analogous to the proof of Lemma \ref{lem:sdgonFirstRule}.   
	\end{proof}

	\begin{lemma}
		\ref{rul:sdgonDegree2a} is safe. 
	\end{lemma}
	
	\begin{proof}
        Let $v$ be the vertex in $G$ to which the rule is applied.
    
		Let $u$ be the neighbor of $v$. Suppose that $\sdgon(G) \leq 2$. Let $G'$ be a refinement of $G$ such that there exists a suitable divisor $D$. Let $C$ be the cycle through $v$ and $u$. Notice that $D$ is equivalent to a divisor $D'$ with two chips on $C$. If $G\backslash\{C\}\cup\{u\}$ is a tree, then we are done. Otherwise we see that $D'$ is equivalent with $D''$, where $D''(u) = 2$. Let $H'$ be $G'\backslash C \cup \{u\}$ with a red loop at $u$. We see that $H'$ is a refinement of $H$ and $D''$ is a suitable divisor for $H'$, thus $\sdgon(H) \leq 2$.   
		
		Suppose that $\sdgon(H) \leq 2$. Then there exists a refinement $H'$ of $H$ such that $D$, with $D(u) = 2$, is a suitable divisor. Let $G'$ be $H'$ without the red loop on $u$ and with a vertex $v$ with two edges to $u$. Then $G'$ is a refinement of $G$. It is clear that $D$ is a suitable divisor for $G'$ too. We conclude that $\sdgon(G) \leq 2$. 
	\end{proof}

    \begin{lemma}
		\ref{rul:sdgonDegree2b} is safe. 
	\end{lemma}

	\begin{proof}
        Let $v$ be the vertex in $G$ to which the rule is applied.

		Let $u_1, u_2$ be the neighbors of $v$. We know that $u_1\neq u_2$. Suppose that $\sdgon(G) \leq 2$. Then there is a refinement $G'$ of $G$ such that there exists a suitable divisor on $G'$. Notice that $G'$ is a refinement of $H$ too, so $\sdgon(H) \leq 2$. 
		
		Now suppose that $\sdgon(H) \leq 2$. Let $H'$ be a refinement of $H$ and $D$ a suitable divisor on $H'$. If the edge $u_1u_2$ is subdivided in $H'$, then $H'$ is a refinement of $G$ too, and we are done. Assume that the edge $u_1u_2$ is not subdivided in $H'$. Let $D'$ be the divisor with $D'(u_1) = D'(u_2) = 1$. If $D \sim_\mathscr{C} D'$, then we can subdivide $u_1u_2$ to obtain a refinement $G'$ of $G$. By starting with the divisor $D'$, we can reach all vertices of $H'$ as before and we can reach $v$ by firing all vertices of $H'$. 
		
		Suppose that $D \nsim_\mathscr{C} D'$. There exist divisors $D_{u_1}$ and $D_{u_2}$ such that $D_{u_1} \sim_\mathscr{C} D \sim_\mathscr{C} D_{u_2}$ and $D_{u_1}(u_1) = 1$ and $D_{u_2}(u_2) = 1$. It follows that $D_{u_1}(u_2) = 0$ and $D_{u_2}(u_1) = 0$. Let $A_1, \ldots A_k$ be the level set decomposition of the transformation from $D_{u_1}$ to $D_{u_2}$. Let $D_j$ be the divisor before firing $A_j$. Let $D_i$ be the first divisor such that $D_i(u_2) > 0$, then it is clear that $u_2 \notin A_j$ for $j < i$. Since $D_i \nsim_\mathscr{C} D'$ we see that $D_i(u_1) = 0$. It follows that $u_1\in A_{i-1}$. We conclude that there is a chip fired along the edge $u_1u_2$. For every vertex $w$ in $H'$ we can find a divisor $D_w$ with at least one chip on $w$, let $B_{w,1}, \ldots B_{w,l_w}$ be all sets that occur in the level set decomposition of the transformation from $D_{u_1}$ to $D_w$. For all $w$, $i$, let $E_{w,i}$ be the set of all edges along which a chip is fired by the set $B_{w,i}$. Subdivide all edges that occur in some $E_{w,i}$ to obtain a refinement $G'$ of $G$. Let $V_{w,i}$ be set of vertices that are added on the edges in $E_{w,i}$. Define $B'_{w,i}$ as $B_{w,i}$ together with all added vertices that are on an edge with both endpoints in $B_{w,i}$. We can replace every set $B_{w,i}$ by two sets $B'_{w,i}, B'_{w,i}\cup V_{w,i}$ in the level set decomposition $B_{w,1}, \ldots B_{w,l_w}$ to see that for every vertex $w$ we can still reach a divisor with at least one chip on $w$. Notice that we still satisfy all constraints.  And for every vertex in $V_{w,i}$ we will encounter a divisor with a chip on that vertex when we transform $D_{u_1}$ in $D_w$. We conclude that $\sdgon(G) \leq 2$. 
	\end{proof}   
  
\begin{lemma}\label{thm:sdgonTwoleaves}
\ref{rul:sdgonTwoleaves} is safe. 
\end{lemma}

\begin{proof}
    Let $v_1$ and $v_2$ be the vertices in $G$ to which the rule is applied.

Suppose that $\sdgon(G) \leq 2$. Let $G'$ be a minimum refinement of $G$ and $D$ a suitable divisor on $G'$. Let $u_1$ and $u_2$ be the neighbors of $v_1$ and $v_2$ in $G$. We distinguish three cases. 
	
Case 1: Suppose that $u_1 \neq u_2$, and that there does not exist a path from $v_1$ to $v_2$ in black and red edges, except the red edge $v_1v_2$. Then we can reach all vertices in $G_{v_1}(u_1)$ with only one chip, thus $G_{v_1}(u_1)$ is a black tree. Thus $G_{v_1}(u_1)$ contains a leaf that is not incident to a red edge. This yields a contradiction with the minimality of $G'$. 
	
Case 2: Suppose that $u_1 \neq u_2$ and that there exists a path $P$, possibly containing red edges, from $v_1$ to $v_2$. Assume that $D\nsim D'$ where $D'$ is the divisor such that $D'(u_1) = D'(u_2) = 1$. Let $a_0 = v_1, a_1, \ldots, a_k=u_1$ be the added vertices on the edge $v_1u_1$ and let $b_0 = v_2, b_1, \ldots, b_l=u_2$ be the added vertices on the edge $v_2u_2$. Assume without loss of generality that $k < l$. It is clear that all vertices $a_0, \ldots, a_k, b_0, \ldots, b_l$ lie on $P$. Notice that firing the sets $\{a_i, b_i \mid i\leq j\}$ for $j = 0, 1, \ldots, k$ results in the divisor $D_k$ with $D_k(a_k) = D_k(b_k) = 1$. Thus $D_k(u_1) = 1$. Since $b_k$ is an internal added vertex and $G'$ is a minimum refinement, we see that $\deg(b_k) = 2$. 

Suppose that $u_1$ is incident to a red edge $u_1x$. We know that $x\neq b_k$, since $b_k$ is an added vertex. Let $D''$ be the divisor with $D''(u_1) = D''(x) = 1$. Let $A_1, \ldots, A_s$ be the level set decomposition of the transformation from $D_k$ to $D''$. We see that $u_1$ cannot lose its chip. Thus $b_k$ fires its chip to one of its neighbors when we fire $A_1$. But then we see that the cut of $A_1$ is at least two, and we can only fire one chip. This yields a contradiction. We conclude that $u_1$ is not incident to a red edge. 

By the conditions of the rule it follows that $\deg(u_1) \geq 3$. Let $w\notin P$ be a neighbor of $u_1$. Now we see that $G'_{u_1}(w)$ is a black tree. It follows that $G'_{u_1}(w)$ contains a leaf that is not incident to a red edge. Since $G'$ is a minimum refinement, this yields a contradiction. Altogether we conclude that $k=l$. 

Let $P_1, P_2$ be the two arcs of $P$ between $u_1$ and $u_2$. Notice that, if there are two chips on $P$, then they are either on $u_1$ and $u_2$ or on the same arc $P_i$. Suppose that there are divisors $E, E'$ such that $E \sim_\mathscr{C} E'$ and that there is a set $A$ in the level set decomposition of $E'- E$ such that $u_1 \in A$ and $u_2 \notin A$. It follows that there is a chip fired along each of the arcs $P_1$ and $P_2$. This yields a contradiction. We conclude that for every firing set it holds that either $u_1$ and $u_2$ are both fired or they are both not fired. 

Now let $H'$ be $G'$ without the red edge $v_1v_2$ and with a red edge $u_1u_2$. We see that $D$ is a suitable divisor for $H'$ as well. Thus $\sdgon(H) \leq 2$. 

Case 3: Suppose that $u_1 = u_2$. This case is analogous to case 2. 

Suppose that $\sdgon(H) \leq 2$. Then it is clear that $\sdgon(G) \leq 2$. 
\end{proof}

	\begin{lemma}
		Rule \ref{rul:sdgonDegree2GreenLoop} is safe.
	\end{lemma}
	
	\begin{proof}
	This proof is analogous to the proof of cases two and three in the proof of Lemma \ref{thm:sdgonTwoleaves}, so we omit it. 
	\end{proof}
		\begin{lemma}
		Rule \ref{rul:sdgonLoops} is safe. 
	\end{lemma}
	
	\begin{proof}
		There will never be a chip fired over a loop, so loops do nothing for the stable divisorial gonality. Thus $\sdgon(G) \leq 2$ if and only if $\sdgon(H) \leq 2$. 
	\end{proof}
	
	\begin{lemma} \label{thm:sdgonParallelToGreenEdge}
		Rule \ref{rul:sdgonParallelToGreenEdge} is safe. 
	\end{lemma}
	
	\begin{proof}
        Let $uv$ be the edge in $G$ to which the rule is applied.

		Suppose that $\sdgon(G) \leq 2$. Let $G'$ be a refinement of $G$ and $D$ a suitable divisor with $D(u) = D(v) = 1$. Let $G_{uv}$ be all internal and external added vertices to the edge $uv$. Now define $H' = G'\backslash G_{uv}$. Look at the divisor $D$ on $H'$ and notice that $D$ is a suitable divisor. Observe that $H'$ is a refinement of $H$. Thus $\sdgon(H) \leq 2$. 
		
		Suppose that $\sdgon(H) \leq 2$. Let $H'$ be a refinement of $H$ and $D$ a suitable divisor. Add an edge $uv$ to $H'$, to obtain a refinement $G'$ of $G$. We see that $D'$ is a suitable divisor for $G'$, thus $\sdgon(G) \leq 2$. 
	\end{proof}
		
	\begin{lemma} 
		Rule \ref{rul:sdgonTwoEdgesAndPath} is safe. 
	\end{lemma}
	
	\begin{proof}
    This proof is analogous to the proof of Lemma \ref{lem:ruleC3safe}. 
	\end{proof}

    \begin{lemma} \label{lem:sdgonEnd1} \label{lem:sdgonEnd2} \label{lem:sdgonEnd3} \label{lem:sdgonLastRule}
Rules \ref{rul:sdgonEnd1}, \ref{rul:sdgonEnd2} and \ref{rul:sdgonEnd3} are safe. 
\end{lemma}

\begin{proof}
All those graphs have stable gonality at most 2, so the statement holds true. 
\end{proof}

Now we have proven that all rules are safe, so we can conclude the following.
\begin{lemma}\label{lem:sdgonSafe}
The set of rules $\mathcal{R}^{sd}$ is safe for $\mathcal{G}_2^{sd}$. \qed
\end{lemma}

\subsection*{Completeness}
	
Now we will prove that $\mathcal{R}^{sd}$ is complete for $\mathcal{G}_2^{sd}$, i.e.\ if $G\in \mathcal{G}_2^{sd}$, then $G$ can be reduced to the empty graph by applying finitely many rules. 

\begin{lemma} \label{lem:sdgon-connected}
	Let $G$ and $H$ be graphs. If $G$ is connected in the sense of black and red edges, and $H$ can be produced from $G$ by applying some rules, then $H$ is connected in the sense of black and red edges.
\end{lemma}
\begin{proof}
    See the proof of Lemma \ref{lem:sgon-connected}. 
\end{proof}

\begin{lemma} \label{prop:multipleRedEdges}
	Let $G$ be a graph. If there is a vertex $v$ with $|\mathscr{C}_v| > 1$, then $\sdgon(G) \geq 3$. 
\end{lemma}
	
\begin{proof}
	See the proof of Lemma \ref{ConflictingLabelsLemma}. 
\end{proof}

\begin{lemma}
	Let $G$ be a graph where every leaf is incident to a red edge, so if $\deg(u) = 1$ then $|\mathscr{C}_u| > 0$ for all $u$. Suppose that $u$ is a leaf and $(u,v)$ is a red edge. If $\deg(v)\neq 1$, then $\sdgon(G) \geq 3$. 
\end{lemma}

\begin{proof}
	Analogous to the proof of Lemma \ref{DegreeOneLemma}.
\end{proof}

We define the graphs $H_1$, $H_2$ and $H_3$ as a single vertex, a vertex with a red loop and two vertices connected by a red edge respectively, these are the graphs that can be reduced to the empty graph by rules \ref{rul:sdgonEnd1}, \ref{rul:sdgonEnd2} and \ref{rul:sdgonEnd3}.

\begin{lemma} \label{lem:sdgonComplete}
Given a non-empty connected graph $G\in \mathcal{G}_2^{sd}$ there is a rule in $\mathcal{R}^{sd}$ that can be applied to $G$.
\end{lemma}

\begin{proof} 
Let $G$ be a connected graph with $\sdgon(G) \leq 2$, and suppose that no rule can be applied to $G$. 

As in the proof of Theorem \ref{lem:sgonComplete}, if there are two edges between the vertices $u$ and $v$, then removing these edges lead to $G$ being disconnected. And if there is a degree 2 vertex $v$ with a red loop, then removing $v$ yields a disconnected graph. Let $H$ be the smallest connected component that can be created by removing two parallel edges or a degree 2 vertex, as in the proof of Theorem \ref{lem:sgonComplete}. 

Now we color all red edges black and remove all loops to obtain $H'$, as in the proof of Theorem \ref{lem:sgonComplete}. Then we see that $H'$ is a simple graph that contains at least two vertices and all vertices, except at most one, have degree at least three. Thus $H'$ has treewidth at least three. It follows that $\sdgon(G) \geq \tw(G) \geq \tw(H') \geq 3$. 

We conclude that a rule can be applied to $G$. 
\end{proof}

\begin{proof}[Proof of Theorem \ref{thm:sdgonSafeComplete}]
Lemma \ref{lem:sdgonSafe} shows that the set of reduction rules $\mathcal{R}^{sd}$ is safe. We need to show that $\mathcal{R}^{sd}$ is complete as well. Lemma \ref{lem:sdgonComplete} shows that to any graph $G\in \mathcal{G}_2^{sd}$ we can keep applying rules from $\mathcal{R}^s$ to as long as $G$ has not been turned into the empty graph yet. We can use the same potential function as in the proof of \ref{thm:sgonSafeComplete} to prove that we can apply rules from $\mathcal{R}^{sd}$ only a finite number of times. Thus $\mathcal{R}^{sd}$ is complete. 
\end{proof}

Thus, we can use the set $\mathcal{R}^{sd}$ to recognize graphs with stable divisorial gonality at most 2.

\section{Algorithms}
\label{section:algorithms}
In this section, we discuss how the reduction rules of Sections~\ref{divSection}, \ref{section:stablegonalityrules} and
\ref{section:rulessdg} lead to efficient algorithms that recognize graphs with divisorial gonality, stable gonality, or stable divisorial gonality $2$ or lower. 

\begin{theorem}[=Theorem \ref{thm:hoofdstellingIntro}]
There are algorithms that, given a graph $G$ with $n$ vertices and $m$ edges, decide whether $\dgon(G)\leq 2$, $\sgon(G) \leq 2$ or $\sdgon(G)\leq 2$ in $O(n \log n+m)$ time.
\label{thm:time}
\end{theorem}

For each of the algorithms, we first check whether a given graph is connected or not. The only disconnected graphs with $\dgon(G)\leq 2$, $\sgon(G) \leq 2$ or $\sdgon(G)\leq 2$ are forests that consist of two trees. This can be done in linear time. After that we can assume that our graph is connected, and the algorithms will use the reduction rules to check whether $\dgon(G)\leq 2$, $\sgon(G) \leq 2$ or $\sdgon(G)\leq 2$. 
Each of the algorithms is of the following form: first, with an additional rule, we can ensure that on each pair of vertices, there are at most two parallel edges and at most one red or green edge; then, we verify that the treewidth of $G$ is at most $2$; if not, we can directly decide negatively. After that, we repeatedly apply a rule, until none is possible. By Theorems~\ref{mainTheorem}, \ref{thm:sgonSafeComplete}, and 
\ref{thm:sdgonSafeComplete}, the final graph after all rule applications is empty, if and only if  $\dgon(G)\leq 2$, $\sgon(G) \leq 2$ or
$\sdgon(G)\leq 2$, respectively. It is not hard to see that for each rule,  deciding whether the rule can be applied, and if so, applying it, can be done in polynomial time; the fact that we work with graphs of
treewidth $2$ yields an implementation with $O(n \log n)$ steps, after the $O(m)$ work to remove the extra parallel edges.

\begin{proof}[Proof of Theorem \ref{thm:time}] We now present the details of the algorithm, in various steps. \\

\noindent \textsc{Step 1.} First we introduce a new rule for each of the gonalities, cf.\ Figure \ref{fig:extraRule}. 
\begin{figure}
	\centering
	\begin{tabular}{rlr}
		\multicolumn{1}{l}{\small{Rule \ref{rul:sgonMultipleEdges}}} & & \multicolumn{1}{l}{\small{Rule \ref{rul:dgonMultipleEdges}}}  \\
		\begin{tikzpicture}
	\cirkelsEnPijl
	\node[vertexx] (u) at (1.25,.375) {};
	\node[vertexx] (v) at (1.25,-.375) {};
	\node[vertexx] (u') at (4.25,0.375) {};
	\node[vertexx] (v') at (4.25,-.375) {};
	\draw[edge] (u) -- (v);
	\draw[edge] (u) to [relative, out=50, in=130] (v);
	\draw[edge] (v) to [relative, out=50, in=130] (u);
	\draw[groen] (u') -- (v');
    \node (l) at (-0.75,0) {};
\end{tikzpicture}  & & \begin{tikzpicture}
	\cirkelsEnPijl
	\node[vertexx] (u) at (1.25,.375) {};
	\node[vertexx] (v) at (1.25,-.375) {};
	\node[vertexx] (u') at (4.25,0.375) {};
	\node[vertexx] (v') at (4.25,-.375) {};
	\draw[edge] (u) -- (v);
	\draw[edge] (u) to [relative, out=50, in=130] (v);
	\draw[edge] (v) to [relative, out=50, in=130] (u);
	\draw[rood] (u') -- (v');
    \node (l) at (-0.75,0) {};
\end{tikzpicture} 
\end{tabular}
\caption{An extra reduction rule for stable gonality and one for divisorial gonality}  
\label{fig:extraRule}
\end{figure}
For divisorial gonality we introduce the following rule: 
\begin{customrule}{$\bm{M^d}$} \label{rul:dgonMultipleEdges}
Let $u,v$ be vertices, such that $|E(u,v)| \geq 3$. Let $k=\left \lfloor{(|E(u,v)|-1)/2}\right \rfloor$, then remove $2k$ edges between $u$ and $v$ and add a constraint $(u,v)$.
\end{customrule}
Note that this rule merely shortcuts repeated applications of Rule \ref{rul:dc3}.
Now we introduce a new rule for stable gonality.
\begin{customrule}{$\bm{M^s}$} \label{rul:sgonMultipleEdges}
Let $u,v$ be vertices, such that $|E(u,v)| \geq 3$. Remove all edges in $E(u,v)$ and add a green edge from $u$ to $v$.
\end{customrule}
It is clear that this rule is the same as first applying Rule \ref{rul:twoEdgesAndPath} and then applying Rule \ref{rul:parallelToGreenEdge} to all remaining edges $(u,v)$. 
For stable divisorial gonality we introduce a similar rule. 
\begin{customrule}{$\bm{M^{sd}}$} \label{rul:sdgonMultipleEdges}
Let $u,v$ be vertices, such that $|E(u,v)| \geq 3$. Remove all edges in $E(u,v)$ and add a red edge from $u$ to $v$.
\end{customrule}
This is again the same as first applying Rule \ref{rul:sdgonTwoEdgesAndPath} and then applying Rule \ref{rul:sdgonParallelToGreenEdge} to all remaining edges $(u,v)$.

All applications of these rules can be done in $\mathcal{O}(m)$ at the start of the algorithm, after which we know that no pair of vertices can have more than two edges between them. For stable and stable divisorial gonality, by application of Rule \ref{rul:loops} and \ref{rul:sdgonLoops} we can also ensure in $\mathcal{O}(m)$ time that no loops exist  (in the case of divisorial gonality loops can be safely ignored).\\

\noindent \textsc{Step 2.} Recall that treewidth is a lower bound on divisorial gonality, stable gonality and stable divisorial gonality \cite{Gijs}. Therefore it follows that if $\tw(G)>2$, the algorithm can terminate. Checking whether treewidth is at most $2$ can be done in linear time. Hereafter, we assume our graph has treewidth at most $2$. \\

\noindent \textsc{Step 3.} Recall that graphs of treewidth $k$ and $n$ vertices have at most $kn$ edges. It follows that the underlying simple graph has at most $2n$ edges. By our previous steps there are at most $2$ edges between a pair of vertices and no loops, so there are at most $4n$ edges left.

\begin{lemma}
Let $G$ be a graph of treewidth at most $2$, with at most $2$ parallel edges between each pair of vertices. For each of the three collections of rules, the number of
successive applications of rules is bounded by $O(n)$.
\end{lemma}

\begin{proof}
For the reduction rules for divisorial gonality from Section \ref{divSection} note that all rules except Rule \ref{rul:dc3} always remove at least one vertex. It follows they can be applied at most $n$ times. For Rule \ref{rul:dc3} note that the only case where it removes no vertex is when the cycle $C$ consists of a double edge between two vertices. Since in this case we remove $2$ edges and there are at most $4n$ edges left, it follows this rule can also be applied at most $2n$ times. Therefore at most $3n$ rules can be applied before we reach the empty graph.

For the rules in Section~\ref{section:stablegonalityrules}, consider the following potential function $f$:
let $f(G) = n+2m+g$ for a graph $G$ with $n$ vertices, $m$ remaining (black) edges, and $g$ green edges. One easily observes that $f(G)=O(n)$, and each rule decreases $f(G)$ by at least one. 

The same argument holds for the collection of reduction rules of stable divisorial gonality
from Section~\ref{section:rulessdg}.
\end{proof}

Thus, for $\dgon$ and $\sgon$, the given sets of rules already lead to polynomial time algorithms: for each of the rules, one can test in polynomial time for a given graph (with green edges or constraints) if the rule can be applied to the graph, and if so, apply the rule in polynomial time. \\

\noindent \textsc{Step 4.} In the remainder of the proof, we will argue that there is an implementation that leads to recognition algorithms running in $O(n \log n+m)$ time.

For the case of divisorial gonality, by Lemma \ref{ConflictingLabelsLemma}, each vertex can only have one constraint that applies to it. Therefore, checking for compatible constraints can be done in time linear in the number of vertices that will be removed by the rule. To maintain this property it is necessary to check for conflicts whenever a rule adds a new constraint, but this can be done in constant time per rule application.

By keeping track of the degree of each vertex as we apply rules, vertices with degree one or zero can be found in constant time. 
It is also easy to
directly detect when the graph is one induced cycle; e.g., by keeping track of the number of vertices of degree unequal two or with an incident selfloop or colored edge; this takes care of Rule~\ref{rul:dc1}. The remaining problem then is efficiently finding applications of the following rules: \[ (\ast) \ \  \mbox{ \ref{rul:dc2}, \ref{rul:dc3}, \ref{rul:degree2GreenLoop}, \ref{rul:twoEdgesAndPath}, \ref{rul:sdgonDegree2GreenLoop} and \ref{rul:sdgonTwoEdgesAndPath}.} \]

To do this in $O(n\log n)$ time in total, we use a technique, also employed by Bodlaender et al.~\cite[Section 6]{BodlaenderDDFLP16}. 
(See also \cite{Hagerup00} for more results on dynamic algorithms on graphs of bounded treewidth.) Due to the highly technical aspects, the discussion here is not
self-contained, and assumes a knowledge of techniques for monadic second order logic formulas on graphs of bounded treewidth.

We build a data structure that allows the following operations: deletions of vertices, deletions of edges, contractions of edges, adding a (possible colored) self-loop to a vertex changing the color of an edge (e.g., turning an edge into a green edge), and
deciding whether the rules $(\ast)$ 
can be applied, and if so, yielding the pair of
vertices to which the rule can be applied.

First, by \cite[Lemma 2.2]{BodlaenderH98}, we can build in $O(n)$ time a tree decomposition of $G$
of width $8$, such that the tree $T$ in the tree decomposition is binary and has $O(\log n)$ depth.
We augment $G$ with a number of labels for edges and vertices. Vertices are labelled with a value that
can be `selfloop', `deleted', or `usual' (no selfloop, not deleted). Edges are labelled with a value that can be one of the following: `usual' (black edge, without parallel edge); `parallel'; `red'; `green';
`deleted'; `contracted'. When we perform an operation that changes the multiplicity of an edge, deletes it, or changes its color, we do not change the tree decomposition, but instead only change the label of
the edge.

For each of the rules in $(\ast)$, 
there is a sentence $\phi$ in Monadic Second Order Logic (MSOL) with two free vertices variables, such that $\phi(v,w)$ holds if and only if the corresponding rule in $(\ast)$ is applicable to vertices $v$ and $w$. 

We can modify these sentences $\phi$ to sentences $\phi'$ that apply to graphs where edges can be labelled with labels `deleted' or `contracted', i.e., if $G'$ is the graph obtained from $G$ by deleting edges with the label `deleted', then $\phi(v,w)$ holds for $G'$ if and only if $\phi'(v,w)$ holds for $G$.

First, when we perform a quantification over edges, we add a condition that the edge is not a deleted edge. A quantification $\forall F\subseteq E: \psi(F)$ becomes \[\forall F\subseteq E: (\forall e\in F: \neg \mathrm{deleted}(e)) \Rightarrow \psi(F);\]
a quantification $\exists F\subseteq E: \psi(F)$ becomes
\[\exists F\subseteq E: (\forall e\in F: \neg \mathrm{deleted}(e)) \wedge \psi(F).\]
Secondly, we modify the sentence to deal with contracted edges, by making the following three changes. For any quantification over sets of edges, we ensure that there are no edges with the label contracted. For any quantification over sets of vertices, we add the condition that for any edge with the label `contracted', either both endpoints are in the set or both endpoints of the contracted edge are not in the set. Finally, elementary predicates like ``$v=w$" or
$v$ is incident to $e$, become a phrase in MSOL: ``$v=w$" is translated to an MSOL sentence that expresses that there is a path from $v$ to $w$ (possibly empty) with all edges on the path labeled as `contracted', and ``$v$ incident to $e$" is translated to a property that expresses that there is a path with contracted edges from $v$ to an endpoint of $e$.
In the same way, we can further modify the sentences to also deal with vertices with the label `deleted'.

A consequence of Courcelle's theorem \cite{Courcelle90} is that for a sentence $\phi(v,w)$ with two free vertex variables, we have an algorithm that,
given a tree decomposition of a graph $G$ of bounded width, determines if there are vertices $v$ and $w$ for which $\phi(v,w)$ holds, and if so, outputs the vertices $v$ and $w$, and that uses linear time.
Using the techniques from \cite{BodlaenderDDFLP16}, we can modify this algorithm such that we can update the graph by changing labels of vertices and edges, and do these queries, such that each update and query costs time that is linear in the depth of the tree of the tree decomposition, i.e., $O(\log n)$. (For the details, we refer to \cite[Section 6]{BodlaenderDDFLP16}.)\\

We are now ready to wrap up. Each of the rules can be executed by doing $O(1)$ deletions of vertices, edges, contractions of edges, adding or deleting a selfloop, or changing color or multiplicity of an edge. 
As discussed above, each of such operation costs $O(\log n)$ time to the data structure, and finding (if it exists) a pair of vertices to which we can apply 
a rule from $(\ast)$ also costs
$O(\log n)$ time. It follows that the total time is $O(\log n)$ times the number of times we apply a rule; as argued earlier in this section, we have $O(n)$ rule applications, giving a total of $O(n \log n)$ time, excluding the $O(m)$ time to remove edges with multiplicity larger than 2. This gives Theorem~\ref{thm:time}.
\end{proof}

\begin{remark}
The constant factors produced by the heavy machinery of Courcelle's theorem and  tree decompositions of width $8$ are very large; a simpler algorithm  with a larger asymptotic (but still polynomial) algorithm will be faster in practice.
\end{remark}

\begin{corollary}[=Corollary \ref{cor:automorfismeIntro}] \label{cor:automorfisme}
There is an algorithm that, given a two-edge-connected graph $G$, decides in $O(n \log n + m)$ time whether $G$ admits an involution $\sigma$ such that the quotient $G/\langle \sigma \rangle$ is a tree. \qed
\end{corollary}
\begin{proof}
If $G$ is two-edge-connected and $b_1(G) \leq 1$, then $G$ is a cycle. For a cycle such an involution does exist. So we can decide this in linear time. 

If $b_1(G) \geq 2$, this follows from Theorem \ref{thm:time}, since Baker and Norine \cite[Thm. 5.12]{Baker09} have shown that a two-edge-connected graph $G$ with $b_1(G)\geq 2$ is divisorial hyperelliptic precisely if $G$ admits an automorphism as stated in the theorem. 
\end{proof}

\section{Conclusion}
We have provided an explicit set of safe and complete reduction rules for (multi-)graphs of  divisorial, stable, and stable divisorial gonality at most $2$, and we have shown that stable, divisorial, and stable divisorial hyperelliptic graphs can be recognized in polynomial time (actually, $O(n\log n+m)$). 

Finally, we mention some interesting open questions on (divisorial) gonality from the point of view of algorithmic complexity: (a) Can hyperelliptic graphs be recognized in linear time? (b) Does there exist a set of reduction rules to recognize graphs with gonality 3? (c) Is gonality fixed parameter tractable? (d) Which problems become fixed parameter tractable with gonality as parameter? (e) Is there an analogue of Courcelle's theorem for bounded gonality?

\end{document}